\documentclass{siamonline220329}

%%%%%%%%%%%%%%%%%%%%%%%
% PACKAGES
%%%%%%%%%%%%%%%%%%%%%%%%
\usepackage{amsfonts,amssymb}
\usepackage{enumerate}
\usepackage{nameref}
\usepackage[X2,T2A]{fontenc}

\usepackage{comment}
\usepackage[frozencache]{minted}
\usepackage{xpatch}
\usepackage{letltxmacro}

% Adding SIAM linenumbers
%\linenumbers

\usepackage{enumitem}
\setlist[enumerate]{leftmargin=.5in} % Prevent itemized lists from running into the left margin inside theorems and= proofs
\setlist[itemize]{leftmargin=.5in}

\usepackage{bm}
\usepackage{mathdots}
\usepackage{mathtools}
\tolerance=10000
\usepackage{algorithmic}
\Crefname{ALC@unique}{Line}{Lines} % To \cref lines in the algorithm
\usepackage{soul} % for \ul command
\usepackage{appendix}

%%%%%%%%%%%%%%%%%%
% Loading macros 
%%%%%%%%%%%%%%%%%%
%!TEX root = ./main.tex
%%%%%%%%%%%%%%%%%%%%%%%%%%%%%%%%%%%%%%%%%%%%%%%%%%%%%%%%%%%%%%%%%%%%%%
% This document defines new macros for easier typewriting
%
% Last update: Boris Kramer, 10/26/2021
%%%%%%%%%%%%%%%%%%%%%%%%%%%%%%%%%%%%%%%%%%%%%%%%%%%%%%%%%%%%%%%%%%%%%5

%%%%%%%%%%%%%%%%%%%%%%%%%%%%%%%%%%
%% Setting environments
%%%%%%%%%%%%%%%%%%%%%%%%%%%%%%%%%
\newsiamthm{example}{Example}
\newsiamremark{remark}{Remark}
\newsiamremark{assumption}{Assumption}

%%%%%%%%%%%%%%%%%%%%%%%%%%%%%%%%%
% Resets Algorithm inputs and output text
%%%%%%%%%%%%%%%%%%%%%%%%%%%%%%%%%

%%%%%%%%%%%%%%%%%%%%%%%%%%%%%%%%%
% Shortcuts for lists
%%%%%%%%%%%%%%%%%%%%%%%%%%%%%%%%%
\newcommand{\bit}{\begin{itemize}}
\newcommand{\eit}{\end{itemize}}
\newcommand{\ben}{\begin{enumerate}}
\newcommand{\een}{\end{enumerate}}

%%%%%%%%%%%%%%%%%%%%%%%%%%%%%%%%%
% Mathematical symbols
%%%%%%%%%%%%%%%%%%%%%%%%%%%%%%%%%

 % Kronecker

% Number spaces
\newcommand {\real} {\mathbb{R}}

% Partial derivatives

% Expectation symbol

% Variance symbol

% Covariance symbol

% Indicator function

% Argmin and argmax
%
%

% H-infinity and H-2 norms

% Common math operators should be in sf font
 % Used for "d" in integration "dx".)

 % Trace operator

%%%%%%%%%%%%%%%%%%%%%%%%%%%%%%%%%
% SYMBOLS AND LETTERS
%%%%%%%%%%%%%%%%%%%%%%%%%%%%%%%%%
 % Bold zero often needed

%%%% \tilde letters

%

% bold letters
\newcommand{\bA}{\ensuremath{\mathbf{A}}}
\newcommand{\bB}{\ensuremath{\mathbf{B}}}

\newcommand{\bD}{\ensuremath{\mathbf{D}}}

\newcommand{\bH}{\ensuremath{\mathbf{H}}}

\newcommand{\bN}{\ensuremath{\mathbf{N}}}

\newcommand{\bP}{\ensuremath{\mathbf{P}}}

\newcommand{\bS}{\ensuremath{\mathbf{S}}}

\newcommand{\bW}{\ensuremath{\mathbf{W}}}

\newcommand{\bb}{\ensuremath{\mathbf{b}}}
\newcommand{\bc}{\ensuremath{\mathbf{c}}}

\newcommand{\bp}{\ensuremath{\mathbf{p}}}
\newcommand{\bq}{\ensuremath{\mathbf{q}}}

\newcommand{\bu}{\ensuremath{\mathbf{u}}}
\newcommand{\bv}{\ensuremath{\mathbf{v}}}
\newcommand{\bw}{\ensuremath{\mathbf{w}}}
\newcommand{\bx}{\ensuremath{\mathbf{x}}}

%%%% bold greek letters (requires amsmath}

\newcommand {\btheta} {\mbox{\boldmath $\theta$}}

%%%% caligraphic letters

\newcommand{\cF}{\ensuremath{\mathcal{F}}}

\newcommand{\cM}{\ensuremath{\mathcal{M}}}

%%%%% caligraphic letters with tildes%

%%%% mathfrak letters

%%%% Hats on letters

% Can also define hat greek letters here if needed

% Hat and bold combined

%%%% underlined is occasionally useful

%

%\newcommand{\ul}{\ensuremath{\underline{l}}} % \conflict with general \ul command
%		\newcommand{\ul}{\ensuremath{\underline{l}}}

%%%%%%%%%%%%%%%%%%%%%%%%%%%%%%%%%%%%%%%%
%% vector notation
%%%%%%%%%%%%%%%%%%%%%%%%%%%%%%%%

% vector for greek letters

%
\newcommand{\Qbee}{\textsf{QBee} }

% TO_DO notes definition
\usepackage[textsize=tiny]{todonotes}

% NOTATION FOR dimensions
\newcommand{\nx}{{n}} % Spatial discretization dimension
\newcommand{\nd}{{n_d}} % Number of dependent variables
\newcommand{\nn}{{N}}   % Total system dimension

% The next line is needed by SIADS
% <- Preamble

% Revision commands
%\newcommand{\Ra}[1]{\color{magenta} {#1}}

%%%%%%%%%%%%%%%%%%%%%%%%%%%%%%%%%%%%%%%%%%%% TITLE AND INFORMATION
\title{Exact and optimal quadratization of nonlinear finite-dimensional \\ non-autonomous dynamical systems
\thanks{
%Submitted to the editors by DATE
\funding{B.K. is supported in part by NSF-CMMMI award 2144023. B.K. would like to thank the Isaac Newton Institute for Mathematical Sciences, Cambridge, for support and hospitality during the programme ``The mathematical and statistical foundation of future data-driven engineering" where part of the work on this paper was undertaken (supported by EPSRC grant EP/R014604/1).
Part of work has been done during the visit of B.K. to \'Ecole Polytechnique within the PANTOMIME project funded by AAP INS2I CNRS.
G.P. was partially supported by the PANTOMIME project, the Paris Ile-de-France region, and NSF grants DMS-1760448 and DMS-1853650.}}}
\author{
Andrey Bychkov\thanks{Thales Research \& Technology , France (\email{abychkov\_edu@proton.me})}
\and 
Opal Issan\thanks{University of Calfornia San Diego, La Jolla, CA, (\email{oissan@ucsd.edu})}
\and Gleb Pogudin\thanks{LIX, CNRS, \'{E}cole Polytechnique, Institute Polytechnique de Paris, France (\email{gleb.pogudin@polytechnique.edu})}
\and
Boris Kramer\thanks{University of Calfornia San Diego, La Jolla, CA, (\email{bmkramer@ucsd.edu})}
}

\headers{Exact and optimal quadratization of nonlinear non-autonomous ODEs}{A. Bychkov, O. Issan, G. Pogudin and B. Kramer}

%%%%%%%%%%%%%%
\begin{document}
\maketitle
%%%%%%%%%%%

\begin{abstract}
Quadratization of polynomial and nonpolynomial systems of ordinary differential equations is advantageous in a variety of disciplines, such as systems theory, fluid mechanics, chemical reaction modeling and mathematical analysis.  A quadratization reveals new variables and  structures of a model, which may be easier to analyze, simulate, control, and provides a convenient parametrization for learning. 
This paper presents novel theory, algorithms and software capabilities for quadratization of non-autonomous ODEs. We provide existence results, depending on the regularity of the input function, for cases when a quadratic-bilinear system can be obtained through quadratization. 
We further develop existence results and an algorithm that generalizes the process of quadratization for systems with arbitrary dimension that retain the nonlinear structure when the dimension grows. For such systems, we provide dimension-agnostic quadratization.  An example is semi-discretized PDEs, where the nonlinear terms remain symbolically identical when the discretization size increases. 
As an important aspect for practical adoption of this research, we extended the capabilities of the \Qbee software towards both non-autonomous systems of ODEs and ODEs with arbitrary dimension. 
We present several examples of ODEs that were previously reported in the literature, and where our new algorithms find quadratized ODE systems with lower dimension than the previously reported lifting transformations. 
We further highlight an important area of quadratization: reduced-order model learning. This area can benefit significantly from working in the optimal lifting variables, where quadratic models provide a direct parametrization of the model that also avoids additional hyperreduction for the nonlinear terms. A solar wind example highlights these advantages. 
\end{abstract}

\begin{keywords}
Nonlinear dynamical systems; quadratization; polynomialization; symbolic computation; lifting transformation.
\end{keywords}

\begin{MSCcodes}
34A34, % Nonlinear equations and systems, general
34C20, % Transformation and reduction of equations and systems, normal forms
68W30, %: Symbolic computation and algebraic computation
93B11. %: System structure simplification
\end{MSCcodes}

%%%%%%%%%%%%%%%%%%%%%%%%%%%%%%%%%%%%%%%%%%%%%%%%%%%%%%%%%
\section{Introduction}
%%%%%%%%%%%%%%%%%%%%%%%%%%%%%%%%%%%%%%%%%%%%%%%%%%%%%%%%
Evolutionary processes in engineering and science are often modeled with nonlinear non-autonomous ordinary differential equations that describe the time evolution of the states of the system, i.e., the physically necessary and relevant variables. However, these models are not unique: the same evolutionary process can be modeled with different variables, which can have a tremendous impact on computational modeling and analysis.
This idea of variable transformation (referred to as \textit{lifting} when extra variables are added) to promote model structure is found across different communities, with literature spanning half a century. We first review general variable transformations and then focus on the appealing quadratic form. 

In fluid dynamics, variable transformations have long been recognized as providing useful alternative representations. While the Euler and Navier-Stokes equations are most commonly derived in conservative variables, where each state is a conserved quantity (mass, momentum, energy), symmetric variables have been exploited to guarantee stable models~\cite{hughes1986new}. Additionally, recent work by~\cite{halpern_2021_antisymmetric} shows that the fluid and plasma equations can be reformulated via an anti-symmetric variable transformation to retain robust conservation properties in numerical simulations.  Stability-preserving inner products for projection-based reduced-order models (ROMs) hint at variable transformations as a proper choice to increase stability for nonlinear fluid models, see \cite{kalashnikova2011stable,rezaian2020impact}. As another classic example, the well-known Cole-Hopf transformation turns the nonlinear Burgers' partial differential equation (PDE) into a linear PDE~\cite{cole1951quasi,hopf1950partial}. 
In the dynamical systems field, the Koopman operator is a linear infinite-dimensional operator that describes the dynamics of observables of nonlinear systems. Dynamic mode decomposition~\cite{rowley2009spectral,schmid2010dynamic} approximates the spectrum of the Koopman operator, and the choice of observables has a significant impact on the quality of the learned model, which led to the extended DMD algorithm~\cite{williams2015data,netto2021analytical}.
For control design, the idea of transforming a general nonlinear system into a system with more structure is also common practice: the concept of feedback linearization transforms a general nonlinear system into a structured linear model \cite{jakubczyk1980linearization,Khalil_NonlinearSystems}. This is done via a nonlinear state transformation, where the transformed state might be augmented, i.e., have increased dimension relative to the original state. The lifting transformations known in feedback linearization are specific to the desired target model form.
A change of variables can also ensure that physical constraints are more easily met in a simulation, see \cite{nam2011space,hassler2020finite}. 
Bringing nonlinear systems into canonical and abstract forms can then improve analysis, as seen in~\cite{liu2015abstraction,brenig2018reducing}. 
The authors in~\cite{savageau1987recasting} showed that all ODE systems with (nested) elementary functions can be recast in a special polynomial system form, which is then faster to solve numerically.

Quadratic model structure, as a special case, has seen the broad interest due to the ease of working with quadratic models. 
In the context of optimization, McCormick~\cite{mccormick1976computability} is credited with first introducing variable substitutions to achieve quadratic structure so that non-convex optimization problems can be recast as convex problems in the new variables.
In fluid mechanics, the quadratic model structure of the specific volume variable representation has been exploited in~\cite{balajewicz2016minimal} to allow for model stabilization.
To analyze equilibrium branches of geometrically nonlinear finite element models, the authors in \cite{guillot2019generic} recast the model into a quadratic form, for which the Jacobian and a specific Taylor series can be easily obtained. They then use the Asymptotic Numerical Method to find the equilibrium branches.
In the area of analog computing with chemical reaction networks, quadratic forms correspond to the notion of elementary chemical reactions. Transforming an arbitrary polynomial ODE system into a quadratic one can be used to establish the Turing completeness of elementary chemical reactions~\cite{bournez2007polynomial,fages2017strong,lifeware1}.
Lifted models in quadratic and quadratic-bilinear (QB) form have seen great interest in the systems and control community in the past decade, as the QB structure is appealing for intrusive model reduction~\cite{gu2011qlmor,bennerBreiten2015twoSided,bennergoyal2016QBIRKA,KW18nonlinearMORliftingPOD,KW2019_balanced_truncation_lifted_QB}. Those methods project the quadratic operators of a high-dimensional lifted model onto a reduced space to obtain a quadratic reduced model. The quadratic model structure eliminates the need for additional hyperreduction/interpolation to reduce the expense of evaluating the nonlinear term~\cite{deim2010,astrid2008missing,barrault2004empirical,carlberg2013gnat,nguyen2008best}, where in some cases the number of interpolation points required for accuracy eliminates computational gains~\cite{bergmann2009enablers,huangAIAA18RomRocketCombustion}. Moreover, the quadratic model structure has fewer degrees of freedom in the ROM than a cubic or high-order polynomial model, making quadratic models the most desired lifted form. 
The \textit{Lift \& Learn} method \cite{QKPW2020_lift_and_learn,QKMW2019_transform_and_learn} and related work~
\cite{SKHW2020_learning_ROMs_combustor,gosea2018data,JQK2021_performanceCompCombustion,McQuarrie_regularizedOPINF}, leverage lifting transformations to learn low-order polynomial ROMs of complex nonlinear systems, such as combustion dynamics, from lifted data. For quadratic and cubic model structures, one can equip these learned ROMs with stability guarantees, see~\cite{K2020_stability_domains_QBROMs,SKP_PIregulartizationOPINF}.

With such great appeal and interest in quadratic and polynomial dynamical system structure, it is natural to ask: Which systems can be brought into polynomial (via \emph{polynomialization}) and further into quadratic (via \emph{quadratization}) form? Which algorithms and methods exist to perform such transformations?
The theoretical results that any system written using (nested) elementary functions can be polynomialized and every polynomial system can be quadratized have been discovered and established in different communities and contexts, see~\cite{Appelroth1902,Lagutinskii,kerner1981universal,CPSW05,gu2011qlmor,carravetta2015global,carravetta2020solution}.
The proofs of these theorems are constructive and can be turned into algorithms. However, a straightforward approach would require introducing an excessively large number of variables and thus making the resulting dynamical system hard to use and/or analyze.
The resulting quadratic systems can be pruned using constrained programming techniques, and designed algorithms and methods are presented for polynomialization in~\cite{lifeware2} and quadratization in~\cite{lifeware1}. Both algorithms are implemented in the \textsf{Biocham} software~\cite{Biocham}.
We take a different approach in~\cite{Bychkov2021} where the quadratization problem for a polynomial system is framed as a tree exploration.  The quadratization is obtained by building up the tree structure rather than pruning it as in~\cite{lifeware1}. 
Attractive features of the resulting algorithm \Qbee \cite{QBee} include optimality guarantees for the dimension of the lifted system and good performance in practice~\cite[Table~3]{Bychkov2021}. 
Although the previous version of \Qbee performed well on a selection of benchmark models from synthetic biology and some academic examples, see~\cite{Bychkov2021}, we found its applicability to engineering problems quite limited.  
In particular, dynamical systems models of engineering systems share common features that the previous version of \Qbee cannot handle:
\ben
  \item Engineering models are often driven by time-dependent inputs or controls, i.e., they are non-autonomous. Currently, there is no suitable theory for quadratization of such systems, and the previous version of \Qbee could not handle this.
  \item Semi-discretization of PDEs produces a system of ODEs with $\nx$ unknowns in the discretization. For such models, it would be computationally advantageous to find uniform quadratization scheme valid for any $\nx$.
  \item Many models are not polynomial (e.g., involve fractions or exponential functions), so a polynomialization procedure as in \textsf{Biocham}~\cite{lifeware2} is missing in \Qbee.
\een

There are four main contributions of this paper that address these challenges.
First, we provide new theorems (with constructive proofs) of existence of quadratizations for different classes of models. These theorems lead us to develop practical algorithms. 
Second, we extend the algorithms and implementation in the previous version of \Qbee with extra functionality to  (i) optimally quadratize polynomial systems with time-dependent inputs,  (ii) to find dimension-agnostic quadratizations for systems with variable dimension (e.g., semi-discretized PDEs), and (iii) to polynomialize and then quadratize non-polynomial terms. 
This functionality is available in \Qbee version 0.8 \cite{QBee}.
Third, we demonstrate how this new functionality addresses the three challenges above on models from chemical engineering, rocket combustion and space weather. These novel quadratizations from \Qbee outperform reported results in the literature, as we show in the corresponding sections.
Fourth, we present a numerical study that shows how these lifting transformations can create better coordinate systems to learn ROMs from data, a key use case of quadratization.

% Structure of the paper
This paper is organized as follows. \Cref{sec:poly} provides background on quadratization of autonomous polynomial ODEs \textit{of fixed dimension}. \Cref{sec:theory_inputs} presents new existence results and constructive proofs for quadratization of \textit{non-autonomous} polynomial models. \Cref{sec:agnostic} then presents new results for uniform quadratizations of families of ODEs arising from semi-discretization of PDEs. 
\Cref{sec:Qbee} outlines the \Qbee algorithm and its new capabilities. In \cref{sec:applications}, we showcase the results of our algorithm on models from chemical engineering and rocket combustion.
\Cref{sec:heliospheric} demonstrates the advantages of quadratization for learning reduced-order models on a problem from solar wind velocity prediction. 
Finally, \cref{sec:conclusions} offers conclusions and an outlook towards future work.

%%%%%%%%%%%%%%%%%%%%%%%%%%%%%%%%%%%%%%%%%%%%%%%%%%%%%%%%%%
\section{Notation and background}  \label{sec:poly}
%%%%%%%%%%%%%%%%%%%%%%%%%%%%%%%%%%%%%%%%%%%%%%%%%%%%%%%%%%
We start in \cref{ss:notation} with defining notation and then review known results about the existence of quadratizations in \cref{ss:QuadPolyODEs}. 
These results also provide an upper bound for the order of quadratization, i.e., the  maximum required number of variables needed for achieving a lifted quadratic model. 

%%%%%%%%%%%%%%%%%%%%%%%%%%%%%
\subsection{Notation and definitions} \label{ss:notation}
We denote by $\bx= [x_1, x_2, \ldots, x_\nn]^\top$ an $\nn$-dimensional column vector (in either $\mathbb{R}^\nn$ or $\mathbb{C}^\nn$), which generically represents the state of a dynamical system (we use different notation for the application problems where states have physical meaning). We denote with $\dot{\bx}$ its derivative (usually with respect to time, $t$, where $t>0$). Moreover, we denote by $\bu=\bu(t) \in \mathbb{R}^r$ a generic input vector. We often omit the explicit dependence of $t$ for ease of notation. 
The symbol $\odot$ denotes the component-wise product of vectors (the Hadamard product), and $\bx^\ell$ denotes the $\ell$th power of $\bx$, also understood component-wise. Moreover, $\otimes$ denotes the Kronecker product of vectors, e.g., $[x_1 \ x_2]^\top \otimes [y_1 \ y_2]^\top = [x_1 y_1 \  x_1 y_2 \ x_2 y_1 \  x_2 y_2]^\top$, 
and $\otimes^{\prime}$ is the compact Kronecker product, which removes redundant terms (only one term here, $x_1x_2 = x_2x_1$) in the standard Kronecker product, e.g., $[x_1 \ x_2]^\top \otimes^{\prime} [y_1 \ y_2]^\top = [x_1 y_1 \  x_1 y_2 \  x_2 y_2]^\top$.
For a matrix $\bA \in \mathbb{R}^{\nn \times \nn}$, its $(i,j)$th entry is $A_{ij}$.
The set of nonnegative integers is denoted as $\mathbb{Z}_{\geqslant 0}$.

A product of positive-integer powers of variables is referred to as \textit{monomial} (e.g., $x^5y$) and the total degree of a monomial is the sum of the powers of the variables appearing in it. A \textit{polynomial} is a sum of monomials, e.g., $x + y^2$.
By $\mathbb{C}[\bx]$ and $\mathbb{C}[\bx, \bw]$ we denote the sets of all polynomials with (possibly complex) coefficients in $\bx$ and $\bx, \bw$, respectively. The sets $\mathbb{C}[\bx,\bw,\bu]$ and $\mathbb{C}[\bx,\bw,\bu, \dot{\bu}]$ are sets of polynomials defined similarly.
Let $p(\bx)$ be a polynomial, then $\deg_{x_i} p(\bx)$ denotes the degree of $p$ with respect to $x_i$, that is, the maximal power of $x_i$ appearing in $p(\bx)$. Similarly, $\deg p(\bx)$  denotes the total degree of $p$, that is, the maximum of the total degrees of the monomials appearing in $p(\bx)$.
For example $\deg_x (x^5y) = 5$ and $\deg (x^5 y) = 6$.
The degree of a vector or a matrix is defined as the maximum of the degrees of its entries.

%%%%%%%%%%%%%%%%%%%%%%%%%%%%%%%%%%%%%%%%%%%%%%%%%%%
\subsection{Quadratization of autonomous polynomial ODEs} \label{ss:QuadPolyODEs}
We begin with a definition. 
\begin{definition}[Quadratization]\label{def:quadr}
  Consider a polynomial system of ODEs
  \begin{equation}\label{eq:sys_main}
  \dot{\bx} = \bp(\bx),
  \end{equation}
  where $\bx = \bx(t) = [x_1(t), \ldots, x_\nn(t)]^\top$ and 
  $\bp(\bx) = [p_1(\bx), \ldots, p_\nn(\bx)]^\top$ with $p_1, \ldots, p_\nn \in \mathbb{C}[\bx]$. Then an $\ell$-dimensional vector of new variables
  \begin{equation}\label{eq:quadr}
      \bw = \bw(\bx) \quad \in \ \mathbb{C}[\bx]^\ell
  \end{equation}
  is said to be a \emph{quadratization} of~\eqref{eq:sys_main} if there exist vectors $\bq_1(\bx, \bw)$ and $\bq_2(\bx, \bw)$ of dimensions $\nn$ and $\ell$, respectively, with the entries being polynomials of total degree at most two such that 
  \begin{equation}
  \dot{\bx} = \bq_1(\bx, \bw) \quad \text{ and }\quad \dot{\bw} = \bq_2(\bx, \bw)
  \end{equation}
  for every $\bx$ solving~\eqref{eq:sys_main}.
  The dimension $\ell$ of the vector $\bw$ is called the \emph{order of quadratization}.
  A quadratization of the smallest possible order is called an \emph{optimal quadratization}.
\end{definition}

\begin{example}\label{ex:quadr}
    Consider a two-dimensional system
    \begin{equation}\label{eq:quadr_ex}
    \dot{x}_1 = (x_1 + 1)^3 + x_2, \qquad \dot{x}_2 = x_1 + x_2
    \end{equation}
    We claim that a new variable $w(\bx) = (x_1 + 1)^2$ is a (polynomial) quadratization.
    Indeed, for the original states $x_1$ and $x_2$ we have
    \[
    \dot{x}_1 = (x_1 + 1)w + x_2, \qquad \dot{x}_2 = x_1 + x_2,
    \]
    so we take $\bq_1(\bx, w) = [(x_1 + 1)w + x_2, \; x_1 + x_2]^\top$.
    Furthermore,
    \[
    \dot{w} = 2(x_1 + 1)\dot{x}_1 = 2(x_1 + 1)^4 + 2(x_1 + 1)x_2 = 2w^2 + 2(x_1 + 1)x_2,
    \]
    so $q_2(\bx, w) = 2w^2 + 2(x_1 + 1)x_2$.
    Since this quadratization is of order one, it is optimal.
\end{example}

%%%%%%%%%%%%%%%%%%%%%%%%%%%%%%%%%%%%%%%%%%%%%%
\subsubsection{Monomial quadratization} 
This section considers the case where the new variables are restricted to be of monomial form. Those are useful in applications in synthetic biology~\cite{lifeware1} and are comparatively easier to find than polynomial terms because the search space is discrete and the problem of finding an optimal quadratization can be phrased as a combinatorial optimization problem.

\begin{definition}[Monomial quadratization]\label{def:monomial_quadr}
  If all the polynomials $w_1(\bx), \ldots, w_\ell(\bx)$ are monomials, the quadratization is called \emph{a monomial quadratization}.
  If a monomial quadratization of a system has the smallest possible order among all the monomial quadratizations of the system, it is called \emph{an optimal monomial quadratization}.
\end{definition}

\begin{example}[Continuation of~\Cref{ex:quadr}]
    The quadratization $w(\bx) = (x_1 + 1)^2$ introduced in~\Cref{ex:quadr} is not monomial.
    We next show that $w(\bx) = x_1^2$ is a monomial quadratization for the same system~\eqref{eq:quadr_ex}.
    For $x_1$ and $x_2$, we have
    \[
      \dot{x}_1 = x_1^3 + 3x_1^2 + 3x_1 + 1 + x_2 = x_1w + 3 w + 3x_1 + 1 + x_2,\quad \dot{x}_2 = x_1 + x_2,
    \]
    so $\bq_1(\bx, w) = [x_1 w + 3 w + 3x_1 + 1 + x_2,\; x_1 + x_2]^\top$.
    Furthermore,
    \[
    \dot{w} = 2(x_1 + 1)\dot{x_1} = 2x_1^4 + 8x_1^3 + 12 x_1^2 + 8 x_1 + 2 + (x_1 + 1)x_2 = 2w^2 + 8 x_1w + 12 w + 8 x_1 + 2 + (x_1 + 1)x_2, 
    \]
    so $q_2(\bx, w) = 2w^2 + 8 x_1w + 12 w + 8 x_1 + 2 + (x_1 + 1)x_2$. Therefore, $w(\bx) = x_1^2$ is an optimal monomial quadratization of~\eqref{eq:quadr_ex}.
    Here, both the monomial quadratization and the polynomial quadratization (cf.~\Cref{ex:arb_poly,ex:arb_poly2}) are optimal, yet the dynamics of the auxiliary state $w$ are a bit more involved in the monomial case. 
\end{example}

The next theorem assures that systems of polynomial ODEs are always guaranteed to have an exact quadratic representation in a different, possibly augmented, set of variables. 

\begin{theorem}[{\cite[Theorem~1]{CPSW05}}]\label{thm:existence_simple}
  For every ODE system of the form~\eqref{eq:sys_main}, there exists a monomial quadratization.
  Furthermore, if $d_i := \deg_{x_i}\bp(\bx)$, then the order of an optimal monomial quadratization does not exceed $\Pi_{i=1}^\nn (d_i + 1)$.
\end{theorem}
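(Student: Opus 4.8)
The plan is to exhibit one explicit monomial quadratization whose order is bounded by $\prod_{i=1}^\nn(d_i+1)$; the claim about the \emph{optimal} monomial quadratization then follows immediately, since the optimal order cannot exceed the order of any particular quadratization we produce. The candidate set of new variables I would take is the full coordinate ``box'' of monomials
\[
S := \{\, \bx^{\ba} = x_1^{a_1}\cdots x_\nn^{a_\nn} \;:\; 0 \le a_i \le d_i \text{ for all } i \,\},
\]
and I let $\bw$ collect all elements of $S$ other than the constant $1$ and the coordinate monomials $x_1,\dots,x_\nn$, which are already available. Since $|S| = \prod_{i=1}^\nn(d_i+1)$, this gives $\ell \le \prod_{i=1}^\nn(d_i+1)$ at once, so the cardinality half of the statement is automatic once the construction is verified to be a quadratization.

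First I would check the equations for the original states. By definition $d_i = \deg_{x_i}\bp(\bx)$, so every monomial occurring in any entry $p_j$ has $x_k$-degree at most $d_k$ for each $k$, i.e.\ it lies in $S$. Hence each $p_j$ is a $\mathbb{C}$-linear combination of elements of $S$, each of which is one of our variables (or the constant), so $\dot{x}_j = p_j$ already has total degree at most one in $(\bx,\bw)$; this defines $\bq_1$.

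The main step is the equations for the new variables, and this is where the shape of $S$ is essential. For a monomial variable $w = \bx^{\ba}\in\bw$, the chain rule gives
\[
\dot{w} = \sum_{i \,:\, a_i \ge 1} a_i\, \bx^{\ba - \be_i}\, p_i(\bx),
\]
where $\be_i$ is the $i$th unit exponent vector. The key observation is that $S$ is closed under lowering a single exponent: if $\bx^{\ba}\in S$ and $a_i\ge 1$, then $\bx^{\ba-\be_i}\in S$ as well, since its $x_i$-degree drops to $a_i-1<d_i$ while the other degrees are unchanged. Thus $\bx^{\ba-\be_i}$ is again one of our variables. Writing $p_i$ as a sum of monomials, each of which also lies in $S$, every term $\bx^{\ba-\be_i}\cdot m$ is a product of two of the chosen variables and therefore has total degree at most two in $(\bx,\bw)$. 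Summing over $i$ yields the desired $\bq_2$, completing the verification.

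I expect the only real obstacle to be conceptual rather than computational: recognizing that the set of new variables must be closed under division by any single $x_i$, so that each $\dot{w}$ can be split into products of two chosen variables. The box $S$ is coordinate-closed and contains every monomial of $\bp$, and this closure property is exactly what forces the degree-two bound in the $\dot{w}$-equations; the existence claim and the cardinality bound then both fall out simultaneously. The one minor bookkeeping point I would handle is the edge case $d_i=0$ (a variable absent from $\bp$), which only removes elements from $S$ and hence cannot violate the stated upper bound.
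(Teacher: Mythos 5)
Your proposal is correct and follows essentially the same construction the paper uses: the ``box'' of monomials with $x_i$-degree at most $d_i$, verified to be a quadratization via the chain rule and closure under division by a single variable (this is exactly the argument the paper spells out for the more general non-autonomous case in the proof of \cref{thm:existence_inputs}, and the ``standard proof'' it alludes to for \cref{thm:existence_simple}). Your handling of the cardinality bound and the $d_i=0$ edge case is also consistent with the stated result.
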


Monomial quadratizations have been actively studied in the literature. We summarize a few known results about the optimality, e.g., minimal order, of such quadratizations:
\begin{enumerate}
    \item In~\cite[Theorem~2]{lifeware1} it is shown that finding an optimal monomial quadratization is an NP-hard problem even if the polynomials are represented using dense encoding (that is, by all the coefficients up to certain degree).
    \item One natural approach (used, e.g., in~\cite{lifeware1}) to finding an optimal monomial quadratization is to take an explicit but large monomial quadratization constructed in the standard proof of \cref{thm:existence_simple} and search for the smallest subset of it which is a quadratization itself. 
    However, this approach can lead to a non-optimal quadratization as illustrated in \cite[Example 3]{Bychkov2021}.
    \item If one allows the new variables to be Laurent monomials (that is, have negative integer degrees), one can obtain a better upper bound for the number of new variables: it is always sufficient to take as many new variables as there are monomials in the right-hand side of the system~\cite[Proposition~1]{Bychkov2021}.
    However, we are not aware of any algorithm for finding an optimal Laurent monomial quadratization.
\end{enumerate}

%%%%%%%%%%%%%%%%%%%%%%%%%%%%%%%%%%%%%%%%%%%%%%
\subsubsection{Polynomial quadratization} 
Despite the appeal of monomial quadratizations in certain communities, it is natural to ask whether one may be able to obtain a substantially more concise quadratization if \emph{arbitrary polynomials} are allowed for the new variables (as in \cref{def:quadr}). 
Although, in the examples above, allowing arbitrary polynomials did not result in a lower-order quadratization, the following example from~\cite[Section 4]{Alauddin} shows that this is not always the case, and the difference between optimal monomial and arbitrary quadratizations may be significant. 

\begin{example}[Lower-order quadratization using arbitrary polynomials]\label{ex:arb_poly}
  Consider the scalar ODE
  \[
    \dot{x} = (x + 1)^k,
  \]
  where $k$ is a positive integer.
  A simple combinatorial argument~\cite[Section 4]{Alauddin} shows that at least $\frac{\sqrt{8k + 9} - 5}{2}$ monomial (that is, powers of $x$) variables are required for a monomial quadratization. Since $k$ can be arbitrarily large, this is concerning.
  In contrast, it is always possible to quadratize the ODE by adding one new variable $w(x) := (x + 1)^{k - 1}$ so that
  \begin{align*}
      \dot{x} & = (x + 1)w,\\
      \dot{w} & = \dot{x} (k - 1) (x + 1)^{k - 2} = (k - 1) (x + 1)^{2k - 2} = (k - 1)w^2.
  \end{align*}
\end{example}
Although the new variable $w$ above is per definition not a monomial, it can be thought of as a ``shifted monomial''. In \cite[Theorem~3.1]{Alauddin}, the authors state that if it is possible to quadratize a scalar ODE by a single new variable, then the new variable can always be taken to be such a ``shifted monomial''. However, if more new variables are added, the set of potentially useful new variables becomes richer as the next example~\cite[Theorem 3.2]{Alauddin} shows.

\begin{example}[Lower-order quadratization using arbitrary polynomials, continued]\label{ex:arb_poly2}
  Consider the scalar ODE of degree six:
  \begin{equation}\label{eq:deg6}
    \dot{x} = x^6 + x^4 + x^3.
  \end{equation}
  One can show using~\cite[Theorem~3.1]{Alauddin} that~\eqref{eq:deg6} cannot be quadratized using a single new variable. 
  In fact, if monomial new variables are used, at least three of them are needed~\cite[Lemma~5.7]{Alauddin}.
  However, one can quadratize~\eqref{eq:deg6} using just two new polynomial variables $w_1(x) := x^3$ and $w_2(x) := x^5 + \frac{5}{8}x^2$ as follows:
  \begin{align*}
      \dot{x} &= w_1^2 + xw_1 + w_1,\\
      \dot{w}_1 &= 3\left( w_1w_2 + x w_2 + \frac{3}{8}w_2 - \frac{5}{8}w_1 - \frac{15}{64}x^2\right),\\
      \dot{w}_2 &= 5\left( w_2^2 + w_1w_2 - \frac{9}{64}xw_1 - \frac{3}{8}w_2 + \frac{15}{64}x^2 \right).
  \end{align*}
\end{example}
  The previous examples show that there is merit in searching for polynomial quadratizations to limit the growth of additional variables needed to achieve quadratic form. 

%%%%%%%%%%%%%%%%%%%%%%%%%%%%%%%%%%%%%%%%%%%
\section{Quadratization of polynomial ODEs with inputs} \label{sec:theory_inputs}
%%%%%%%%%%%%%%%%%%%%%%%%%%%%%%%%%%%%%%%%%%%%%%
Many engineering systems are forced with external inputs or disturbances. This section presents new theoretical results for quadratization of polynomial ODEs with external forcing, which then lay the foundation for efficient algorithms.  
We are interested in bringing polynomial ODEs into quadratic-bilinear\footnote{In the systems and control literature, the terms $\bN_i\bx u_i$ above are called \emph{bilinear}---a subclass of \emph{control-affine systems}---as they are linear in the state $\bx$ and linear in the input/control $\bu$.} (QB) form, where
\begin{equation}\label{eq:qb_general}
 \dot{\bx} = \bA \bx + \bH (\bx \otimes \bx) + \sum_{i=1}^r \bN_i\bx u_i + \bB \bu,
\end{equation}
and $\bA\in \mathbb{R}^{\nn\times \nn}, \bH \in \mathbb{R}^{\nn\times \nn^2}$, $\bN_i \in \mathbb{R}^{\nn\times \nn}$ for $1 \leqslant i \leqslant r$, and $\bB \in \mathbb{R}^{\nn\times r}$. 
Quadratic-bilinear systems have seen great interest in the systems and control community, and are therefore an appealing target structure for the proposed quadratization methods, see the introduction for further discussion. 
We separate two cases for quadratization: \cref{sec:diffInputs} covers polynomial ODEs with differentiable inputs and \cref{sec:input-free} presents quadratization results for systems where the inputs $\bu(t)$ are not differentiable.

%

%%%%%%%%%%%%%%%%%%%%%%%%%%%%%%%%%%%%%%%%%%%%%%%%
\subsection{Systems with differentiable inputs} \label{sec:diffInputs}
We begin this section with a definition that extends \cref{def:quadr} to the non-autonomous polynomial case where the inputs are differentiable.
Since we aim at the QB form~\eqref{eq:qb_general}, our definitions are different from the ones in~\cite{carravetta2020solution} where the input variables are ignored when counting the degree.

\begin{definition}[Quadratization of non-autonomous polynomial ODEs]\label{def:quadr_input}
  Consider the system
  \begin{equation}\label{eq:sys_main_input}
  \dot{\bx} = \bp(\bx, \bu),
  \end{equation}
  where $\bx = \bx(t) = [x_1(t), \ldots, x_\nn(t)]^\top$ are the states, $\bu = \bu(t) = [u_1(t), \ldots, u_r(t)]^\top$ denote the inputs and 
  $p_1, \ldots, p_\nn \in \mathbb{C}[\bx, \bu]$.
  Then an $\ell$-dimensional vector of new variables
  \begin{equation}\label{eq:quadr_input}
      \bw = \bw(\bx, \bu) \quad \in \ \mathbb{C}[\bx, \bu]^\ell
  \end{equation}
 is said to be a \emph{quadratization} of~\eqref{eq:sys_main_input} if there exist vectors $\bq_1(\bx, \bw, \bu, \dot{\bu})$ and $\bq_2(\bx, \bw, \bu, \dot{\bu})$ of dimensions $\nn$ and $\ell$, respectively, with the entries being polynomials of total degree at most two such that 
  \begin{equation}
  \dot{\bx} = \bq_1(\bx, \bw, \bu, \dot{\bu}) \quad \text{ and }\quad \dot{\bw} = \bq_2(\bx, \bw, \bu, \dot{\bu})
  \end{equation}
  for every $\bx$ solving~\eqref{eq:sys_main_input}.
  The number $\ell$ is called the \emph{order of quadratization}.
  A quadratization of the smallest possible order is called an \emph{optimal quadratization}. 
  A \emph{monomial quadratization} of non-autonomous systems is defined similarly as in \cref{def:monomial_quadr}.
\end{definition}

\begin{example}\label{ex:input}
    Consider a non-autonomous scalar ODE $\dot{x} = x + x^2u$.
    We claim that $w(x, u) = xu$ is a quadratization.
    Indeed, the corresponding quadratic system is
    \[
      \dot{x} = x + xw, \quad \dot{w} = xu' + (x + x^2u)u = xu' + xu + w^2,
    \]
    so $q_1(x, w, u, \dot{u}) = x + xw$ and $q_2(x, w, u, \dot{u}) = xu' + xu + w^2$.
    This quadratization is optimal and monomial.
\end{example}

The next theorem addresses the existence of a quadratization for non-autonomous polynomial nonlinear ODEs. 

\begin{theorem}\label{thm:existence_inputs}
   Every polynomial ODE system of the form~\eqref{eq:sys_main_input} has a monomial quadratization assuming that the inputs are differentiable.
   Moreover, the order of the optimal monomial quadratization does not exceed $\Pi_{i=1}^{\nn+r}(d_i + 1)$, where 
   \[
    d_i := \deg_{x_i} \bp(\bx, \bu) \;\; \text{for} \;\; 1 \leqslant i \leqslant \nn \quad\text{ and }\quad d_{\nn + i} := \deg_{u_i} \bp(\bx, \bu), \;\; \text{for} \;\; 1 \leqslant i \leqslant r,
   \]
   are the maximal degrees of the polynomials in the state and inputs, respectively.

\end{theorem}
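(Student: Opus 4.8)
The plan is to mirror the construction behind \cref{thm:existence_simple} while treating the inputs as additional coordinates; the one genuinely new ingredient is that the input derivatives $\dot{u}_j$ are admitted as fresh degree-one symbols in the quadratization by \cref{def:quadr_input}. Concretely, set $d_i := \deg_{x_i}\bp$ for $1\le i\le \nn$ and $d_{\nn+j}:=\deg_{u_j}\bp$ for $1\le j\le r$, and let $W$ be the set of all monomials $\bx^{\ba}\bu^{\bb}$ with $0\le a_i\le d_i$ and $0\le b_j\le d_{\nn+j}$. This ``box'' has exactly $\Pi_{i=1}^{\nn+r}(d_i+1)$ elements, which is the source of the claimed bound; I would take as the new variables $\bw$ those elements of $W$ that are neither $1$ nor one of the original coordinates $x_k,u_k$.

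Next I would verify that $W$ is a quadratization by examining the two families of derivatives. For an original state, $\dot{x}_k=p_k(\bx,\bu)$, and since $\deg_{x_i}p_k\le d_i$ and $\deg_{u_j}p_k\le d_{\nn+j}$, every monomial of $p_k$ already lies in $W$; hence $\dot{x}_k$ is \emph{linear} in the quadratized variables. For a new variable $m=\bx^{\ba}\bu^{\bb}\in W$, the chain rule gives
\[
  \dot{m} \;=\; \sum_{i=1}^{\nn} a_i\,\tfrac{m}{x_i}\,p_i(\bx,\bu) \;+\; \sum_{j=1}^{r} b_j\,\tfrac{m}{u_j}\,\dot{u}_j ,
\]
where $\tfrac{m}{x_i}$ (resp.\ $\tfrac{m}{u_j}$) denotes $m$ with the exponent of $x_i$ (resp.\ $u_j$) lowered by one, the term being omitted when that exponent is zero. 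The crucial observation, inherited from the autonomous argument, is that although a monomial of $\tfrac{m}{x_i}\,p_i$ may have $x_i$-degree as large as $2d_i-1$, it \emph{factors} as the product of two bounded monomials, namely $\tfrac{m}{x_i}\in W$ and a monomial of $p_i\in W$; each such term is therefore a product of two quadratized variables, that is, of total degree two.

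The step I expect to be the main obstacle is the second sum, the one involving $\dot{u}_j$. Here I would argue that $\tfrac{m}{u_j}\in W$ is a single quadratized variable of degree one (its $u_j$-exponent is $b_j-1<d_{\nn+j}$, so it remains in the box), while $\dot{u}_j$ is supplied by \cref{def:quadr_input} as an independent degree-one symbol, so each product $\tfrac{m}{u_j}\,\dot{u}_j$ again has total degree two. This is precisely where differentiability of the inputs is used: without it the symbols $\dot{u}_j$ would be unavailable, and the derivative of an input-bearing monomial could not be closed up quadratically. Assembling the two sums shows that $\dot{m}$ is a polynomial of total degree at most two in $\bx,\bw,\bu,\dot{\bu}$, so $W$ is a monomial quadratization; counting $|W|$ then yields the stated bound $\Pi_{i=1}^{\nn+r}(d_i+1)$ on the order of the optimal monomial quadratization.
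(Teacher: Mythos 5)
Your proposal is correct and follows essentially the same route as the paper: both take the ``box'' of all monomials in $\bx,\bu$ with per-variable degrees bounded by $d_1,\dots,d_{\nn+r}$, verify closure under differentiation by factoring each term of $\dot m$ as a product of two box elements (using $\dot u_j$ as an admitted degree-one symbol for the input-bearing terms), and count the box to get $\Pi_{i=1}^{\nn+r}(d_i+1)$. The only (harmless) refinement you add is explicitly excluding $1$ and the original coordinates from the list of new variables, which the paper's upper bound already absorbs.
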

\begin{proof}
   We consider the system~\eqref{eq:sys_main_input} and construct a quadratization for it.
   We consider the set of monomials
        \begin{align*}
            \cM := \{m(\bx, \bu)   \mid & m(\bx, \bu) \text{ is a monomial in $\bx, \bu$ s.t. } \\
            & \forall\; 1 \leqslant i \leqslant \nn\colon \deg_{x_i} m(\bx, \bu) \leqslant d_i \text{ and }
            \forall\;1 \leqslant i \leqslant r\colon \deg_{u_i} m(\bx, \bu) \leqslant d_{\nn + i}\}
        \end{align*}
   and claim that $\cM$ is a quadratization of the original system~\eqref{eq:sys_main_input}.
   To prove this, we consider an arbitrary $m(\bx, \bu) \in \cM$ and examine the monomials on the right-hand side of its derivative $\dot{m}(\bx, \bu)$, where
   \[
     \dot{m}(\bx, \bu) = \sum\limits_{i = 1}^\nn p_i(\bx, \bu) \frac{\partial m(\bx, \bu)}{\partial x_i} + \sum\limits_{j = 1}^r \dot{u}_j \frac{\partial m(\bx, \bu)}{\partial u_i},
   \]
   which follows by the chain rule. 
   By construction of $\cM$, every monomial in $p_i(\bx, \bu)$ for $1 \leqslant i \leqslant \nn$ belongs to $\cM$, so every monomial in the first sum can be written as $m_0(\bx, \bu)\frac{m(\bx, \bu)}{x_i}$, where $x_i$ appears in $m$ and $m_0 \in \cM$.
   Since $\frac{m}{x_i} \in \cM$, this product is a quadratic expression in $\cM$.
   Furthermore, every monomial in the second sum is of the form $\dot{u}_i \frac{m(\bx, \bu)}{u_i}$, where $u_i$ appears in $m$.
   Since $\frac{m(\bx, \bu)}{u_i} \in \cM$, this monomial can also be written as a quadratic expression in $\cM$.
   The claim about the size of the optimal quadratization follows from the fact that $|\cM| = \Pi_{i=1}^{\nn+r}(d_i + 1)$.
\end{proof}

The external inputs $\bu(t)$ may not always be differentiable, as step inputs, ramp inputs, pulses, and many other non-differentiable inputs are used in engineering. While those can often be smoothed, the next section considers the case where the inputs are not differentiable. 

%%%%%%%%%%%%%%%%%%%%%%%%%%%%%%%%%%%%%%%%%%%%%%%%%%
\subsection{Systems with non-differentiable inputs}\label{sec:input-free}
We consider systems with non-differentiable inputs $\bu(t)$ for which we define a new quadratization. 
While it is not always possible to relax the differentiability condition on the inputs in \cref{thm:existence_inputs} (see \cref{ex:infinite} and \cref{sec:gems} for examples), we characterize the cases when it is possible. We start with a corresponding definition.
\begin{definition}[Input-free quadratization of non-autonomous polynomial ODEs]\label{def:quadr_input_zero}
  In the notation of \cref{def:quadr_input}, an $\ell$-dimensional vector of new variables 
  \begin{equation}
      \bw = \bw(\bx) \quad \in \ \mathbb{C}[\bx]^\ell
  \end{equation}
  is said to be an \emph{input-free quadratization} of~\eqref{eq:sys_main_input} 
  if there exist vectors $\bq_1(\bx, \bw, \bu)$ and $\bq_2(\bx, \bw, \bu)$ of dimensions $\nn$ and $\ell$, respectively, with the entries being polynomials of total degree at most two such that 
  \begin{equation}\label{eq:input_free_result}
  \dot{\bx} = \bq_1(\bx, \bw, \bu) \quad \text{ and }\quad \dot{\bw} = \bq_2(\bx, \bw, \bu).
  \end{equation}
  The number $\ell$ is called \emph{the order of quadratization}.
  A quadratization of the smallest possible order is called \emph{an optimal quadratization}.
\end{definition}

\begin{remark}
  The key difference between \cref{def:quadr_input} and \cref{def:quadr_input_zero} is that $\dot{\bu}$ does not appear in the right-hand side terms of the resulting system (that is, in the vectors $\bq_1$ and $\bq_2$ of quadratic polynomials).
  This, in turn, implies that $\bw$ cannot depend on $\bu$.
  For example, the quadratization from~\Cref{ex:input} is not input-free (and in fact the model does not admit an input-free quadratization, see also~\Cref{ex:infinite}).
\end{remark}

\begin{example}
    Consider a two-dimensional system describing the Duffing oscillator:
    \[
    \dot{x}_1 = x_2, \quad \dot{x}_2 = -\alpha x_1 - \delta x_2 - \beta x_1^3 + u.
    \]
    We claim that $w(\bx) = x_1^2$ is an input-free (monomial and optimal) quadratization.
    Indeed, the corresponding quadratic system is
    \[
    \dot{x}_1 = x_2, \quad \dot{x}_2 = -\alpha x_1 - \delta x_2 - \beta x_1w + u, \quad \dot{w} = 2x_1x_2,
    \]
    so $\bq_1(\bx, w, u) = [x_2, -\alpha x_1 - \delta x_2 - \beta x_1w + u]^\top$ and $q_2(\bx, w, u) = 2x_1x_2$.
    The quadratic system above does not involve any derivatives of $u$ and, thus, does not add any additional constraints on the input function.
\end{example}

One nice feature of input-free quadratizations is that they produce the quadratic-bilinear structure as the following lemma shows.

\begin{lemma}
  Assume that the original system~\eqref{eq:sys_main_input} is input-affine, that is, of the form
  \begin{equation}\label{eq:inp_affine}
  \dot{\bx} = \bp_0(\bx) + \sum\limits_{i = 1}^r \bp_i(\bx) u_i.
\end{equation}
If there exists an input-free quadratization~\eqref{eq:input_free_result} with variables $\bw$, then this quadratization is quadratic-bilinear. 
\end{lemma}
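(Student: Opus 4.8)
The plan is to exploit the two structural hypotheses—input-affineness of the original system and input-freeness of the new variables—to pin down exactly how the resulting right-hand sides depend on $\bu$, and then to read off the matrices $\bA$, $\bH$, $\bN_i$, $\bB$ from the monomials that survive. The conceptual heart of the argument is that, under these hypotheses, no input monomial of the form $u_k u_l$ can be needed, which is precisely the ``bilinear'' part of the claim.

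First I would record the exact $\bu$-dependence of the genuine derivatives. Since the quadratization is input-free we have $\bw = \bw(\bx) \in \mathbb{C}[\bx]^\ell$, so the Jacobian $\frac{\partial \bw}{\partial \bx}$ is a matrix of polynomials in $\bx$ alone, and the chain rule gives $\dot{\bw} = \frac{\partial \bw}{\partial \bx}\dot{\bx} = \frac{\partial\bw}{\partial\bx}\bp_0(\bx) + \sum_{i=1}^r \frac{\partial\bw}{\partial\bx}\bp_i(\bx)\,u_i$, while $\dot{\bx} = \bp_0(\bx) + \sum_i \bp_i(\bx)\,u_i$ by \eqref{eq:inp_affine}. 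Hence both $\dot{\bx}$ and $\dot{\bw}$ are \emph{affine} in $\bu$ (degree at most one), with coefficients that are polynomials in $\bx$ only. This is the step where input-freeness is indispensable: had $\bw$ depended on $\bu$, differentiation would have introduced $\dot{\bu}$ and destroyed the affine-in-$\bu$ structure.

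Next I would combine this with the quadratization hypothesis. By \cref{def:quadr_input_zero} there exist polynomials $\bq_1(\bx,\bw,\bu)$ and $\bq_2(\bx,\bw,\bu)$ of total degree at most two with $\dot{\bx} = \bq_1$ and $\dot{\bw} = \bq_2$ along solutions. Because any value pair $(\bx_\ast,\bu_\ast)$ is realized on some trajectory—fix the initial state and the value of the input at one instant—these relations are genuine polynomial identities in $(\bx,\bu)$ once $\bw = \bw(\bx)$ is substituted. Grouping each entry of $\bq_1$ and $\bq_2$ by its degree in $\bu$, the total-degree-two bound forces the coefficient of any quadratic input monomial $u_k u_l$ to be a constant; matching against the left-hand sides, which carry no $u_k u_l$ terms, these constants must vanish. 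Consequently $\bq_1$ and $\bq_2$ involve only monomials of the types $1$, $z_a$, $z_a z_b$, $z_a u_i$, and $u_i$, where $\bz$ denotes the augmented state obtained by stacking $\bx$ and $\bw$.

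Finally I would assemble the coefficients: the $z_a z_b$ terms populate $\bH$, the $z_a$ terms populate $\bA$, the $z_a u_i$ terms populate $\bN_i$, and the $u_i$ terms populate $\bB$, giving exactly the QB form \eqref{eq:qb_general} in the variables $\bz$ (any constant terms form an affine offset, conventionally admitted in QB systems or absorbed by adjoining a constant variable). The single nontrivial point is the vanishing of the $u_k u_l$ coefficients, and the main obstacle I would be careful about is its prerequisite—justifying that the quadratization relation may be treated as a polynomial identity in $(\bx,\bu)$ rather than as an equality holding only along one trajectory, since that is what legitimizes matching coefficients of $\bu$; everything after that is bookkeeping.
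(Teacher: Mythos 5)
Your proposal is correct and follows essentially the same route as the paper's own (much terser) proof: observe that input-freeness of $\bw$ makes $\dot{\bw}$ input-affine along with $\dot{\bx}$, and then combine this with the total-degree-two bound on $\bq_1,\bq_2$ to rule out products of inputs. The extra care you take in upgrading the along-trajectory equality to a polynomial identity in $(\bx,\bu)$ (so that coefficient matching is legitimate) is a detail the paper leaves implicit, but it is the same argument.
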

\begin{proof}
   Since every $\dot{x}_i$ is input-affine, the same is true for every $\dot{w}_j$.
   Since the inputs do not appear in the new variables and $w_1, \ldots, w_\ell$ is a quadratization of the original system, the system~\eqref{eq:input_free_result} has at most a quadratic right-hand side which does not involve products of inputs.
   This shows that it is quadratic-bilinear.
\end{proof}

Unlike general quadratizations of non-autonomous systems, an input-affine quadratization may not always exist for a given input-affine polynomial system (see~\cref{ex:infinite}).
In the remainder of this section, we present additional theoretical results on the existence of an input-free quadratization for input-affine systems. 
We start with a simple class of systems admitting an input-free quadratization.

\begin{lemma}\label{lem:input_bilinear}
  Assume that the system~\eqref{eq:inp_affine} is {\rm polynomial-bilinear}, that is, the total degree of $\bp_i(\bx)$ in \eqref{eq:inp_affine} is at most one for every $1 \leqslant i \leqslant r$.
  Then there is an input-free quadratization for~\eqref{eq:inp_affine}.
\end{lemma}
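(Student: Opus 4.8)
The plan is to exploit that the only part of~\eqref{eq:inp_affine} obstructing a quadratic-bilinear form is the drift $\bp_0(\bx)$: each input coefficient $\bp_i(\bx)$ has total degree at most one, so every term $\bp_i(\bx)u_i$ is already bilinear in the original variables. I would therefore construct new variables depending on $\bx$ alone, so that the resulting quadratization is automatically input-free (and hence quadratic-bilinear by the preceding lemma). Concretely, set $D := \max\{1, \deg\bp_0\}$ and let $\cM$ be the set of all monomials in $\bx$ of total degree between $1$ and $D$; take $\bw$ to be the elements of $\cM$ of degree at least two.

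Next I would verify the two quadratization conditions. For the state equations $\dot{x}_i = p_{0,i}(\bx) + \sum_{k=1}^r p_{k,i}(\bx)u_k$: every monomial of $p_{0,i}$ has total degree at most $D$ and hence lies in $\cM\cup\{1\}$, so $p_{0,i}$ is linear in the augmented variables $(\bx,\bw)$, while each $p_{k,i}$ is linear in $\bx$, so $p_{k,i}u_k$ is bilinear. For a new variable $m\in\cM$, the chain rule gives $\dot m = \sum_i \tfrac{\partial m}{\partial x_i}p_{0,i} + \sum_{k}u_k\sum_i \tfrac{\partial m}{\partial x_i}p_{k,i}$, and I would treat the two sums separately.

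The autonomous sum is handled exactly as in the proof of \cref{thm:existence_simple}: every monomial of $\tfrac{\partial m}{\partial x_i}p_{0,i}$ factors as $m_0\cdot(m/x_i)$ with $m_0$ a monomial of $p_{0,i}$ (degree $\leqslant D$, hence in $\cM\cup\{1\}$) and $m/x_i$ of degree $\leqslant D-1$ (also in $\cM\cup\{1\}$), so it is quadratic in $(\bx,\bw)$. The input sum is where the main point lies: a monomial of $\tfrac{\partial m}{\partial x_i}p_{k,i}$ has the form $\mu\cdot(m/x_i)$ with $\deg\mu\leqslant 1$, so its total degree is at most $\deg m\leqslant D$; therefore it again lies in $\cM\cup\{1\}$, i.e.\ it is a single augmented variable, and the coefficient of each $u_k$ is linear in $(\bx,\bw)$. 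Multiplying by $u_k$ yields bilinear terms, completing the verification.

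The crux---and the reason a total-degree bound is used rather than the per-variable bounds of \cref{thm:existence_simple}---is closure of $\cM$ under the operation $m\mapsto \mu\,(m/x_i)$ arising from the bilinear part. Because $\bp_i$ is linear, $\mu$ has degree at most one, so this operation merely trades one variable for another and preserves total degree; a family cut out by a total-degree bound is stable under it, whereas one cut out by individual degree bounds $\deg_{x_i}\leqslant d_i$ need not be. I expect this closure observation to be the only genuine obstacle; the remaining steps are direct degree bookkeeping.
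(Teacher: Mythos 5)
Your proposal is correct and follows essentially the same route as the paper: take $\cM$ to be all monomials of total degree at most $\deg\bp_0$, split $\dot m$ into the drift sum and the input sum via the chain rule, and observe that since each $\bp_k$ has degree at most one, the monomials $\mu\,(m/x_i)$ arising from the input part stay within the total-degree bound, so the coefficient of each $u_k$ remains a single variable of the augmented system. The only differences (the $\max\{1,\cdot\}$ normalization and the explicit separation of $\bw$ from the degree-one monomials) are cosmetic.
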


\begin{proof}
  Let $d := \deg \bp_{0}(\bx)$ and assume that the total degree of $\bp_{i}(\bx)$ in \eqref{eq:inp_affine} is at most one for every $1 \leqslant i \leqslant r$. 
  We consider the set of monomials
  \[
    \mathcal{M} = \{m(\bx) \mid m(\bx) \text{ is a monomial in }\bx \text{ and } \deg m(\bx) \leqslant d\},
  \]
  and claim that $\mathcal{M}$ is an input-free quadratization of the polynomial-bilinear system~\eqref{eq:inp_affine}.
  To prove this, consider any $m(\bx) \in \mathcal{M}$, which has derivative
  \[
    \dot{m}(\bx) = \sum\limits_{i = 1}^\nn p_{0, i}(\bx) \frac{\partial m(\bx)}{\partial x_i} + \sum\limits_{j = 1}^r u_j\sum\limits_{i = 1}^\nn p_{j, i}(\bx) \frac{\partial m(\bx)}{\partial x_i}.
  \]
  We separately consider the terms in the first and second sums:
  \begin{enumerate}
      \item In the first sum, every monomial is the form $\frac{m(\bx)}{x_i}m_0(\bx)$, where $m_0(\bx)$ is a monomial of $p_{0, i}(\bx)$. 
      It is product of two elements of $\mathcal{M}$: $\frac{m(\bx)}{x_i}$ and $m_0(\bx)$;
      
      \item In the second sum, every monomial is of the form $\frac{m(\bx)}{x_i}m_0(\bx)u_j$, where $m_0(\bx)$ is a monomial from $p_{i, j}(\bx)$.
      Since $\deg p_{j, i}(\bx) \leqslant 1$, we have $\deg m_0 \leqslant 1$, so $\frac{m(\bx)}{x_i}m_0(\bx) \in \mathcal{M}$.
  \end{enumerate}
  Together, this shows that every monomial in the right-hand side of the equation for $m(\bx)$ is at most quadratic in $\cM$, so $\cM$ is indeed an input-free quadratization.
\end{proof}

More generally, it turns out that the existence of an input-free quadratization can be characterized via the properties of certain linear differential operators associated with the inputs $\bu(t)$.
This characterization turns out to be a generalization of the classical concept of locally finite derivation as explained in~\cref{rem:locally_finite} below. The next proposition provides this existence result.

\begin{proposition}\label{prop:inp_general}
  Consider an input-affine system of the form~\eqref{eq:inp_affine}.
  We introduce $r$ differential operators:
  \[
    D_i := \bp_i(\bx)^\top \cdot \frac{\partial}{\partial \bx},\quad 1 \leqslant i \leqslant r,
  \]
  where $\frac{\partial}{\partial \bx} = \left[ \frac{\partial}{\partial x_1}, \ldots, \frac{\partial}{\partial x_\nn}\right]^\top$.
  Let $\mathcal{A}$ be a subalgebra generated by $D_1, \ldots, D_r$ in the algebra $\mathbb{C}\left[\bx, \frac{\partial}{\partial \bx}\right]$ of all polynomial differential operators in $\bx$.
  Then there is an input-free quadratization for~\eqref{eq:inp_affine} if and only if
  \begin{equation}\label{eq:fin_dim_cond}
      \dim \{ A(x_i) \mid A \in \mathcal{A}\} < \infty \quad \text{for every } 1\leqslant i \leqslant \nn.
  \end{equation}
\end{proposition}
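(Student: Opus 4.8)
The plan is to restate the quadratization requirement operator-theoretically and then prove both implications by exhibiting a suitable finite-dimensional space of polynomials. Write $D_0 := \bp_0(\bx)^\top \cdot \frac{\partial}{\partial \bx}$ for the drift operator, so that the time derivative along~\eqref{eq:inp_affine} of any $g \in \mathbb{C}[\bx]$ is $\dot g = D_0(g) + \sum_{j=1}^r u_j\, D_j(g)$, with no $\dot{\bu}$ appearing. For a finite-dimensional subspace $L \subseteq \mathbb{C}[\bx]$ set $L^{[2]} := \operatorname{span}_{\mathbb{C}}\{fg \mid f,g \in L\}$. The central reduction is: \eqref{eq:inp_affine} admits an input-free quadratization if and only if there exists a finite-dimensional $L \subseteq \mathbb{C}[\bx]$ with $1, x_1,\dots,x_\nn \in L$ satisfying (P1) $D_j(L) \subseteq L$ for $1 \leqslant j \leqslant r$, and (P2) $D_0(L) \subseteq L^{[2]}$. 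Indeed, choosing a basis $1, x_1,\dots,x_\nn, w_1,\dots,w_\ell$ of such an $L$, condition (P1) makes each $D_j(x_i)$, $D_j(w_m)$ linear in $(\bx,\bw)$ while (P2) makes each $D_0(x_i)$, $D_0(w_m)$ quadratic in $(\bx,\bw)$, so every $\dot x_i$, $\dot w_m$ becomes quadratic-bilinear in $(\bx,\bw,\bu)$; conversely every input-free quadratization arises this way. The whole proof thus reduces to characterizing existence of such an $L$.

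For the forward implication I would take $L := \operatorname{span}_{\mathbb{C}}\{1, x_1,\dots,x_\nn, w_1,\dots,w_\ell\}$ from a given input-free quadratization $\bw$. Reading the quadratization identities as polynomial identities in $\bx$ and the indeterminates $\bu$ (they hold along all trajectories for all admissible inputs, hence identically), I extract the coefficient of $u_j$ on both sides of $\dot x_i = \bq_{1,i}$ and $\dot w_m = \bq_{2,m}$. On the left this coefficient is $D_j(x_i)$, resp.\ $D_j(w_m)$; on the right, since $\bq_1,\bq_2$ have total degree at most two, it is a polynomial of degree at most one in $(\bx,\bw)$, i.e.\ an element of $L$. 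Hence $D_j$ maps the spanning set of $L$ into $L$, and as $D_j(1)=0$ we obtain (P1). Since $\mathcal{A}$ is generated by $D_1,\dots,D_r$ and each $x_i \in L$, this invariance gives $\{A(x_i)\mid A \in \mathcal{A}\} \subseteq L$, a finite-dimensional space, which is exactly~\eqref{eq:fin_dim_cond}.

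For the converse I would start from $V := \mathbb{C}\cdot 1 + \sum_{i=1}^\nn \{A(x_i)\mid A \in \mathcal{A}\}$, which is finite-dimensional by~\eqref{eq:fin_dim_cond}, contains $1$ and all $x_i$, and is $D_j$-invariant for every $j$ (because $D_j \in \mathcal{A}$ and $\mathcal{A}$ is closed under multiplication). The difficulty is that (P1) holds automatically for $V$ but (P2) typically fails: $D_0$ raises degree, so $D_0(V)$ need not lie in $V^{[2]}$, and naively enlarging $V$ to fix (P2) risks breaking (P1) or finiteness. The observation that resolves this is that, since each $D_j$ is a derivation, the product spaces $V^{[m]} := \operatorname{span}_{\mathbb{C}}\{f_1\cdots f_m \mid f_i \in V\}$ are all $D_j$-invariant and satisfy $V^{[a]}\cdot V^{[b]} \subseteq V^{[a+b]}$, while $\bigcup_m V^{[m]} = \mathbb{C}[\bx]$ because $V$ contains all $x_i$. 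Hence the finitely many polynomials $D_0(\text{basis of }V)$ lie in some $V^{[M]}$, and the derivation property yields $D_0(V^{[m]}) \subseteq V^{[m-1+M]}$. Taking $m := \max(1, M-1)$ forces $V^{[m-1+M]} \subseteq V^{[2m]} = (V^{[m]})^{[2]}$, so $L := V^{[m]}$ is finite-dimensional, contains $1$ and the $x_i$, and satisfies both (P1) and (P2). By the reduction, this $L$ produces an input-free (indeed quadratic-bilinear) quadratization.

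The main obstacle is precisely this converse: the hypothesis~\eqref{eq:fin_dim_cond} constrains only the input operators $D_1,\dots,D_r$, whereas the drift $D_0$ is unconstrained and degree-increasing, so the two closure demands---linear closure under the $D_j$ and quadratic closure under $D_0$---pull against each other. The passage $V \mapsto V^{[m]}$ is what reconciles them, since taking products preserves $D_j$-invariance for free while giving $D_0$ enough room to land in the quadratic part $L^{[2]}$; the unconditional existence of a quadratization for the pure drift (\cref{thm:existence_simple}) is implicitly what makes the $D_0$ side harmless. Finally, it is worth remarking that for $r=1$ condition~\eqref{eq:fin_dim_cond} says exactly that $D_1$ acts locally finitely on the coordinate functions, recovering the classical notion of a locally finite derivation alluded to in \cref{rem:locally_finite}.
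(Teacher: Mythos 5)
Your proof is correct and follows essentially the same route as the paper's: necessity is obtained by extracting the coefficient of $u_j$ from the quadratized equations to show that the span of $\{1, x_1,\dots,x_\nn, w_1,\dots,w_\ell\}$ is invariant under each $D_j$ (the paper phrases this as an induction on words $D_{i_1}\cdots D_{i_h}$), and sufficiency by forming a finite-dimensional $D_j$-invariant space containing the $x_i$ and then closing it under the drift. Your $V^{[m]}$ filtration step is precisely the content of the paper's \cref{lem:input_bilinear}, which quadratizes the resulting polynomial-bilinear system by taking all monomials of bounded degree in the enlarged variable set---you have simply inlined that lemma in a basis-free form.
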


\noindent 
Before proving the proposition, let us illustrate it with two examples.

\begin{example}[Infinite dimension]\label{ex:infinite}
  Consider the scalar input-affine ODE with a single input:
  \[
    \dot{x} = x^2 u.
  \]
  In the notation of~\cref{prop:inp_general}, we have $D_1 = x^2 \frac{\partial}{\partial x}$. Then the algebra $\mathcal{A}$ is spanned by $D_1, D_1^2, \ldots$ and its application to $x$  yields
  \[
    D_1(x) = x^2, \quad D_1^2(x) = 2x^3, \quad D_1^3(x) = 6 x^4, \quad \ldots
  \]
  Thus, $\dim \{A(x) \mid A \in \mathcal{A}\} = \infty$ and the proposition implies that there is no input-free quadratization for this ODE.
\end{example}

\begin{example}[Finite dimension]\label{ex:findim}
  We consider a modification of the previous example:
  \begin{equation}\label{eq:findim}
    \dot{x}_1 = x_1 + x_1 u,\quad \dot{x}_2 = x_1^2 u.
  \end{equation}
  In this case $D_1 = x_1 \frac{\partial}{\partial x_1} + x_1^2 \frac{\partial}{\partial x_2}$.
  We have $D_1(x_1) = x_1$ and $D_1(x_2) = x_1^2$. 
  Therefore, $\{A(x_1) \mid A \in \mathcal{A}\} = \operatorname{span}\{ x_1 \}$ and $\{A(x_2) \mid A \in \mathcal{A}\} = \operatorname{span}\{ x_2, x_1^2\}$. 
  Therefore, \cref{prop:inp_general} implies that~\eqref{eq:findim} admits an input-free quadratization.
\end{example}
The condition in \cref{prop:inp_general} is related to locally finite derivations, as we explain next.

\begin{remark}[Algorithmic decidability of the condition~\eqref{eq:fin_dim_cond}]\label{rem:locally_finite}
  For the special case $r = 1$, the condition~\eqref{eq:fin_dim_cond} is equivalent to saying that the operator $D_1$ is \emph{locally finite}.
  A differential operator $D$ on a polynomial ring in variables $x_1, \ldots, x_\nn$ is called locally finite if the dimension of $\operatorname{span}\{x_i, D(x_i), D^2(x_i), \ldots\}$ is finite for every~$i$.
  Locally finite derivations appear in different contexts~\cite{locally_finite} but the problem of determining algorithmically if a given derivation is locally finite has been solved only for the bivariate case~\cite[Section~4]{locally_finite} and remains an open problem in the general case.
\end{remark}

\begin{proof}[Proof of proposition~\ref{prop:inp_general}]
  $\Leftarrow$: Assume that $\{ A(x_i) \mid A \in \mathcal{A}\}$ is finite dimensional for every $1 \leqslant i \leqslant \nn$.
  For each of these spaces we choose a basis, and let $\mathcal{M}$ be the union of these bases.
  Consider any $p(\bx) \in \mathcal{M} \cup \{x_1, \ldots, x_\nn\}$.
  Then
  \begin{equation}\label{eq:deriv_p}
      \dot{p}(\bx) = \sum\limits_{j = 1}^\nn p_{0, j}(\bx) \frac{\partial p}{\partial x_i} + \sum\limits_{i = 1}^r u_i D_i(p(\bx)).
  \end{equation}
  Since $p(\bx)$ is of the form $A(x_j)$ for some $A \in \mathcal{A}$ and $1 \leqslant j \leqslant \nn$, $D_i(p(\bx))$ is also of this form. Thus, $D_i(p(\bx))$ can be written as a linear combination of elements of $\mathcal{M} \cup \{x_1, \ldots, x_\nn\}$.
  Therefore, if we collect equations~\eqref{eq:deriv_p} for $p(\bx)$ ranging over $\mathcal{M} \cup \{x_1, \ldots, x_\nn\}$, we obtain an ODE system in the variables $\{x_1, \ldots, x_\nn\} \cup \mathcal{M}$, and this system satisfies the requirements of \cref{lem:input_bilinear}.
  Therefore, this system has an input-free quadratization, and, thus, the same is true for the original system.
  
  $\Rightarrow$: Assume that~\eqref{eq:inp_affine} has an input-free quadratization with the $\ell$-dimensional vector of new variables $\mathbf{w}(\bx)$. Let $\mathcal{X} := \operatorname{span}\{x_1, \ldots, x_\nn, w_1, \ldots, w_\ell\}$.
  We prove by induction on an integer $h \in \mathbb{Z}_{\geqslant 0}$ that, for every $1 \leqslant i_1, \ldots, i_h \leqslant r$ and $1 \leqslant j \leqslant \nn$, the element $[D_{i_1}\ldots D_{i_h}](x_j)$ belongs to $\mathcal{X}$.
  The base case $h = 0$ is true.
  Assume that the statement has been proven for $h$, and consider any $1 \leqslant i_1, \ldots, i_h \leqslant r$ and $1 \leqslant j \leqslant \nn$ and let $p(\bx) = [D_{i_1}\ldots D_{i_h}](x_j)$.
  Since $p(\bx) \in \mathcal{X}$ by the induction hypothesis, $\dot{p}(\bx)$ can be written as at most quadratic polynomial in $\mathcal{X}$ and $\bu$.
  From the expression~\eqref{eq:deriv_p} one can observe that this implies that, for every $1 \leqslant s \leqslant r$, $D_s(p(\bx)) = [D_sD_{i_1}\ldots D_{i_h}](x_j) \in \mathcal{X}$.
  Since this holds for any $1 \leqslant i_1, \ldots, i_h \leqslant r$ and $1 \leqslant j \leqslant \nn$, this proves the induction step.
  Since $\mathcal{X}$ is finite-dimensional, the dimension of $\{A(x_i) \mid A \in \mathcal{A}\}$ must be finite for every $1 \leqslant i \leqslant \nn$.
\end{proof}

Although it may be complicated to verify the general condition~\eqref{eq:fin_dim_cond} from~\cref{prop:inp_general}, there is an important special case when it can be easily verified.

\begin{lemma}\label{lem:triangular}
   In the notation of \cref{prop:inp_general}, if, for every $1 \leqslant i \leqslant r$ and $1 \leqslant j \leqslant \nn$,
   \[
     D_i(x_j) \in \mathbb{C}[x_{1}, \ldots, x_{j - 1}]
   \]
   holds, then the condition $\dim \{A(x_i) \mid A \in \mathcal{A}\} < \infty$ is fulfilled.
\end{lemma}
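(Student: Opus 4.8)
The plan is to introduce the set of \emph{locally finite} elements
\[
  L := \{\, f \in \mathbb{C}[\bx] \;:\; \dim \operatorname{span}\{A(f) : A \in \mathcal{A}\} < \infty \,\},
\]
and to prove the lemma in two stages: first, that $L$ is a $\mathbb{C}$-subalgebra of $\mathbb{C}[\bx]$; second, that every coordinate $x_j$ lies in $L$, which is precisely the condition~\eqref{eq:fin_dim_cond}. The triangular hypothesis $D_i(x_j) \in \mathbb{C}[x_1, \ldots, x_{j-1}]$ will enter only in the second stage, where it drives an induction on $j$; the first stage is purely structural and uses nothing about the triangular shape.

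For the first stage I would verify closure under the algebra operations. Closure under addition and scalar multiplication is immediate from $\operatorname{span}\{A(f+g)\} \subseteq \operatorname{span}\{A(f)\} + \operatorname{span}\{A(g)\}$. The substantive point is closure under multiplication, and here the crucial fact is that each $D_i$ is a \emph{derivation} of $\mathbb{C}[\bx]$, so $D_i(uw) = D_i(u)\,w + u\,D_i(w)$. Given locally finite $f, g$, I would set $U := \operatorname{span}\{A(f) : A \in \mathcal{A}\}$ and $W := \operatorname{span}\{A(g) : A \in \mathcal{A}\}$, both finite-dimensional. Because $\mathcal{A}$ is closed under left multiplication by each $D_i$, we have $D_i(A(f)) = (D_iA)(f) \in U$, so $U$ (and likewise $W$) is $D_i$-invariant; the Leibniz rule then shows the finite-dimensional product space $U\cdot W$ is $D_i$-invariant, hence $\mathcal{A}$-invariant. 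Since $fg \in U\cdot W$, we conclude $\operatorname{span}\{A(fg)\} \subseteq U\cdot W$, so $fg \in L$. I expect this multiplicative closure to be the main obstacle: it is where the whole argument rests, and it genuinely needs the derivation property together with the left-multiplication invariance of $U$ and $W$.

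For the second stage I would prove $x_j \in L$ by strong induction on $j$. By the subalgebra property just established, $L$ contains every polynomial in the coordinates already shown to be locally finite; in particular, once $x_1, \ldots, x_{j-1} \in L$, we obtain $\mathbb{C}[x_1, \ldots, x_{j-1}] \subseteq L$. Now any word $D_{i_1}\cdots D_{i_h}(x_j)$ with $h \geq 1$ equals $D_{i_1}\cdots D_{i_{h-1}}\bigl(D_{i_h}(x_j)\bigr)$, and the triangular hypothesis gives $D_{i_h}(x_j) \in \mathbb{C}[x_1,\ldots,x_{j-1}] \subseteq L$. Hence every such word lies in the finite-dimensional space $\sum_{i=1}^r \operatorname{span}\{B(D_i(x_j)) : B \in \mathcal{A}\}$, and adjoining $\operatorname{span}\{x_j\}$ (the $h=0$ term) keeps the total span finite-dimensional. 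Therefore $x_j \in L$, which completes the induction and establishes~\eqref{eq:fin_dim_cond}. The base case $j = 1$ is handled by the same computation, since then $D_i(x_1) \in \mathbb{C}[x_1,\ldots,x_0] = \mathbb{C} \subseteq L$ as constants are trivially locally finite.
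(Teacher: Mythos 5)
Your proof is correct, but it takes a genuinely different route from the paper's. The paper argues directly that the triangular hypothesis forces nilpotency: it claims that any product $D_{i_1}\cdots D_{i_\nn}$ of $\nn$ of the generators annihilates every coordinate $x_j$, whence $\dim\{A(x_j) \mid A \in \mathcal{A}\} \leqslant 1 + r + \cdots + r^{\nn-1}$. That argument is shorter and yields an explicit dimension bound, but as literally stated it only works when the polynomials $D_i(x_j)$ have degree at most one: already for $\nn = 2$, $r = 1$ with $D_1(x_1) = 1$ and $D_1(x_2) = x_1^2$ one has $D_1^2(x_2) = 2x_1 \neq 0$, so the length-$\nn$ nilpotency (and the stated numerical bound) fails, although a longer product does vanish and finiteness still holds. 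Your argument sidesteps this entirely: the observation that the locally finite elements form a subalgebra --- which rests precisely on the two facts you isolate, the Leibniz rule and the $D_i$-invariance of $\operatorname{span}\{A(f) : A \in \mathcal{A}\}$ coming from the left-multiplication closure of $\mathcal{A}$ --- reduces the lemma to the triangular induction on $j$, where only membership of $D_i(x_j)$ in an already-locally-finite subalgebra is needed, with no degree control. What you give up is the explicit bound on $\dim\{A(x_j)\}$; what you gain is a proof that is robust to the degrees of the $D_i(x_j)$ and a reusable structural fact (the subalgebra property of $L$) relevant to \cref{rem:locally_finite}. One small point to make explicit: $fg \in U\cdot W$ requires $f \in U$ and $g \in W$, i.e., that $\mathcal{A}$ contains the identity; this is consistent with how $\mathcal{A}$ is used elsewhere (e.g., \cref{ex:findim}), and it is also why your $h = 0$ term must be adjoined separately in the second stage.
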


\begin{proof}
  The assumption that, for every $1 \leqslant i \leqslant r$ and $1 \leqslant j \leqslant \nn$, $D_i(x_j)$ depends only on $x_k$ with $k > j$ implies that, for every $1 \leqslant i_1, \ldots, i_\nn \leqslant \nn$, the product $D_{i_1}\ldots D_{i_\nn}$ sends each of $x_1, \ldots, x_\nn$ to zero.
  Therefore, the dimension of each $\{A(x_i) \mid A \in \mathcal{A}\}$ does not exceed $1 + r + \ldots + r^{\nn - 1}$.
\end{proof}

%%%%%%%%%%%%%%%%%%%%%%%%%%%%%%%%%%%%%%
\section{Dimension-agnostic quadratizations of families of ODE systems}\label{sec:agnostic}
This section presents a new quadratization method that is applicable to families of ODE systems of variable dimension, for which it produces a dimension-agnostic quadratization. The most natural use-case is ODEs that are derived via semi-discretization of PDEs, i.e., where the symbolic form of the nonlinear terms stays the same but the discretization dimension $\nx$ of the system can be varied. 
The current way to do this is to compute a quadratization for every $\nx$ separately. 
However, this is time-consuming for large $\nx$ and appears unnecessary. The methods derived in this section improve upon this by exploiting similar structures in these systems.
We note that while semi-discretized PDEs are presented in the first example, the methods herein apply to a wider class of problems than that, as we see below. 
To motivate the setup and subsequent definitions, let us start with an example.

\begin{example} \label{ex:PDE_quadratization}
Consider a special case of a dimensionless scalar PDE used in traffic flow modeling~\cite[\S 12.6]{haberman2019applied}:
\[
  \frac{\partial \rho}{\partial t}(t, \xi) = \rho(t, \xi) + \rho^2(t, \xi) \frac{\partial \rho(t, \xi)}{\partial \xi}, \qquad \rho(t, 0) = 0, \quad \rho(t, 1) = 1,\; 
\]
where $\rho(t,\xi)$ is the traffic density, $t$ denotes times, and $\xi \in [0, 1]$ is the spatial variable. Let the initial condition be a square-integrable function $\rho(0, \xi) = \varphi(\xi)$. We discretize the spatial domain uniformly and denote the density at the nodal values as $x_{i}^{[\nx]}(t) = \rho(t, i/(\nx + 1))$  for $1 \leq i \leq \nx$. 
This defines the $\nx$-dimensional state vector $\bx^{[\nx]}(t) = [x_{1}^{[\nx]}(t), \ldots, x_{\nx}^{[\nx]}(t)]^\top$. 
We approximate the spatial derivative by a backward first-order differencing formula as
\[
  \frac{\partial \rho}{\partial \xi}(t, i/(\nx + 1)) \approx \frac{\rho(t, i/(\nx+1)) - \rho(t, (i - 1)/(\nx+1))}{\Delta \xi},
\]
where $\Delta \xi = 1/(\nx+1)$, which then yields the $\nx$-dimensional system of ODEs
\begin{equation}\label{eq:ex_discr}
  \dot{\bx}^{[\nx]} = \bx^{[\nx]} + (\bx^{[\nx]})^2 \odot (\bD\bx^{[\nx]}),
\end{equation}
where $\odot$ denotes the Hadamard (or element-wise) product and where
\begin{equation}\label{eq:ex_Dmatrix}
  \bD = \frac{1}{\Delta \xi} \left[
    \begin{array}{cccccc}
        1 & 0 &  & & &\\
        -1 & 1 & 0 & & & \\
        0&  -1 & 1 & & &\\
        & & \ddots & \ddots & \ddots &  \\
        & & & -1 & 1 & 0 \\
         & & & 0 & -1 & 1 \\
    \end{array}
    \right]\in\mathbb{R}^{\nx \times \nx}.
\end{equation}
Our goal is to quadratize~\eqref{eq:ex_discr} for arbitrary $\nx$; for brevity we drop the superscript $[\nx]$ in what follows.
Consider the vectors $\bw_1 = \bx^2$ and $\bw_2 = \bx \odot (\bS \bx)$, where $\bS$ is the lower shift matrix (with ones on the subdiagonal and zeros elsewhere), i.e. $\bS\bx = [0, \; x_1, \;\ldots, \; x_{\nx - 1}]^\top$.
We claim that the nonzero entries of $\bw_1$ and $\bw_2$ are a quadratization of~\eqref{eq:ex_discr} for any $\nx$.
The original system can be now written as:
\begin{equation}\label{eq:pde_discr}
  \dot{\bx} = \bx + \bw_1 \odot (\bD \bx).
\end{equation}
Furthermore, using $\bD\bx = \frac{1}{\Delta \xi}(\bx - \bS\bx)$ we have:
\begin{align}\label{eq:pde_ex_quadratized}
\begin{split}
  \dot{\bw}_1 &= 2\bx(\bx + \bx^2 \odot (\bD\bx)) = 2\bw_1 + \frac{2}{\Delta \xi}\bw_1(\bw_1 - \bw_2),\\
  \dot{\bw}_2 &= (\bx + \bx^2 \odot \frac{1}{\Delta \xi}(\bx - \bS\bx))(\bS \bx) + \bx (\bS \bx + (\bS\bx)^2\odot \frac{1}{\Delta \xi}(\bS \bx - \bS^2 \bx)) = \\
  &= 2\bw_2 + \frac{1}{\Delta \xi}(\bw_1\odot \bw_2 - \bw_2^2 + \bw_2\odot(\bS \bw_1) - \bw_2\odot(\bS \bw_2)) 
\end{split}
\end{align}
Therefore, $\bw_1 = \bw^{[\nx]}_1$ and $\bw_2 = \bw^{[\nx]}_2$ is a quadratization of~\eqref{eq:ex_discr} for every $\nx$, which we refer to as a \emph{dimension-agnostic} quadratization (see~\cref{def:da} for a formal definition). 
Note, that here we can solve for $\bw_1$ and $\bw_2$ and then recover $\bx = \sqrt{\bw_1}$, so the state equations are decoupled and we do not need to solve the equation for $\dot{\bx}$. 
\end{example}

We highlight that there are two types of quadratizing variables in the above example: \emph{uncoupled} variables $\bw_1$, where each variable depends only on one of the $x_i$'s, and \emph{coupled} variables $\bw_2$, where each variable involves two different $x_i$ and $x_j$ so that the $i$th and $j$th equations are coupled, which is the case when $|i - j| = 1$.

We next provide a formal definition for the class of systems we consider. If those are derived from PDEs, then $\nd$ is the number of dependent variables in the polynomial PDE system. For example, $\nd = 1$ for \Cref{ex:PDE_quadratization} and $\nd = 2$ for the solar wind example~\eqref{eq:hux_poly} in \cref{sec:heliospheric} which was lifted with one additional variable to polynomial form. Moreover, $\nx$ is the dimension of spatial discretization. 

\begin{definition}[Family of linearly coupled polynomial ODEs]\label{def:family}
    Let $\bp_0(\bx), \ldots, \bp_\nd(\bx) \in \mathbb{C}[\bx]^\nd$ be $\nd + 1$ vectors with entries polynomial in $\bx = [x_1, \ldots, x_\nd]^\top$.
    For such $\bp_0(\bx), \ldots, \bp_\nd(\bx)$ we assign \emph{a family of linearly coupled ODE systems} as follows.
    Consider a positive integer $\nx$ and matrices $\bD_1, \ldots, \bD_\nd \in \mathbb{C}^{\nx \times \nx}$.
    We construct an ODE system for $\nn = \nd \cdot \nx$ variables $x^{[\nx]}_{i, j}$ with $1 \leqslant i \leqslant \nx$ and $1 \leqslant j \leqslant \nd$.
    A vector of these variables is denoted by $\bx^{[\nx]}$, and we define
$\bx^{[\nx]}_{i}=[x^{[\nx]}_{i, 1}, \ldots, x^{[\nx]}_{i, \nd}]^\top$ and $\bx^{[\nx]}_{*, j} = [x^{[\nx]}_{1, j}, \ldots, x^{[\nx]}_{\nx, j}]^\top$.
    Then the ODE system defined by the polynomial vectors $\bp_0(\bx), \ldots, \bp_\nd(\bx)$ and constant matrices $\bD_1, \ldots, \bD_\nd$ is
    \begin{equation}\label{eq:from_template}
    \dot{\bx}^{[\nx]}_{i} = \bp_0(\bx^{[\nx]}_{i}) + \sum\limits_{j = 1}^\nd \bp_j(\bx^{[\nx]}_{i}) (\bD_j\bx^{[\nx]}_{*, j})_i, \quad i=1,2,\ldots, \nx
    \end{equation}
    We refer to the ODE system~\eqref{eq:from_template} as $\cF^{[\nx]}(\bp_0, \ldots, \bp_\nx, \bD_1, \ldots, \bD_\nd)$.
\end{definition}

\begin{example}[Vectors $\bp_0, \ldots, \bp_\nd$ for~\eqref{eq:ex_discr}]
    Recall the system \eqref{eq:ex_discr} where $\nd = 1$. We can rewrite this system into the form of equation~\eqref{eq:from_template}  with $\bp_0(x) = [x]$ and $\bp_1(x) = [x^2]$ and as well as the matrix $\bD_1=\bD$ from~\eqref{eq:ex_Dmatrix}.
\end{example}

Since the infinite family of ODE systems~\eqref{eq:from_template} is defined by a finite number of variables, that is, by $\nd + 1$ polynomial vectors $\bp_0(\bx), \ldots, \bp_\nd(\bx)$, it is natural to define an entire family of quadratizations also by some finite amount of variables.

\begin{definition}[Dimension-agnostic quadratization]\label{def:da}
    Consider a family of linearly coupled ODEs defined by $\nd + 1$ polynomial vectors $\bp_0(\bx), \ldots, \bp_\nd(\bx)$ in $\bx = [x_1, \ldots, x_\nd]^\top$, and consider an $\ell$-dimensional vector $\bw_1(\bx) \in \mathbb{C}[\bx]^\ell$ together with an $L$-dimensional vector $\bw_2(\bx, \tilde{\bx}) \in \mathbb{C}[\bx, \tilde{\bx}]^L$, where $\tilde{\bx} = [\tilde{x}_1, \ldots, \tilde{x}_\nd]^\top$ are formal variables used as placeholders for the coupled variables (to be made precise below). 
     For every integer $\nx$ and matrices $\bD_1, \ldots, \bD_\nd \in \mathbb{C}^{\nx \times \nx}$, we define a set 
    \begin{multline}\label{eq:Magnostic}
      \cM(\bw_1, \bw_2; \bD_1, \ldots, \bD_\nd) := \{ \bw_{1}(\bx^{[\nx]}_{i}) \mid 1 \leqslant i \leqslant \nx\} \;\cup \\  \{ \bw_{2}(\bx^{[\nx]}_{ i_0}, \bx^{[\nx]}_{i_1}) \mid 1 \leqslant i_0 \neq i_1 \leqslant \nx, \text{ and }\exists k\colon (\bD_k)_{i_0, i_1} \neq 0 \}.
    \end{multline}
    Then we say that $\bw_1$ and $\bw_2$ are a \emph{dimension-agnostic quadratization} of the family if, for every integer $\nx$ and matrices $\bD_1, \ldots, \bD_\nd\in \mathbb{C}^{\nx \times \nx}$, $\cM(\bw_1, \bw_2; \bD_1, \ldots, \bD_\nd)$
    is a quadratization of $\cF^{[\nx]}(\bp_0, \ldots, \bp_\nd, \bD_1, \ldots, \bD_\nd)$.
\end{definition}

\begin{example}[Vectors $\bw_1$ and $\bw_2$ for~\eqref{eq:ex_discr}]\label{ex:da_quadr_basic}
    Recall that in \cref{ex:PDE_quadratization} we obtained a quadratization with the vectors
    \[
      \bw_1^{[\nx]} = [(x_1^{[\nx]})^2, \ldots, (x_\nx^{[\nx]})^2]^\top \quad \text{and} \quad \bw_2^{[\nx]} = [0, x_1^{[\nx]}x_2^{[\nx]}, x_2^{[\nx]}x_3^{[\nx]}, \ldots, x_{\nx - 1}^{[\nx]}x_\nx^{[\nx]}]^\top.
    \]
    We claim that  these new variables are $\cM(\bw_1, \bw_2; \bD)$, where $\bw_1 = x^2$ and $\bw_2 = x\cdot \tilde{x}$ in this case  are scalars.
    Indeed, $\bw_1 = x^2$ gives rise to $\bw_1^{[\nx]}$ by~\eqref{eq:Magnostic}.
    Since the off-diagonal nonzero entries of $\bD$~\eqref{eq:ex_Dmatrix} are at the $(i, i - 1)$ locations for $2 \leqslant i \leqslant \nx$, so the new variables in~\eqref{eq:Magnostic} coming from $\bw_2 = x\cdot \tilde{x}$ are obtained by setting $x = x_i^{[\nx]}$ and $\tilde{x} = x_{i - 1}^{[\nx]}$ for every $2 \leqslant i \leqslant \nx$ yielding  the nonzero entries of $\bw_2^{[\nx]}$.
    Assume, for example, that we had periodic boundary conditions and, thus (using $\Delta \xi = \frac{1}{n}$)
    \[
    \bD_{\operatorname{per}} = \frac{1}{\Delta \xi} \left[
    \begin{array}{cccccc}
        1 & 0 &  & & & -1\\
        -1 & 1 & 0 & & & \\
        0&  -1 & 1 & & &\\
        & & \ddots & \ddots & \ddots &  \\
        & & & -1 & 1 & 0 \\
         & & & 0 & -1 & 1 \\
    \end{array}
    \right]\in\mathbb{R}^{\nx \times \nx},
    \]
    which has an additional nonzero off-diagonal entry (the $(1, \nx)$th entry).
    This results in
    \[
      \cM(\bw_1, \bw_2; \bD_{\operatorname{per}}) = \cM(\bw_1, \bw_2; \bD) \cup \{ x_1^{[\nx]}x_{\nx}^{[\nx]}\}.
    \]
    Furthermore, it is possible to show that $\cM(\bw_1, \bw_2; \bD)$ is a quadratization of~\eqref{eq:ex_discr} for any complex-valued matrix $\bD$ (see~\Cref{ex:use_agnostic}).
\end{example}

Although the requirements on the dimension-agnostic quadratization---that is, having the same ``shape'' of new variables for all dimensions and matrices---may appear restrictive, we show that such a quadratization always exists.

\begin{theorem}\label{thm:existencEgnostic}
  Every family of linearly coupled ODEs admits a monomial dimension-agnostic quadratization; this can be chosen such that $\deg_{\tilde{\bx}} \bw_{2}(\bx, \tilde{\bx}) = 1$.
  
  Furthermore, if the family is defined by $\nd + 1$ polynomial vectors $\bp_0(\bx), \ldots,\bp_\nd(\bx)$ with $d := \max\limits_{0 \leqslant j \leqslant \nd}\deg \bp_j(\bx)$, then there exists a dimension-agnostic quadratization with $\ell \leqslant \binom{\nd + d}{d}$ and $L \leqslant \nd\binom{\nd + d}{d}$ (in the notation of~\cref{def:da}).
\end{theorem}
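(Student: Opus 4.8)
The plan is to mimic the ``take all low-degree monomials'' construction used for \cref{thm:existence_simple} and \cref{thm:existence_inputs}, but now applied nodewise so that the resulting new variables are described by fixed templates $\bw_1, \bw_2$ independently of $\nx$ and of the coupling matrices $\bD_1, \ldots, \bD_\nd$. Concretely, I would let $\bw_1(\bx)$ be the list of all monomials in $\bx = [x_1, \ldots, x_\nd]^\top$ of total degree at most $d$, and let $\bw_2(\bx, \tilde{\bx})$ be the list of all products $m(\bx)\,\tilde{x}_t$ with $\deg m \leqslant d$ and $1 \leqslant t \leqslant \nd$. This immediately gives $\deg_{\tilde{\bx}} \bw_2 = 1$, and since the number of monomials of degree at most $d$ in $\nd$ variables equals $\binom{\nd + d}{d}$, it yields the stated bounds $\ell \leqslant \binom{\nd+d}{d}$ and $L \leqslant \nd\binom{\nd + d}{d}$. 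It then remains to show that, for every $\nx$ and every choice of $\bD_1, \ldots, \bD_\nd$, the set $\cM(\bw_1, \bw_2; \bD_1, \ldots, \bD_\nd)$ from~\eqref{eq:Magnostic} is a quadratization of $\cF^{[\nx]}$.

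The core computation is to differentiate each element of $\cM$ using~\eqref{eq:from_template} and the chain rule, and to check that every resulting monomial is a product of two elements of $\cM \cup \{x^{[\nx]}_{i,j}\}$. For an uncoupled variable $m(\bx^{[\nx]}_i)$ the derivative splits into a purely node-$i$ part $\sum_l (\partial_{x_l} m)(\bx_i) p_{0,l}(\bx_i)$, whose monomials have degree at most $2d$ and hence factor into two uncoupled variables of degree at most $d$, plus coupling terms $(\bD_j)_{i,i'}\,g_j(\bx_i)\,x^{[\nx]}_{i',j}$ with $g_j := \sum_l (\partial_{x_l}m)\,p_{j,l}$ of degree at most $2d-1$. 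When $i' = i$ these stay within node $i$; when $i' \neq i$ the coefficient $(\bD_j)_{i,i'} \neq 0$ certifies that the ordered pair $(i, i')$ is admissible in~\eqref{eq:Magnostic}, so each such monomial factors as (uncoupled in $i$)$\,\times\,$(coupled variable on the pair $(i,i')$), both lying in $\cM$.

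The main obstacle is the derivative of a coupled variable $m(\bx_{i_0})\,x_{i_1,t}$, because differentiating through both arguments produces monomials that may involve up to three distinct node indices $i_0, i_1, i'$, while the constraint $\deg_{\tilde{\bx}}\bw_2 = 1$ forces each admissible factor to touch at most two nodes and to be of degree exactly one in its ``second'' node. The observation that makes the case analysis close is that every such monomial carries a coefficient $(\bD_j)_{a,b}$, and a nonzero coefficient is exactly the admissibility certificate for the ordered pair $(a,b)$ in~\eqref{eq:Magnostic}; in particular, when the coupling acts through the second argument $i_1$ one is handed the \emph{reverse} pair $(i_1, i_0)$, and the coupled variable $\bw_2(\bx_{i_1}, \bx_{i_0})$ one needs is guaranteed to be present. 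I would organize the verification into the terms coming from $\dot{m}(\bx_{i_0})\,x_{i_1,t}$ and from $m(\bx_{i_0})\,\dot{x}_{i_1,t}$, and within each split by whether the summation index coincides with $i_0$, coincides with $i_1$, or is distinct from both; a representative factorization is $\mu(\bx_{i_0})\,x_{i',j}\,x_{i_1,t} = \bigl[\mu_1(\bx_{i_0})x_{i',j}\bigr]\bigl[\mu_2(\bx_{i_0})x_{i_1,t}\bigr]$ with $\mu = \mu_1\mu_2$ and $\deg\mu_1,\deg\mu_2 \leqslant d$.

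Finally, I would record the elementary degree bookkeeping underlying every factorization: each single-node monomial arising above has degree at most $2d$, so it always splits into two monomials of degree at most $d$, and each ``degree $d+1$'' node block (such as $m\,x_{i_1,j}$) can have one variable peeled off into the coupled factor, using $d+1 \leqslant 2d$ for $d \geqslant 1$ (the case $d = 0$ is linear and needs no new variables). Collecting the equations for $\bx^{[\nx]}$ and for all elements of $\cM$ then exhibits a system with quadratic right-hand side, proving that $\bw_1, \bw_2$ form a monomial dimension-agnostic quadratization of the claimed order with $\deg_{\tilde{\bx}}\bw_2 = 1$.
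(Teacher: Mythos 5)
Your proposal is correct and matches the paper's proof in all essentials: the same explicit choice of $\bw_1$ (all monomials of degree at most $d$ in $\bx$) and $\bw_2$ (all $m(\bx)\tilde{x}_t$ with $\deg m \leqslant d$), the same counts $\binom{\nd+d}{d}$ and $\nd\binom{\nd+d}{d}$, and the same chain-rule verification that every derivative is quadratic in $\cM$, including the key point that a nonzero entry $(\bD_k)_{a,b}$ attached to a cross-node monomial is exactly the admissibility certificate for the pair $(a,b)$ in~\eqref{eq:Magnostic}. The only cosmetic difference is organizational: the paper observes that every monomial of $\dot{x}^{[\nx]}_{i,k}$ and of $\partial w/\partial x^{[\nx]}_{i,k}$ already lies in $\cM$, so each Leibniz product is automatically a product of two elements of $\cM$, whereas you expand to final monomials and re-split them (e.g.\ $\mu = \mu_1\mu_2$), which is a slightly longer but equivalent bookkeeping.
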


\begin{proof}
  We construct such a quadratization explicitly.
  Let $w_{1, 1}(\bx), \ldots, w_{1, \ell}(\bx)$ be the set of all monomials of degree at most $d$ in $\bx$ and let $w_{2, 1}(\bx, \tilde{\bx}), \ldots, w_{2, L}(\bx, \tilde{\bx})$ be the set of all $2n$-variate monomials of degree at most one in $\tilde{\bx}$ degree at most $d$ in $\bx$.
  Then $\ell = \binom{\nd + d}{d}$ and $L = \nd\binom{\nd + d}{d}$.
  We next show that these $\bw_1 = [w_{1, 1}(\bx), \ldots, w_{1, \ell}(\bx)]^\top$ and $\bw_2 = [w_{2, 1}(\bx, \tilde{\bx}), \ldots, w_{2, L}(\bx, \tilde{\bx})]^\top$ provide a dimension-agnostic quadratization for the family defined by $\bp_0(\bx), \ldots,\bp_\nd(\bx)$.
  
  In order to prove this, consider arbitrary $\nx$ and matrices $\bD_1, \ldots, \bD_\nd \in \mathbb{C}^{\nx \times \nx}$.
  We consider the corresponding set $\cM := \cM(\bw_1, \bw_2; \bD_1, \ldots, \bD_\nd)$ as in~\eqref{eq:Magnostic}.
  First we observe that every monomial in the right-hand side of $\cF^{[\nx]}(\bp_0, \ldots,\bp_\nd, \bD_1, \ldots, \bD_\nd)$ (see~\eqref{eq:from_template}) belongs to $\cM$.
  Indeed, consider $\dot{\bx}_{i}^{[\nx]}$ for some $1 \leqslant i \leqslant \nx$. 
  If a monomial in $\dot{\bx}_{i}^{[\nx]}$ comes from the $\bp_0(\bx_{i}^{[\nx]})$, then it is the form $m(\bx_{i}^{[\nx]})$ with $\deg m \leqslant d$ and, thus, can be written as $w_{1, k}(\bx_{i}^{[\nx]})$ for some $1 \leqslant k \leqslant \ell$.
  Otherwise, it is of the form $m(\bx_{i}^{[\nx]}) x_{i_0, j}$ for some $1 \leqslant j \leqslant \nd$ and $1 \leqslant i_0 \leqslant \nx$ such that $\dot{\bx}^{[\nx]}_i$ depends on $\bx_{i_0}^{[\nx]}$ and $\deg m \leqslant d$, and therefore can be presented as $w_{2, k}(\bx_{i}^{[\nx]}, \bx_{i_0}^{[\nx]})$ for some $1 \leqslant k \leqslant L$.
  Now we consider any element of $m(\bx^{[\nx]}) \in \cM$.
  It belongs to either $\bw_1(\bx^{[\nx]}_{i})$ for some $i$ or to $\bw_2(\bx^{[\nx]}_{i_0}, \bx^{[\nx]}_{i_1})$ for some $i_0, i_1$.
  Consider the case of $\bw_1(\bx^{[\nx]}_{i})$:
  \[
    \dot{\bw}_1(\bx^{[\nx]}_{i}) = \sum\limits_{k = 1}^\nd \dot{x}_{i, k}^{[\nx]} \frac{\partial \bw_{1}(\bx^{[\nx]}_{i})}{\partial x^{[\nx]}_{i, k}}.
  \]
  Since every monomial of $\dot{x}^{[\nx]}_{i, k}$ belongs to $\cM$ and every monomial in $\frac{\partial \bw_{1}(\bx^{[\nx]}_{i})}{\partial x^{[\nx]}_{i, k}}$ belongs to $\cM$, the right-hand side of the equation above is quadratic in $\cM$.
  The case of $m(\bx^{[\nx]})$ belonging to $\bw_{2}(\bx^{[\nx]}_{i_0}, \bx^{[\nx]}_{i_1})$ is completely analogous.
\end{proof}

While this existence result is encouraging and yields a dimension-agnostic quadratization, the quadratization from the proof of \cref{thm:existencEgnostic} is large.
To find a more compact quadratization, one would need to check if arbitrary $\bw_1(\bx)$ and $\bw_2(\bx, \tilde{\bx})$ yield a dimension-agnostic quadratization.
Checking this directly from~\cref{def:da} is not possible since the definition involves a universal quantifier.
The next proposition gives a simple way of checking this.
It turns out that it is sufficient to consider the case $n = 4$ with particular matrices $\bD_i$.
Note that, although the matrices $\bD_i$ corresponding to a particular difference scheme of interest will be likely different from the ones given in the proposition below, it is guaranteed that a dimension-agnostic quadratization which works for the matrices in the proposition will work for any other matrices as well (for the way this works in practice, see~\Cref{sec:alg_agnostic}).
The intuition behind this is that $n = 4$ is sufficient to exhibit all important combinatorial types of coupling, and, thus, a general matrix will always look locally as one of the fragments of the $4\times 4$ matrix from the proposition. The proof of the next proposition formalizes this.

\begin{proposition}\label{prop:agnostic}
  Consider a family of linearly coupled ODEs defined by $\nd + 1$ polynomial vectors $\bp_0(\bx), \ldots,\bp_\nd(\bx)$.
  Then polynomial vectors $\bw_1(\bx)$ and $\bw_2(\bx, \tilde{\bx})$ yield a monomial dimension-agnostic quadratization of the family if and only if $\cM(\bw_1, \bw_2; \bD_1^*, \ldots, \bD_\nd^*)$ from equation~\eqref{eq:Magnostic} is a quadratization for $\cF^{[\nx]}(\bp_0, \ldots,\bp_\nd,\bD_1^*, \ldots, \bD_\nd^*)$, where $\nx = 4$ and $\bD_1^*, \ldots, \bD_\nd^*$ are defined as follows:
  \[
  \bD_i^* = \begin{pmatrix}
    a_i & b_i & 0 & c_i \\ 
    0 & d_i & e_i & 0\\ 
    0 & 0 & f_i & 0\\
    0 & 0 & 0 & g_i
  \end{pmatrix},
  \]
  where $1 \leqslant i \leqslant \nd$ and $a_i, b_i, c_i, d_i, e_i, f_i, g_i$ are scalar parameters.
\end{proposition}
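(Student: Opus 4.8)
The forward implication is immediate: a dimension-agnostic quadratization is required by \cref{def:da} to work for \emph{every} $\nx$ and every choice of coupling matrices, so in particular for $\nx = 4$ and the matrices $\bD_1^*, \ldots, \bD_\nd^*$. The content is the converse, which I would prove by arguing that verifying the quadratization property is an inherently \emph{local} and \emph{combinatorial} task, and that the single test case $(\nx=4,\ \bD_i^*)$ already exhibits every local coupling pattern that can occur.

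First I would reduce the quadratization property to monomial divisibility checks. Since $\bw_1,\bw_2$ are monomial, every element of $\cM:=\cM(\bw_1,\bw_2;\bD_1,\ldots,\bD_\nd)$ is a monomial in the $x^{[\nx]}_{i,j}$, and (exactly as in the proofs of \cref{thm:existence_inputs,thm:existencEgnostic}) the statement ``$\cM$ is a quadratization'' is equivalent to: every monomial occurring in $\dot m$, for $m$ ranging over $\cM\cup\{x^{[\nx]}_{i,j}\}$, is a product of at most two elements of $\cM\cup\{x^{[\nx]}_{i,j}\}\cup\{1\}$. The crucial structural fact is then read off from~\eqref{eq:from_template}: each $\dot m$ is a sum of terms supported on at most three node-indices, namely the one or two indices already carried by $m$ together with at most one further index $i'$ contributed by a single linear factor $x_{i',j}$ coming from a coupling $(\bD_k)_{\,\cdot\,,\,i'}\neq 0$. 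Hence whether a given monomial is expressible depends only on the induced pattern of directed edges among these $\leqslant 3$ indices.

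Next I would enumerate these local patterns and match each to a fragment of the $4\times 4$ system. For the derivative of a state or of an uncoupled variable $\bw_1(\bx^{[\nx]}_i)$ the patterns are a single node (from $\bp_0$ and the diagonal couplings) and a single directed edge $i\to i'$, realized at the self-coupled nodes (node $3$ or $4$) and at any edge (e.g.\ $1\to 2$) of $\bD_i^*$. For the derivative of a coupled variable $\bw_2(\bx^{[\nx]}_{i_0},\bx^{[\nx]}_{i_1})$ the patterns are: two-node terms from $\bp_0$, from the self-couplings of $i_0$ and $i_1$, and from the central edge $i_0\to i_1$; and genuine three-node terms in which the home index $i_0$ or the coupled index $i_1$ acquires a fresh neighbor $i'$. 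The matrix $\bD_i^*$ is built precisely so that the variable $\bw_2(\bx_1,\bx_2)$ sees the central edge $1\to 2$, a home-side branch $1\to 4$ to a fresh node, and a coupled-side branch $2\to 3$ to a \emph{different} fresh node; this is exactly why four indices are needed. For each pattern I would fix a node identification (e.g.\ $i_0\mapsto 1,\ i_1\mapsto 2,\ i'\mapsto 4$ for a home-side fresh neighbor, and $\mapsto 3$ for a coupled-side one), invoke the hypothesis to factor the corresponding monomial inside $\cM(\bw_1,\bw_2;\bD_1^*,\ldots,\bD_\nd^*)$, and transfer the factorization back. Transfer is sound because the template uses only edges that are guaranteed present in the general pattern, while any \emph{extra} edges of the general matrix merely enlarge the available factor set.

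The step I expect to be the main obstacle is the set of \emph{coincidence} cases, where the third index $i'$ collapses onto $i_0$ or $i_1$, or where a reverse edge $i_1\to i_0$ is present so that the (off-diagonally upper triangular) $\bD_i^*$ does not literally display it. Home-side coincidences reduce to the diagonal/central two-node patterns, which are tested directly; but a coupled-side reverse edge produces a term of $\dot{\bw}_2(\bx_{i_0},\bx_{i_1})$ that appears nowhere in the $\bD^*$-system. I would handle this by viewing such a term as the degeneration $i'\to i_0$ of the three-node coupled-side pattern realized by the branch $2\to 3$: the template factorization of the non-degenerate monomial uses only $\bw_1$'s, states, the central variable $\bw_2(\bx_1,\bx_2)$, and the branch variable $\bw_2(\bx_2,\bx_3)$, and since in $\bD_i^*$ the fresh node $3$ is adjacent only to $2$ (never to $1$), the identification $3\mapsto i_0$ turns $\bw_2(\bx_2,\bx_3)$ into $\bw_2(\bx_{i_1},\bx_{i_0})$ — a legitimate element of $\cM$ precisely because the reverse edge $i_1\to i_0$ is present — while never demanding an ill-defined self-coupled variable $\bw_2(\bx_{i_0},\bx_{i_0})$. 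Checking that every reverse-edge (and every other coincidence) term is the image of a genuine three-node term under an admissible identification is the delicate bookkeeping that makes $\nx=4$ with these particular $\bD_i^*$ sufficient; once it is in place, the remaining patterns are routine.
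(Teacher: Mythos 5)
Your proposal is correct and follows essentially the same route as the paper's proof: both directions reduce the universal claim to the $\nx=4$ template by noting that every monomial in a derivative is supported on at most three node indices, matching each local coupling pattern to a fragment of the $\bD_i^*$ (the central edge $1\to 2$, the home-side branch $1\to 4$, the coupled-side branch $2\to 3$, and the diagonal self-loops), and transferring the resulting factorization back through a node identification, using that nodes $3$ and $4$ are each coupled to only one other node so no illegitimate coupled variable is ever demanded. The one place where you go beyond the paper is the reverse-edge coincidence: the paper's case 1 asserts that any monomial of $\dot{\bw}_2(\bx^{[\nx]}_{i_0},\bx^{[\nx]}_{i_1})$ supported only on nodes $i_0,i_1$ reappears verbatim in $\dot{\bw}_2(\bx^{[4]}_{1},\bx^{[4]}_{2})$, which is not literally true for terms generated by a coupling $(\bD_k)_{i_1,i_0}\neq 0$, since $(\bD_k^*)_{2,1}=0$. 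Your degeneration argument --- realize the term as the specialization $3\mapsto i_0$ of the $2\to 3$ branch, under which $\bw_2(\bx^{[4]}_{2},\bx^{[4]}_{3})$ becomes $\bw_2(\bx^{[\nx]}_{i_1},\bx^{[\nx]}_{i_0})$, available in $\cM$ exactly because the reverse edge is present --- supplies precisely the missing step, so your write-up is, if anything, more complete than the paper's own proof.
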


\begin{proof}
  Assume that $\cM_0 := \cM(\bw_1, \bw_2; \bD_1^*, \ldots, \bD_\nd^*)$ is a quadratization for $\cF_4 := \cF^{[4]}(\bp_0, \ldots,\bp_\nd,\bD_1^*, \ldots, \bD_\nd^*)$ and consider arbitrary $\nx$ and matrices $\bD_1, \ldots, \bD_\nd \in \mathbb{C}^{\nx \times \nx}$.
  We set $\cM_1 := \cM(\bw_1, \bw_2; \bD_1, \ldots, \bD_\nd)$.
  We show that $\cM_1$ is a quadratization for $\cF_\nx := \cF^{[\nx]}(\bp_0, \ldots,\bp_\nd,\bD_1, \ldots, \bD_\nd)$.
  Consider $1 \leqslant i_0 \neq i_1 \leqslant \nx$ such that $(\bD_k)_{j_0, j_1} \neq 0$ for some $k$.
  We show that $\dot{\bw}_{2}(\bx^{[\nx]}_{i_0}, \bx^{[\nx]}_{i_1})$ is quadratic in $\cM_1$.
  For every monomial $m(\bx^{[\nx]})$ in $\dot{\bw}_{2}(\bx^{[\nx]}_{i_0}, \bx^{[\nx]}_{i_1})$, there are the two options:
  \begin{enumerate}
      \item $m$ depends only on $\bx^{[\nx]}_{i_0}$ and $\bx^{[\nx]}_{i_1}$.
      Then $m(\bx^{[4]}_{1}, \bx^{[4]}_{2})$ is present in $\cF_4$ on the right-hand side of $\dot{\bw}_{2}(\bx^{[4]}_{1}, \bx^{[4]}_{2})$.
      Hence, $m(\bx^{[4]}_{1}, \bx^{[4]}_{2})$ is quadratic in $\bw_1(\bx^{[4]}_{1}), \bw_1(\bx^{[4]}_{2})$ and $\bw_2(\bx^{[4]}_{1}, \bx^{[4]}_{2})$.
      Since $\bw_1(\bx^{[\nx]}_{i_0}), \bw_1(\bx^{[\nx]}_{i_1}), \bw_2(\bx^{[\nx]}_{i_0}, \bx^{[\nx]}_{i_1}) \in \cM_1$, we have that $m(\bx^{[\nx]}_{i_0}, \bx^{[\nx]}_{i_1})$ is quadratic in $\cM_1$ as well.
      
      \item $m$ depends on $\bx^{[\nx]}_{i_0}, \bx^{[\nx]}_{i_1}$, and $\bx^{[\nx]}_{i_2}$ for some $i_2 \neq i_0, i_1$.
      \begin{itemize}
          \item If $m$ comes from $\frac{\partial \bw_{2}(\bx^{[\nx]}_{i_0}, \bx^{[\nx]}_{i_1})}{\partial x^{[\nx]}_{i_0, s}} \dot{x}^{[\nx]}_{i_0, s}$ for some $s$, we have that $(\bD_k)_{i_0, i_2} \neq 0$ for some $k$.
          Then $m(\bx^{[4]}_{1}, \bx^{[4]}_{2}, \bx^{[4]}_{4})$ appears in $\cF_4$ in the right-hand side of $\dot{\bw}_{2}(\bx^{[4]}_{1}, \bx^{[4]}_{2})$. 
          Hence, $m(\bx^{[4]}_{1}, \bx^{[4]}_{2}, \bx^{[4]}_{4})$ is quadratic in 
          \[
          \bw_1(\bx^{[4]}_{1}),\; \bw_1(\bx^{[4]}_{2}),\; \bw_1(\bx^{[4]}_{4}),\; \bw_2(\bx^{[4]}_{1},\; \bx^{[4]}_{2}),\; \bw_2(\bx^{[4]}_{1},\; \bx^{[4]}_{4}).
          \]
          Getting back to $\cF_\nx$, since
          \[
          \bw_1(\bx^{[\nx]}_{i_0}),\; \bw_1(\bx^{[\nx]}_{i_1}),\; \bw_1(\bx^{[\nx]}_{i_2}),\; \bw_2(\bx^{[\nx]}_{i_0},\; \bx^{[\nx]}_{i_1}),\; \bw_2(\bx^{[\nx]}_{i_0},\; \bx^{[\nx]}_{i_2}) \in \cM_1,
          \]
          we have that $m(\bx^{[\nx]}_{i_0}, \bx^{[\nx]}_{i_1}, \bx^{[\nx]}_{i_2})$ is quadratic in $\cM_1$.
          \item If $m$ comes from $\frac{\partial \bw_{2}(\bx^{[\nx]}_{i_0}, \bx^{[\nx]}_{i_1})}{\partial x^{[\nx]}_{j_1, s}} \dot{x}^{[\nx]}_{i_1, s}$ for some $s$, we have that $(\bD_k)_{i_1, i_2} \neq 0$ for some $k$.
          The argument is the same as in the previous case but using $m(\bx^{[4]}_{1}, \bx^{[4]}_{2}, \bx_{3})$ instead of $m(\bx^{[4]}_{1}, \bx^{[4]}_{2}, \bx^{[4]}_{4})$.
      \end{itemize}
  \end{enumerate}
  The proof that $\dot{\bx}^{[\nx]}$ and $\dot{\bw}_{1}(\bx^{[\nx]}_{i})$ are quadratic in $\cM_1$ is analogous but simpler because only the first case (i.e., $m$ depends on $\bx^{[\nx]}_{i_0}$ and $\bx^{[\nx]}_{i_1}$ for some $i_0, i_1$) is possible.
  Thus, $\cM_1$ is a quadratization for $\cF_\nx$.
  Since $N$ and $\bD_1, \ldots, \bD_\nx$ were chosen arbitrarily, $\bw_1$ and $\bw_2$ yield a dimension-agnostic quadratization for the family defined by $\bp_0, \ldots,\bp_\nd$.
\end{proof}

%%%%%%%%%%%%%%%%%%%%%%%%%%%%%%%%%%%%%%%%%%%%%%%%%%%%%%%%%%%%%%%%%%%%%%%%%%%%%
\section{The \Qbee algorithm and software for quadratization} \label{sec:Qbee}
%%%%%%%%%%%%%%%%%%%%%%%%%%%%%%%%%%%%%%%%%%%%%%%%%%%%%%%%%%%%%%%%%%%%%%%%%%%%%
To the best of our knowledge, there are two software tools that generate quadratizations of different kind: \textsf{Biocham}~\cite{Biocham} and \Qbee~\cite{QBee}. We focus on \Qbee \ because of its performance, stricter optimality guarantees~\cite[Table~3]{Bychkov2021}, and its flexible algorithm design. In this work, we extend the capability of \Qbee according to the theory from the preceding sections. 
A Jupyter notebook with the examples from~\crefrange{sec:Qbee}{sec:heliospheric} is available online\footnote{Jupyter notebook: \url{https://github.com/AndreyBychkov/QBee/blob/master/examples/Examples_BOPK2023.ipynb}}

%%%%%%%%%%%%%%%%%%%%%%%%%%%%%%%%%%%%%%%%%%%%%%%%%%%%%%%%
\subsection{Review of the original \Qbee algorithm}
The \Qbee algorithm from~\cite{QBee} takes as an input a system of polynomial ODEs and produces as an output an optimal monomial quadratization (in the sense of \cref{def:monomial_quadr}). The \Qbee software is implemented in Python. 
At the code level, \Qbee can be loaded by \mintinline{python}{import sympy as sp; from qbee import *}.
After that, for example, the system 
\[
  \dot{x}_1 = x_1^3 + x_2^2, \quad \dot{x}_2 = x_1 + x_2
\]
can be quadratized by 
\begin{minted}[frame=lines]{python}
x1, x2 = functions("x1, x2")
system = [ # pairs of the form (x, x')
    (x1, x1**3 + x2**2),
    (x2, x1 + x2)
]
quadratize(system).print()
\end{minted}
where $\texttt{system}$ defines a polynomial ODE system $\dot{\bx} = \bp(\bx)$ as a list of pairs $(x_i, p_i(\bx))$. \Qbee prints the quadratic symbolic form of the system and a list of auxiliary variables needed to do that.
The list is guaranteed to be as short as possible.
For the code above, \Qbee produces
\begin{minted}[frame=lines]{text}
    Introduced variables:
    w0 = x1**2
    w1 = x2**2

    x1' = x1*w0 + w1
    x2' = x1 + x2
    w0' = 2*x1*w1 + 2*w0**2
    w1' = 2*x1*x2 + 2*w1
\end{minted}

The algorithms implemented in \Qbee are described in detail in~\cite{Bychkov2021}. To keep this paper self contained, we briefly summarize the key ideas of the algorithm, which we built upon later.  The algorithm follows the general branch-and-bound paradigm~\cite{BB} and the computation of a quadratization is organized as a tree of recursive iterations. Each recursive iteration takes as input the original system, current set $\mathcal{S}$ of new variables ($\mathcal{S} = \varnothing$ at the first iteration), and the smallest order $\ell_0$ of quadratization found so far, where initially we set $\ell_0$ to be the size of quadratization from Theorem~\ref{thm:existence_simple}.
If $\mathcal{S}$ is a quadratization, it is returned.
If $|\mathcal{S}| \geqslant \ell_0$, this means that extending $\mathcal{S}$ does not lead to a better quadratization than already found, this branch is skipped.
Furthermore, to keep the computation manageable, \Qbee implements additional pruning rules that allow the algorithm to decide that the branch can be safely skipped even if $|\mathcal{S}| < \ell_0$~\cite[Section~5]{Bychkov2021}.
Otherwise, the algorithm generates possible useful extensions of $\mathcal{S}$ by looking at not-yet-quadratized monomials in the derivatives of the original variables and elements of $\mathcal{S}$. It runs recursively on these extensions and updates $\ell_0$ if any of these recursive iterations find a better quadratization.
The bound $\ell_0$ forces the tree of recursive calls to be finite and thus implies that the algorithm terminates.
The recursive step of the algorithm is summarized in~\cref{alg:qbee_outline}.

\begin{algorithm}[H]
\caption{Outline of the original \Qbee algorithm from \cite{Bychkov2021}.}
\label{alg:qbee_outline}
\begin{description}[itemsep=0pt]
\item[Inputs]
\begin{itemize} 
\item Symbolic form of the polynomial ODE system $\dot{\bx} = \bp(\bx)$;
    \item A set of potential new variables $\mathcal{S} \subset \mathbb{C}[\bx]$. If not specified, $\mathcal{S} = \varnothing$;
    \item The smallest order $\ell_0$ of quadratization found so far.
    If not specified, we set $\ell_0 = \prod_{i = 1}^\nn(d_i + 1)$, where $d_i = \deg_{x_i}\bp(\bx)$ (see~\cref{thm:existence_simple}).
\end{itemize}
\item[Output] Optimal quadratization of $\dot{\bx} = \bp(\bx)$ extending $\mathcal{S}$ with $|\mathcal{S}| < \ell_0$ or \texttt{nothing} if there is no such quadratization.
\end{description}

\begin{enumerate}[label = \textbf{(Step~\arabic*)}, leftmargin=*, align=left, labelsep=2pt, itemsep=4pt]
    \item If $\mathcal{S}$ is a quadratization and $|\mathcal{S}| < \ell_0$, \textbf{return} $\mathcal{S}$.
    \item If $|\mathcal{S}| \geqslant \ell_0$ or any pruning rule indicates that $\mathcal{S}$ cannot be extended to a quadratization of size $< \ell_0$, \textbf{return} \texttt{nothing}.
    \item\label{step:generation} Generate a list $\mathcal{S}_1, \ldots, \mathcal{S}_r$ of possible extensions of $\mathcal{S}$.
    \item Set $\mathcal{Q}_0 := \text{\texttt{nothing}}$, $\ell := \ell_0$.
    \item For $i = 1, \ldots, r$
    \begin{enumerate}
        \item Run the algorithm recursively on $\mathcal{S}_i$ and $\ell$.
        \item If the recursive call returns a quadratization $\mathcal{Q}$ with $|\mathcal{Q}| < \ell$,\\ set $\mathcal{Q}_0 := \mathcal{Q}$ and $\ell := |\mathcal{Q}|$.
    \end{enumerate}
    \item \textbf{Return} $\mathcal{Q}_0$.
\end{enumerate}
\end{algorithm}

\begin{remark}[On Laurent polynomials in \Qbee]\label{rem:laurent}
  We have extended the original \Qbee algorithm outlined above to allow the input system to be defined not only by polynomials but by \emph{Laurent} polynomials, that is, polynomials with possible negative degrees.
  This is done by allowing negative degrees at~\ref{step:generation}.
  In this case, we do not give termination and optimality guarantees but in practical examples (see~\cref{sec:exp_term} and~\cref{sec:heliospheric}) this approach works well.
  Therefore, all extensions of \Qbee presented here such as~\Cref{alg:qbee_input} and~\Cref{alg:qbee_da} can take Laurent polynomials as input but do not provide termination and optimality guarantees for this case.
  
  We could instead first polynomialize the model using our polynomialization algorithm from~\cref{sec:polynomialization} but this may result in a quadratization of larger dimension as the following example shows. 
  Consider a Laurent polynomial model
  \begin{equation}\label{eq:laurent_start}
    \dot{x}_1 = x_2^2, \quad \dot{x}_2 = x_1 x_2^{-1}.
  \end{equation}
  If we first polynomialize the model by introducing $x_3 = x_2^{-1}$, we obtain
  \begin{equation}\label{eq:laurent_poly}
    \dot{x}_1 = x_2^2, \quad \dot{x}_2 = x_1x_3, \quad \dot{x}_3 = -x_1x_3^3.
  \end{equation}
  Computation with \Qbee shows that at least two new variables are required to quadratize~\eqref{eq:laurent_poly}.
  On the other hand, \Qbee applied directly to~\eqref{eq:laurent_start} quadratizes the model with only two new variables $w_1 = x_2^{-1}$ and $w_2 = x_1 x_2^{-2}$:
  \[
    \dot{x}_1 = x_2^2, \quad \dot{x}_2 = x_1 w_1, \quad \dot{w}_1 = -w_1w_2, \quad \dot{w}_2 = 1 - 2 w_2^2.
  \]
  This discrepancy arises because, when quadratizing~\eqref{eq:laurent_poly}, \Qbee cannot take into account the fact that $x_2 x_3 = 1$ as it is not a part of its input.
\end{remark}

%%%%%%%%%%%%%%%%%%%%%%%%%%%%%%%%%%%%%%%%%%%%%%%%%%
\subsection{\Qbee for quadratization of polynomial ODEs with inputs}
Herein, we describe how the results of~\cref{sec:theory_inputs} can be used to add capabilities to~\cref{alg:qbee_outline} for computing quadratizations for systems with inputs. This requires modifying the initial upper bound $\ell_0$ and the procedure for computing the sets of variables for the recursive iterations at~\ref{step:generation}.
Consider the ODE system
\begin{equation}\label{eq:input_algo_start}
  \dot{\bx} = \bp(\bx, \bu),
\end{equation}
where $\bx = \bx(t) \in \real^\nn$ are the states and $\bu = \bu(t) \in \real^r$ are inputs and $\bp(\bx,\bu) = [p_1(\bx,\bu), \ldots, p_\nn(\bx,\bu)]^\top$ is a vector of polynomials.

\Cref{thm:existence_inputs} guarantees that it is always possible to find a quadratization with the new variables involving the inputs in at most zero order and, thus, the quadratized equations involving the inputs in at most first order (as in~\Cref{def:quadr_input}).
Since only zeroth and first order derivatives of inputs are involved in the quadratization process, we can impose an additional restriction that the inputs are linear\footnote{Formally, this can be justified as follows. We introduce the new `state' variables: $\bu^{(0)}=\bu$ and $\bu^{(1)} = \dot{\bu}$, as the zeroth and first derivatives of $\bu$, respectively. This allows us to formally write an \textit{autonomous} ODE system $\dot{\bx} = \bp(\bx, \bu^{(0)}), \dot{\bu}^{(0)} = \bu^{(1)}, \dot{\bu}^{(1)} = \mathbf{0}$, where the last equation acts as a dummy equation.  Since by \cref{def:quadr_input} the quadratizations of the non-autonomous system~\eqref{eq:input_algo_start} involve only $\bx$ and $\bu$ but not derivatives of $\bu$, there is a bijection between the quadratizations of~\eqref{eq:input_algo_start} and  the quadratizations of the autonomous system above involving $\bx$ and $\bu^{(0)}$ but not $\bu^{(1)}$.}, that is, $\ddot{\bu}(t) = 0$.
With this additional restriction, the system can be written as an autonomous ODE system to which the original~\Qbee algorithm can now be applied with two amendments: the bound $\ell_0$ on the size of quadratization is now taken from~\Cref{thm:existence_inputs}, not from~\Cref{thm:existence_simple}, and~\ref{step:generation} of~\Cref{alg:qbee_outline} is restricted to monomials in $\bx$ and $\bu$ (that is, using $\dot{\bu}$ is not allowed).
\Cref{thm:existence_inputs} guarantees that even under this restriction a quadratization exists, and \Qbee will find an optimal one.

To compute an input-free quadratization (for cases when $\bu(t)$ is not differentiable), the approach above is modified as follows.
First, we set $\ell_0 = \infty$.
Second, we further restrict~\ref{step:generation} of~\cref{alg:qbee_outline} by allowing only the extensions not involving either $\bu$ or $\dot{\bu}$.
We note that, in general, the search for input-free quadratizations is not guaranteed to terminate, as there may not be input-free quadratizations. 
We refer to~\cref{sec:input-free} for the criterion (\cref{prop:inp_general}) and special cases (\cref{lem:input_bilinear,lem:triangular}) where we expect the algorithm to terminate.
We summarize the algorithm explained above in~\cref{alg:qbee_input}.

\begin{algorithm}[H]
\caption{Extension of \Qbee to non-autonomous systems}
\label{alg:qbee_input}
\begin{description}[itemsep=0pt]
\item[Input] Non-autonomous polynomial ODE system $\dot{\bx} = \bp(\bx, \bu)$ with $\nn$ states and $r$ inputs and a Boolean flag $\texttt{input-free}$.
\item[Output] Optimal quadratization as defined in~\cref{def:quadr_input} if $\texttt{input-free} = \texttt{False}$ and optimal input-free quadratization as in~\cref{def:quadr_input_zero} if $\texttt{input-free} = \texttt{True}$ and such quadratization exists.
\end{description}

\begin{enumerate}[label = \textbf{(Step~\arabic*)}, leftmargin=*, align=left, labelsep=2pt, itemsep=4pt]
    \item Build an autonomous ODE system by restricting inputs to the linear ones, that is, adding $\ddot{u} = 0$ to the original system.
    \item\label{step:input_main} Run~\cref{alg:qbee_input} on the autonomous ODE system constructed in the previous step 
    \begin{itemize}
        \item If $\texttt{input-free} = \texttt{False}$, use $\ell_0$ computed as in \cref{thm:existence_inputs} and ensure that~\ref{step:generation} only selects the extensions not involving $\dot{\bu}$.
        \item If $\texttt{input-free} = \texttt{True}$, use $\ell_0 = \infty$ and and ensure that~\ref{step:generation} only selects the extensions not involving $\bu, \dot{\bu}$.
    \end{itemize}
    \item \textbf{Return} the quadratization found at the previous step.
\end{enumerate}
\end{algorithm}

\begin{example}[Use of \Qbee for systems with inputs]
Consider again the polynomial model with input from \cref{ex:infinite}. The \Qbee software finds a quadratization for that example with the following code.
\begin{minted}[frame=lines]{python}
    x, u = functions("x, u")
    system = [ # pairs of the form (x, x')
        (x, x**2 * u)
    ]
    quadratize(system).print() # input-free=False by default
\end{minted}
The code  produces: 
\begin{minted}[frame=lines]{text}
    Introduced variables:
    w0 = u*x

    x' = x*w0
    w0' = x*u' + w0**2  
\end{minted}

\end{example}

\begin{example}[Use of \Qbee for input-free quadratizations]
Consider again the polynomial model from \cref{ex:findim}. The \Qbee software finds an input-free quadratization for that example with the following code.
\begin{minted}[frame=lines]{python}
    x1, x2, u = functions("x1, x2, u")
    system = [ # pairs of the form (x, x')
        (x1, x1 + x1 * u),
        (x2, x1**2 * u)
    ]
    quadratize(system, input_free=True).print()
\end{minted}
In the code above, \mintinline{python}{input_free=True} means that we are looking for an input-free quadratization.
The code produces:
\begin{minted}[frame=lines]{text}
    Introduced variables:
    w0 = x1**2

    x1' = x1*u + x1
    x2' = u*w0
    w0' = 2*u*w0 + 2*w0
\end{minted}
\end{example}

%%%%%%%%%%%%%%%%%%%%%%%%%%%%%%%%%%%%%%%%%%%%%%%%%%%%%%%
\subsection{\Qbee for finding dimension-agnostic quadratizations}\label{sec:alg_agnostic}
In~\cref{sec:agnostic} we presented the theory for quadratizing families of linearly coupled ODE systems. With this, we mean that while the ODE dimension could be arbitrary, the equations have the same structure---so in essence, we can find a quadratization for a lower-dimensional ODE and this generalizes to adding more (similarly structured) equations. 

We revisit \cref{ex:PDE_quadratization} where we found a quadratization of a family of ODE systems of the form~\eqref{eq:pde_discr}:
\[
  \dot{\bx}^{[\nx]} = \bx^{[\nx]} + (\bx^{[\nx]})^2 \odot (\bD \bx^{[\nx]}).
\]
To provide such a family as an input to our algorithm, we formally write this ODE system as a scalar equation
\begin{equation}\label{eq:da_ex_input}
  \dot{x} = x + x^2 x_{\bD},
\end{equation}
where $x_{\bD}$ is a formal variable that we use as a placeholder for the linear coupling $\bD \bx^{[\nx]}$.
We extend this notation to the general case of a family of linearly coupled ODE systems as in~\cref{def:family}. We represent the family  symbolically as
\begin{equation}\label{eq:alg_da_input}
  \dot{\bx} = \bp(\bx, \bx_{\bD}),
\end{equation}
where $\bx = [x_1, \ldots, x_\nd]^\top$, $\bx_{\bD} = [x_{\bD, 1}, \ldots, x_{\bD, \nd}]^\top$ are again formal variables which are placeholders for the coupling expressions $(\bD_j \bx^{[\nx]}_{*, j})_i$, and $\bp(\bx, \bx_\bD)$ is affine in $\bx_\bD$, i.e., 
\begin{equation}\label{eq:pxx_expanded}
  \bp(\bx, \bx_\bD) = \bp_0(\bx) + \sum\limits_{i = 1}^\nd \bp_i(\bx)x_{\bD, i}.
\end{equation}
The vectors $\bp_0(\bx), \ldots, \bp_\nd(\bx)$ define the family of linearly coupled models as in~\cref{def:family}.

Given a family of linearly coupled ODE systems written in the form~\eqref{eq:alg_da_input}, \Cref{prop:agnostic} provides an effective condition to check whether given polynomial vectors $\bw_1(\bx)$ and $\bw_2(\bx, \tilde{\bx})$ provide a dimension-agnostic quadratization (see \cref{def:da}) of the family: it is sufficient to take a particular member of the family, $\cF_\bP^{[4]}(\bp_0, \ldots, \bp_\nd, \bD_1^\ast, \ldots, \bD_\nd^\ast)$ and check whether $\cM(\bw_1, \bw_2; \bD_1^\ast, \ldots, \bD_\nd^\ast)$ is a quadratization of this specific ODE system.
Moreover, we can use \cref{prop:agnostic} to find a dimension-agnostic quadratization for the family by running~\cref{alg:qbee_outline} on the discretization but restricting the extensions at~\ref{step:generation} to the ones of the form~$\cM(\bw_1, \bw_2; \bD_1^\ast, \ldots, \bD_\nd^\ast)$.
Success and termination of this search can be ensured by using the bounds for $\ell$ and $L$ from~\cref{thm:existencEgnostic} to provide the starting bound $\ell_0$ in~\cref{alg:qbee_outline}: since there are three off-diagonal elements in the $D_i$'s and $\nx = 4$, we take $\ell_0 = 4\ell + 3L = (3\nd + 4)\binom{\nd + d}{d}$.  This approach is summarized in~\cref{alg:qbee_da}.

\begin{example}[Use of \Qbee for dimension agnostic quadratization]\label{ex:use_agnostic}
Consider the family from~\Cref{ex:PDE_quadratization}.
We have already written the corresponding family of ODE systems in the form \eqref{eq:da_ex_input} required  by the algorithm.
\Qbee can be used to find a dimension-agnostic quadratization by writing the following code
\begin{minted}[frame=lines]{python}
    x, Dx = functions("x Dx")
    system = [
        (x, x + x**2 * Dx) 
    ]
    quadratize_dimension_agnostic(system)
\end{minted}
Then, \Qbee produces 
\begin{minted}[frame=lines]{text}
  Every ODE system in the family can be quadratized by adding the following variables
  * For each i, we take variables
  x_i**2
  * For every i and j such that the variables from the j-th block (that is, x_j) 
  appear in the right-hand sides of the i-th block (that is, in x_i'), we take
  x_i*x_j
\end{minted}
The quadratized system like~\eqref{eq:pde_ex_quadratized} in general depends on the shape of the $\bD$ matrices used (see~\Cref{ex:da_quadr_basic}), and it is not produced automatically but has to be derived by hand using the knowledge of the quadratizing variables. 
The function \mintinline{python}{quadratize_dimension_agnostic} has a keyword parameter \mintinline{python}{print_intermediate}, if the value is set to be \mintinline{python}{True}, the quadratization for the ODE system from~\Cref{prop:agnostic} is printed, which can be helpful when producing the quadratized system.
\end{example}

\begin{algorithm}[H]
\caption{Extension of \Qbee to dimension-agnostic quadratizations}
\label{alg:qbee_da}
\begin{description}[itemsep=0pt]
\item[Input] a family of linearly coupled ODE systems~\eqref{eq:from_template} presented as
\[
  \dot{\bx} = \bp(\bx, \bx_\bD),
\]
where $\bx = [x_1, \ldots, x_\nd]^\top$, $\bx_\bD = [x_{\bD, 1}, \ldots, x_{\bD, \nd}]$ are formal variables which are placeholder for the coupling, and $\bp(\bx, \bx_{\bD})$ is affine in $\bx_\bD$.
\item[Output] a dimension-agnostic quadratization of the family (see~\cref{def:da}).
\end{description}

\begin{enumerate}[label = \textbf{(Step~\arabic*)}, leftmargin=*, align=left, labelsep=2pt, itemsep=4pt]
    \item Set $d := \deg_{\bx} \bp(\bx, \bx_{\bD})$.
    \item Write $\bp(\bx, \bx_{\bD})$ in the form~\eqref{eq:pxx_expanded} and extract the vectors $\bp_0, \bp_1, \ldots, \bp_\nd$.
    \item Construct an ODE system $\cF_\bP^{[4]}(\bp_0, \ldots, \bp_\nd, \bD_1^\ast, \ldots, \bD_\nd^\ast)$ from~\cref{prop:agnostic}.
    \item\label{step:da_apply_qbee} Apply~\cref{alg:qbee_outline} to the produced ODE system with $\ell_0 = (3\nd + 4)\binom{\nd + d}{d}$ and~\ref{step:generation} selecting extensions of the form~$\cM(\bw_1, \bw_2; \bD_1^\ast, \ldots, \bD_\nd^\ast)$.
    \item From the produced quadratization, derive the polynomial vectors $\bw_1(\bx)$ and $\bw_2(\bx, \tilde{\bx})$ and \textbf{return} them.
\end{enumerate}
\end{algorithm}

Since \ref{step:da_apply_qbee} of \cref{alg:qbee_da} applies \Qbee to   $\cF^{[4]}(\bp_0, \ldots, \bp_\nd,\bD^*_1, \ldots, \bD_\nd^*)$, the resulting dimension-agnostic quadratization is not guaranteed to be optimal for all choices of $\nx$ and $\bD_1, \ldots, \bD_\nd$.  
In all the examples we have considered so far, and the forthcoming examples in~\cref{sec:applications,sec:heliospheric}, the dimension-agnostic quadratization produced by the algorithm was always of the same or lower dimension as the ones found by hand.

%%%%%%%%%%%%%%%%%%%%%%%%%%%%%%%%%%%%%%%%%%%%%%%%%%%%%%%
\subsection{Polynomialization as a pre-processing step for quadratization}\label{sec:polynomialization}
This work focuses on the task of quadratizing polynomial ODEs. 
However, many models in engineering and science are non-polynomial. The problem of optimal quadratization of non-polynomial systems remains open, but practical polynomialization algorithms can nevertheless be devised. 
We design and implement a polynomialization algorithm that can be used as a preprocessing step before quadratization with \cref{alg:qbee_outline} and~\cref{alg:qbee_input}.
The algorithm follows the general approach described in~\cite{gu2011qlmor,lifeware2}: at each step the algorithm finds a nonpolynomial term and adds a new variable corresponding to it. 
The main difference of our approach from the prior algorithms is that we again employ the branch-and-bound approach as in~\Cref{alg:qbee_outline}, that is, the algorithm does not perform a fixed sequence of substitutions but explores different sequences in a recursive fashion. 
While this may require more iterations, the resulting system can be of smaller dimensions, as the next example shows. 

\begin{example}\label{ex:poly_small}
    Consider the scalar nonpolynomial equation $\dot{x} = e^{-x} + e^{-2x}$. Here, the {\sc Biocham}~\cite{Biocham} software introduces two new variables $e^{-x}$ and $e^{-2x}$ by using the algorithm from~\cite{lifeware2}. 
    In contrast, our \Qbee algorithm finds that only one new variable, $w := e^{-x}$, is sufficient to make the system polynomial:
    \begin{align*}
        \dot{x} &= w + w^2,\\
        \dot{w} &= -w^2 - w^3.
    \end{align*}
    The following \Qbee code produces the above polynomialization of order one:

    \begin{minted}[frame=lines]{python}
    x = functions("x")
    system = [(x, sp.exp(-x) + sp.exp(-2 * x))] # list of pairs (x, x')
    print(polynomialize(system))
    \end{minted}
The \Qbee output is:
\begin{minted}[frame=lines]{text}
    w_0 = exp(-x)

    x' = w_0**2 + w_0
    w_0' = -w_0*(w_0**2 + w_0)
    \end{minted}
\end{example}

\subsection{On optimality guarantees of the presented algorithms}\label{sec:optimality}
We summarize the optimality guarantees which are available for the algorithms we presented herein.
\begin{itemize}
    \item If the input is a polynomial ODE system of the form~\eqref{eq:sys_main} or~\eqref{eq:sys_main_input}, then the original \Qbee (\Cref{alg:qbee_outline}) and its extension to the non-autonomous systems (\Cref{alg:qbee_input}) are guaranteed to produce an optimal monomial quadratization.
    \item For the polynomialization algorithm (\Cref{sec:polynomialization}), the algorithm for computing dimension-agnostic quadratizations (\Cref{alg:qbee_da}), as well as \Cref{alg:qbee_outline,alg:qbee_input} when allowing for Laurent polynomials in the model form, we do not guarantee optimality.
    However, our algorithms always reduce the number of new variables by applying the branch-and-bound approach from \Qbee and, in particular, manage to produce quadratizations of smaller dimensions than it was possible before (see~\cref{sec:applications,sec:heliospheric})
\end{itemize}

%%%%%%%%%%%%%%%%%%%%%%%%%%%%%%%%%%%%%%%%%%%%%%%%%%%%%%%
\section{Applications of quadratization} \label{sec:applications}
%%%%%%%%%%%%%%%%%%%%%%%%%%%%%%%%%%%%%%%%%%%%%%%%%%%%%%%
This section applies the \Qbee algorithm to quadratize a wide range of nonlinear systems of ODEs.  
Here, we revisit examples from the literature and demonstrate that we can improve the order of quadratization (i.e., we need fewer variables to do so). \Cref{sec:tubularReactor} considers two different tubular reactor models, both with variable dimension and external inputs
In \cref{sec:gems} we consider a two-step rocket combustion process which is nonpolynomial.

%%%%%%%%%%%%%%%%%%%%%%%%%%%%%%%%%%%%%%%%%%%%%%%%%
\subsection{Tubular reactor model} \label{sec:tubularReactor}
%%%%%%%%%%%%%%%%%%%%%%%%%%%%%%%%%%%%%%%%%%%%%%%%%
We consider a non-adiabatic tubular reactor model with a single reaction as in~\cite{heinemann1981tubular} and follow the discretization and setup in \cite{zhou2012thesis,KW2019_balanced_truncation_lifted_QB}. 
The spatially-discretized model describes the evolution of the species concentration vector $\bm{\psi}(t)$ and temperature vector $\bm{\theta}(t)$ over time $t >0$.  
We consider two cases of different complexity.
Both cases are non-autonomous and of variable dimension (depending on the size of discretization).

%%%%%%%%%%%%%%%%%%%%%%%%%%%%%%%%%%%%%%%%%%%%%%%%
\subsubsection{Polynomial reaction model}

We consider the polynomial ODE in $2\nx$ (i.e., $\nd = 2$) dimensions:
\begin{equation}
\label{eq:main_poly_reaction}
\begin{split}
  \bm{\dot{\psi}}  &= \mathbf{A}_\psi \bm{\psi} + \mathbf{b}_\psi - \mathcal{D}\bm{\psi} \odot (\mathbf{c}_0 + \mathbf{c}_1 \odot\bm{\theta} + \mathbf{c}_2 \odot\bm{\theta}^2 + \mathbf{c}_3 \odot\bm{\theta}^3), \\
  \bm{\dot{\theta}}  &= \mathbf{A}_\theta \bm{\theta} + \mathbf{b}_\theta + \mathbf{b} u + \mathcal{B} \mathcal{D} \bm{\psi} \odot (\mathbf{c}_0 + \mathbf{c}_1 \odot\bm{\theta} + \mathbf{c}_2 \odot\bm{\theta}^2 + \mathbf{c}_3 \odot\bm{\theta}^3),
\end{split}
\end{equation}
where $u=u(t)$ is a scalar input function, $\mathbf{A}_{\psi}$ and $\mathbf{A}_{\theta}$ are $\nx\times \nx$ constant matrices, $\mathbf{b}, \mathbf{b}_\psi,$ and $\mathbf{b}_\theta$ are $\nx$-dimensional constant vectors. 
The Damk{\"o}hler number $\mathcal{D}$ and $\mathcal{B}$ are the known constants. 
This ODE system has a polynomial reaction model, i.e., the chemical reaction term is $f(\bm{\psi}, \bm{\theta}; \gamma) = \psi (c_0 + c_1\bm{\theta} + c_2\bm{\theta}^2 + c_3\bm{\theta}^3)$ following~\cite[Eqs. (17)-(18)]{KW2019_balanced_truncation_lifted_QB}.  
In~\cite{KW2019_balanced_truncation_lifted_QB} this model was quadratized with $5\nx$ additional variables, so the lifted system dimension was $7\nx$.

In order to use \Qbee, we rewrite~\eqref{eq:main_poly_reaction} in the form~\eqref{eq:alg_da_input}, that is, as a formal three-dimensional ODE system with additional variables $\psi_{\bD}$ and $\theta_\bD$ instead of $\bA_{\psi}\bm{\psi}$ and $\bA_{\theta}\bm{\theta}$:
\begin{equation}
\begin{split}
  \dot{\psi}  &= \psi_{\bD} + b_\psi - \mathcal{D}\psi (c_0 + c_1 \theta + c_2 \theta^2 + c_3 \theta^3), \\
  \dot{\theta}  &= \theta_\bD + b_\theta + b u + \mathcal{B} \mathcal{D} \psi (c_0 + c_1 \theta + c_2 \theta^2 + c_3 \theta^3),
\end{split}
\end{equation}
Then we use \Qbee to search for a discretization-agnostic quadratization as described in~\cref{sec:alg_agnostic}. 
\Qbee finds that for every $\nx$ and for every matrices $\bA_{\psi}$ and $\bA_{\theta}$, the system can be quadratized using the following $4\nx$ additional variables:
\begin{equation}\label{eq:quadr_poly_reaction}
    \bm{\bw}_1 := \bm{\theta}^2, \quad \bm{\bw}_2 := \bm{\theta}^3, \quad \bm{\bw}_3 := \bm{\psi}\odot \bm{\theta}, \quad \bm{\bw}_4 := \bm{\psi}\odot \bm{\theta}^2.
\end{equation}
One can use these new variables to write the quadratized system:

\begin{equation}
\begin{split}
    \dot{\bm{\psi}} & = \bA_{\psi}\bm{\psi} + \bm{b}_{\psi} -\mathcal{D} (\bc_0\odot \bm{\psi} + \bc_1\odot \bw_3 + \bc_2\odot \bw_4 + \bc_3\odot \bw_1\odot \bw_3),\\
    \dot{\bm{\theta}} & = \bA_{\theta}\bm{\theta} + \bm{b}_{\theta} + \bm{b} u + \mathcal{BD}(\bc_0 \odot \bm{\psi} + \bc_1 \odot \bw_3 + \bc_2 \odot \bw_4 + \bc_3 \odot \bw_1 \odot \bw_3),\\
    \dot{\bw}_1 & = 2\bm{\theta} (\bA_{\theta}\bm{\theta} + \bm{b}_{\theta} + \bm{b} u) + 2\mathcal{BD} (\bc_1 \odot \bw_4 + \bw_3\odot(\bc_0 + \bc_2 \odot \bw_1 + \bc_3 \odot \bw_2)),\\
    \dot{\bw}_2 & = 3\bm{\theta}(\bA_{\theta}\bm{\theta} + \bm{b}_{\theta} + \bm{b} u) + 3\mathcal{BD}(\bw_4\odot(\bc_0 + \bc_3 \odot \bw_2) + \bw_3\odot(\bc_1 \odot \bw_1 + \bc_2 \odot \bw_2)),\\
    \dot{\bw}_3 & = \bm{\psi}(\bA_{\theta}\bm{\theta} + \bm{b}_{\theta} + \bm{b} u) + \bm{\theta}(\bA_{\psi}\bm{\psi} + \bm{b}_{\psi}) - \mathcal{D}(\bc_1 \odot \bw_4 + \bw_3\odot(\bc_0 + \bc_2 \odot \bw_1 + \bc_3 \odot \bw_2))\\ & \quad + \mathcal{BD}(\bc_0 \odot \bm{\psi}^2 + \bw_3\odot(\bc_1 \odot \bm{\psi} + \bc_2 \odot \bw_3 + \bc_3 \odot \bw_4)),\\
    \dot{\bw}_4 & = \bw_1(\bA_{\psi}\bm{\psi} + \bm{b}_{\psi}) + 2\bw_3(\bA_{\theta}\bm{\theta} + \bm{b}_{\theta} + \bm{b}u) - \mathcal{D}(\bc_0 \odot\bw_4 + \bc_1\odot \bw_1 \odot \bw_3 + \bw_2\odot(\bc_2 \odot \bw_3 + \bc_3 \odot \bw_4))\\
    &  \quad + 2\mathcal{BD}(\bc_0 \odot \bm{\psi} \odot \bw_3 + \bc_1 \odot \bw_3^2 + \bc_2 \odot \bw_3 \odot \bw_4 + \bc_3 \odot \bw_4^2).
\end{split}
\end{equation}
Thus, \Qbee produced a lower-dimensional quadratization ($4\nx$ extra variables) than has been previously reported in the literature ($5\nx$ extra variables).

%%%%%%%%%%%%%%%%%%%%%%%%%%%%%%%%%%%%%%%%%%%%%%%%%%%%%%%%%%%%%%%%%%%%%%%%%%%%%
\subsubsection{Exponential reaction terms}\label{sec:exp_term}
We now consider a non-polynomial non-autonomous ODE system. The model is taken from~\cite[Sec. V]{KW18nonlinearMORliftingPOD}, and includes an Arrhenius reaction term that is of exponential kind, i.e., the system of non-polynomial non-autonomous ODEs is
\begin{equation}\label{eq:full}
    \begin{split}
        \bm{\dot{\psi}} &= \mathbf{A}_{\psi} \bm{\psi} + \mathbf{b}_\psi - \mathcal{D} \bm{\psi} \odot e^{\gamma - \gamma / \bm{\theta}},\\
        \bm{\dot{\theta}} &= \mathbf{A}_{\theta} \bm{\theta} + \mathbf{b}_\theta + \mathbf{b} u(t) + \mathcal{B} \mathcal{D} \bm{\psi} \odot e^{\gamma - \gamma / \bm{\theta}}.
    \end{split}
\end{equation}
In \cite{KW18nonlinearMORliftingPOD} two lifting transformations are derived: (i) the system is polynomialized with $3\nx$ extra variables, $\bw_1 := \btheta^{-1}, \bw_2 :=\btheta^{-2}$ and $\bw_3 := e^{\gamma - \gamma / \btheta}$; (ii) the system is further transformed into a quadratic system of differential-algebraic equations (DAEs) by adding $3\nx$ additional variables, thus, adding $6\nx$ variables in total.

With \Qbee, we improve on these results. First, we manually introduce only one variable $\bw_1 := e^{\gamma - \gamma / \btheta}$ and obtain a $3\nx$-dimensional system with Laurent polynomials in the right-hand side
\begin{equation}\label{eq:polynomialized_exp_reaction}
  \begin{split}
      \bm{\dot{\psi}} & = \mathbf{A}_{\psi} \bm{\psi} + \mathbf{b}_\psi - \mathcal{D} \bm{\psi} \odot \mathbf{w}_1,\\
      \bm{\dot{\theta}} & = \mathbf{A}_{\theta} \bm{\theta} + \mathbf{b}_\theta + \bb u(t) + \mathcal{B} \mathcal{D} \bm{\psi} \odot \mathbf{w}_1,\\
      \dot{\mathbf{w}}_1 & = \gamma\bm{\dot{\theta}} \odot \frac{1}{\bm{\theta}^2} \odot \bm{w}_1.
  \end{split}
\end{equation}
In order to use \Qbee, we rewrite~\eqref{eq:polynomialized_exp_reaction} in the form~\eqref{eq:alg_da_input}, that is, as a formal three-dimensional ODE system with additional variables $\psi_{\bD}$ and $\theta_\bD$ instead of $\bA_{\psi}\bm{\psi}$ and $\bA_{\theta}\bm{\theta}$:
\begin{align*}
  \dot{\psi} &= \psi_{\bD} + b_{\psi} - \mathcal{D} \psi w_1,\\
  \dot{\theta} &= \theta_{\bD} + b_{\theta} u + \mathcal{B}\mathcal{D} \psi w_1,\\
  \dot{w}_1 &= \gamma \frac{w_1}{\theta^2}(\theta_{\bD} + b_{\theta} + b u + \mathcal{B}\mathcal{D} \psi w_1).
\end{align*}
Then we use \Qbee to search for a discretization-agnostic quadratization as described in~\cref{sec:alg_agnostic}. 
As explained in~\Cref{rem:laurent}, this can be done even though~\eqref{eq:polynomialized_exp_reaction} is a Laurent polynomial system.
\Qbee finds that for every $n$ and for every matrices $\bA_{\psi}$ and $\bA_{\theta}$, a quadratization requires the additional variables
    \begin{align*}
        \bw_2 &:= \frac{1}{\bm{\theta}},\; &\bw_4 &:= \frac{u(t)}{\bm{\theta}},\; &\bw_8 &:= \frac{\bm{\psi} \odot \bw_1}{\bm{\theta}^2},\\
        \bw_3 &:= \frac{1}{\bm{\theta}^2},\;  &\bw_5 &:= \frac{u(t)}{\bm{\theta}^2},\; &\bw_7 &:= \frac{\bm{\psi}\odot \bw_1}{\bm{\theta}}.
    \end{align*}
and we additionally require for every $1 \leqslant i \neq j \leqslant \nx$ where $(\bA_\psi)_{i, j} \neq 0$ or $(\bA_\theta)_{i, j} \neq 0$:
    \[
      w_{8, i, j} := \frac{\psi_j}{\theta_i},\quad w_{9, i, j} := \frac{\psi_j}{\theta_i^2} ,\quad w_{10, i, j} := \frac{\theta_j}{\theta_i},\quad w_{11, i, j} := \frac{\theta_j}{\theta_i^2}.
    \]
Therefore, if we denote by $M$ the number of nonzero off-diagonal elements in $\bm{A}_{\theta}$ and $\bA_{\psi}$, we add $7\nx + 4M$ new variables.
While these are more additional variables than the aforementioned $6\nx$ new variables from~\cite{KW18nonlinearMORliftingPOD}, our transformation yields an ODE (not a DAE) which is generally much more advantageous to work with for analysis, simulation, and control. 

%%%%%%%%%%%%%%%%%%%%%%%%%%%%%%%%%%%%%%%%%%%%%%%%%%%%%%%%%%%%%
\subsection{Species' reaction model for combustion}\label{sec:gems}
%%%%%%%%%%%%%%%%%%%%%%%%%%%%%%%%%%%%%%%%%%%%%%%%%%%%%%%%%%%%%%%%%%%%%%%%%
We consider the rocket engine combustion model from~\cite[Eqs. (A1a-A1d)]{SKHW2020_learning_ROMs_combustor}.
At each spatial grid point, the species' molar concentrations $x_1, x_2, x_3$ and $x_4$ are determined by the following  equations:
\begin{equation}\label{eq:main_gems}
\begin{split}
  \dot x_1 &= -A \exp{\Big(-\frac{E}{R u(t)}\Big)} x_1^{0.2} x_2^{1.3}\\
  \dot x_2 &= 2 \dot x_1\\
  \dot x_3 &= -\dot x_1 \\
  \dot x_4 &= -2 \dot x_1,
\end{split}
\end{equation}
where $u(t)$ is a time dependent input standing for the temperature, and $A,\ E$ and $R$ are known parameters.
Our goal is to compute a polynomialization and a quadratization via \Qbee and compare the results to  \cite{SKHW2020_learning_ROMs_combustor} where this was done by hand. 
$\Qbee$ performs polynomialization and quadratization of the model automatically. 
It first finds three variables which allow to lift the write the system using Laurent polynomials
\[
 w_1 = x_2^{1.3}, \quad w_2 = x_1^{0.2}, \quad w_3 = e^{-E / (Ru(t))}
\]
The lifted system will be
\begin{align*}
    \dot{x}_1 &= -A w_1 w_2 w_3,
    &\dot{x}_2 &= -2 A w_1 w_2 w_3,
    &\dot{x}_3 &= A w_1 w_2 w_3,
    &\dot{x}_4 &= 2 A w_1 w_2 w_3,\\
    \dot{w}_1 &= -2.6 A w_1^2 w_2 w_3 x_2^{-1},
    &\dot{w}_2 &= -0.2 A w_1 w_2^2 w_3 x_1^{-1},
    &\dot{w}_3 &= \frac{E \dot{u}(t) w_3}{R u^2(t)} & &
\end{align*}
Then this system is quadratized using seven more variables
\begin{align*}
  w_4 &= w_1 w_2,\quad &w_5 &= \frac{\dot{u}(t)}{u(t)^2},\quad &w_6 &= \frac{1}{u(t)^2}, \quad &w_7 &= \frac{\dot{u}(t)}{u(t)},\\
  \quad w_8 &= \frac{1}{u(t)}, \quad &w_9 &= w_1 w_2 w_3 x_1^{-1}, \quad &w_{10} &= w_1 w_2 w_3 x_2^{-1}. & &
\end{align*}
The resulting quadratic system is:
\begin{align*}
\dot{x}_1 &= -A w_1 w_4,
&\dot{w}_2 &= -0.2 A w_2 w_9,
&\dot{w}_7 &= \ddot{u}(t) w_8 - w_7^2,\\
\dot{x}_2 &= -2 A w_1 w_4,
&\dot{w}_3 &= \frac{E w_3 w_5}{R},
&\dot{w}_8 &= -w_7 w_8,\\
\dot{x}_3 &= A w_1 w_4,
&\dot{w}_4 &= -0.2 A w_4 w_9 + \frac{E w_4 w_5}{R},
&\dot{w}_9 &= Aw_9(0.8 w_9 - 2.6 w_{10} + \frac{Ew_5}{AR}),\\
\dot{x}_4 &= 2 A w_1 w_4,
&\dot{w}_5 &= \ddot{u}(t) w_6 - 2 w_5 w_7,
&\dot{w}_{10} &= -Aw_{10}(0.2w_9 + 0.6 w_{10} - \frac{E w_5}{AR}),\\
\dot{w}_1 &= -2.6 A w_1 w_{10},
&\dot{w}_6 &= -2 w_6 w_7.
\end{align*}
In total, we need to add ten new variables to the original system~\eqref{eq:main_gems} to obtain a quadratic ODE system.  The authors in \cite{SKHW2020_learning_ROMs_combustor} did not find a quadratized form for this system by hand due to the complexity of finding such quadratizations. This illustrates a significant advantage of the automated polynomialization and quadratization implemented in \Qbee.
Furthermore, one can observe that the equations for the new variables do not involve $x_1$ and $x_2$, and $x_1$ and $x_2$ can be expressed as $w_1w_2w_3w_9^{-1}$ and $w_1w_2w_3w_{10}^{-1}$, respectively.
Thus, one can omit $x_1$ and $x_2$ and work with only 12-dimensional system.

This example exhibits the same subtlety as the one~\Cref{rem:laurent}: if we first transformed the system to a polynomial (not Laurent polynomial) one and then quadratized, then we would add eleven variables.
We also note that it is not possible to find a quadratization without the appearance of $\ddot{u}(t)$ in the resulting system,
this can be deduced by applying~\Cref{prop:inp_general} to the polynomialized system.

%%%%%%%%%%%%%%%%%%%%%%%%%%%%%%%%%%%%%%%%%%%%%%%%%%%%%%%%%%%%%%%%
\section{Model learning for solar wind prediction with quadratized variables} \label{sec:heliospheric}
%%%%%%%%%%%%%%%%%%%%%%%%%%%%%%%%%%%%%%%%%%%%%%%%%%%%%%%%%%%%%%%
In this section, we illustrate the advantages of quadratization in the use-case of data-driven reduced-order modeling for solar wind prediction. The benefit of a quadratization of the dynamical system is that it provides a direct parametrization of the model and avoids hyperreduction, as discussed in the introduction. Thus, the free parameters (the coefficients matrices for the polynomial terms) can be learned from trajectory data; see  \cite{SKHW2020_learning_ROMs_combustor,QKMW2019_transform_and_learn,QKPW2020_lift_and_learn,JQK2021_performanceCompCombustion,SKP_PIregulartizationOPINF} for a variety of applications of this approach. 
\Cref{sec:HUX_model} briefly introduces the model, 
\cref{sec:HUXimplementation} gives implementation details, \cref{sec:HUXquadratizaton} presents the results of the \Qbee quadratization,  and \cref{sec:HUX-numerics-quadratic} presents the numerical result on the simulated quadratized model.

%%%%%%%%%%%%%%%%%%%%%%%%%%%%%%%%%%%%%%%%%%%%%%%%%%%%%%%%%%%%%%%%%%%
\subsection{Model} \label{sec:HUX_model}
The Heliospheric Upwind Extrapolation (HUX) model~\cite{riley_HUXP1_2011} is a two-dimensional nonlinear scalar homogeneous time-stationary PDE of the solar wind radial velocity in the heliospheric domain, where the independent variables are the radial distance from the Sun $r$ and Carrington longitude $\phi$ and the dependent variable is the solar wind velocity in the radial direction $v(r, \phi)$. The angular frequency of the Sun's rotation is evaluated at a constant Carrington latitude $\hat{\theta}$; if we consider the Sun's equatorial plane ($\hat{\theta} = 0$), then $\Omega_{\text{rot}}(0) = \frac{2 \pi}{25.38} \text{1/days}$ at the solar equator. The initial condition $v(r_{0}, \phi) = v_{0} (\phi)$ is defined on the periodic domain $0 \leq \phi \leq 2 \pi$ and $r \geq 0.14 \text{AU}$.

After semi-discretization via the upwind scheme, and an $r$-dependent linear shift of the longitude to account for advection (see~\cite{IK22_solarWind_sOPINF} for details), the finite-dimensional nonlinear model is
\begin{equation}\label{shifted-dynamics-ode-hux}
    \frac{\text{d}\bv(r)}{\text{d}r} = \bD\ln \left[\bv(r)\right] - \frac{\xi}{\Omega_\text{rot}(\hat{\theta})}\bD \bv(r),
\end{equation}
where $\bv(r)= [v(r, \phi_{1}), v(r, \phi_{2}), \ldots, v(r, \phi_{\nx})]^{\top} \in \real^{\nx}$  is the state vector discretized over $\nn$ points in longitude at heliocentric distance $r$, $\xi$ is the shift velocity which is fixed for a given Carrington Rotation, and the sparse matrix $\bD \in \real^{\nx \times \nx}$ is 
\begin{equation} \label{D-matrix}
    \bD = \frac{\Omega_{\text{rot}}(\hat{\theta})}{\Delta \phi}
    \left[
    \begin{array}{cccccc}
        -1 & 1 & 0 & & &\\
        0 & -1 & 1 & & & \\
        & \ddots & \ddots & \ddots & \ddots &\\
        & & & -1& 1 & 0 \\
        & & & & -1 & 1 \\
        1 & & & & 0 & -1 \\
    \end{array}
    \right] \in\real^{\nx \times \nx}.
\end{equation}

%%%%%%%%%%%%%%%%%%%%%%%%%%%%%%%%%%%%%%%%%%%%%
\subsection{Quadratization of the model} \label{sec:HUXquadratizaton}
To obtain a system of quadratic ODEs, we start with the original model~\eqref{shifted-dynamics-ode-hux} and eliminate the logarithmic function by adding a new variable $\bw_0 := \ln (\bv)$.
Let $C_1 := \frac{\xi}{\Omega_{\text{rot}}(\hat{\theta})}$, so that in the variables $\bv$ and $\bw_0$, the system~\eqref{shifted-dynamics-ode-hux} becomes
\begin{equation}\label{eq:hux_poly}
\begin{split}
  \frac{\text{d} \bv}{\text{d}r} &= \bD \bw_0 - C_1 \bD \bv,\\
  \frac{\text{d} \bw_0}{\text{d}r} &= \frac{1}{\bv}\bD \bw_0 - \frac{C_1}{\bv}\bD \bv.
\end{split}
\end{equation}
In order to use \Qbee, we rewrite~\eqref{eq:hux_poly} in the form~\eqref{eq:alg_da_input}, that is, as a formal two-dimensional ($\nd = 2$) ODE system with additional variables $v_{\bD}$ and $w_{0, \bD}$ instead of $\bD\bv$ and $\bD \bw_0$:
\begin{align*}
    \dot{v} &= w_{0, \bD} - C_1 v_{\bD},\\
    \dot{w}_0 &= \frac{w_{0, \bD}}{v} - C_1 \frac{v_{\bD}}{v}.
\end{align*}
Then we use \Qbee to search for a discretization-agnostic quadratization of~\eqref{eq:hux_poly} as described in~\cref{sec:agnostic}.
As explained in~\Cref{rem:laurent}, this can be done even though~\eqref{eq:hux_poly} is a Laurent polynomial system.
The dimension-agnostic quadratization returned by~\Cref{alg:qbee_da} is
\[
  \bw_1 = \left [\frac{1}{v}, \frac{w_0}{v}\right ]^\top \quad \text{ and }  \quad \bw_2 = \left [ \frac{\tilde{v}}{v}, \frac{\tilde{w}_0}{v} \right ]^\top.
\]
We can now specialize this dimension-agnostic quadratization to the matrix $\bD$ from~\eqref{D-matrix}.
Since the off-diagonal entries of $\bD$ are on the shifted diagonal, $\tilde{v}$ and $\tilde{w}_0$ will be replaced with $\bS\bv$ and $\bS\bw_0$, respectively, where $\bS$ denotes the cyclic shift operator sending any vector $[a_1, \ldots, a_\nx]^\top$ to $[a_\nx, a_1, \ldots, a_{\nx - 1}]^\top$.
Thus, we will obtain the following $4\nx$ variables:
\[
  \bw_{1, 1} = \frac{1}{\bv}, \quad \bw_{1, 2} = \frac{\bw_0}{\bv}, \quad \bw_{2, 1} = \frac{\bS\bv}{\bv}, \quad \bw_{2, 2} = \frac{\bS\bw_0}{\bv}.
\]
Let $C_2 := \frac{\Omega_{\text{rot}}(\hat{\theta})}{\Delta \phi}$, then the  quadratic system in these new variables is
\begin{equation}\label{eq:sw_quadr_orig}
\begin{split}
  \frac{\text{d}\bw_{1, 1}}{\text{d}r} &= -C_2 \bw_{1, 1} \odot \bW,\\
  \frac{\text{d}\bw_{1, 2}}{\text{d}r} &= C_2\bW \odot (\bw_{1, 1} - \bw_{1, 2}),\\
  \frac{\text{d}\bw_{2, 1}}{\text{d}r} &= C_2\bw_{2, 1} \odot (\bS\bW - \bW),\\
  \frac{\text{d}\bw_{2, 2}}{\text{d}r} &= C_2 \bw_{1, 1} \odot \bS\bW - C_2 \bw_{2, 2} \odot \bW,
\end{split}
\end{equation}
where we denote $\bW = \bw_{2, 2} - \bw_{1, 2} + C_1 \mathbf{1} - C_1 \bw_{2, 1}$ and $\mathbf{1} = [1, \ldots, 1]^\top \in \mathbb{R}^\nx$.
Furthermore, we see that the equations~\eqref{eq:sw_quadr_orig} do not involve $\bv$ and $\bw_0$.
Since the values of both $\bv$ and $\bw_0 = \ln(\bv)$ can be computed from $\bw_{1, 1} = \frac{1}{\bv}$, we henceforth only work with the equations~\eqref{eq:sw_quadr_orig} in variables $\bw_{1, 1}, \bw_{1, 2}, \bw_{2, 1}, \bw_{2, 2}$.
We observe that system~\eqref{eq:sw_quadr_orig} can be further reduced by hand via introducing $\bw_{3} = \bw_{2, 2} - \bw_{1, 2}$. 
Then $\bW$ can be written as $\bw_{3} + C_1\mathbf{1} - C_1\bw_{2, 1}$ and
\[
    \frac{\text{d}\bw_{3}}{\text{d}r} = C_2 \bw_{1, 1} \odot \bS \bW - C_2 \bw_{2, 2} \odot \bW - C_2\bW \odot (\bw_{1, 1} - \bw_{1, 2}) = C_2\bw_{1, 1} \odot (\bS \bW - \bW) - C_2\bw_{3} \odot \bW.
\]
Therefore, we can replace the $4\nx$-dimensional system~\eqref{eq:sw_quadr_orig} with a $3\nx$-dimensional
\begin{equation} \label{eq:HUX_Final_QB}
\begin{split}
  \frac{\text{d}\bw_{1, 1}}{\text{d}r} &= -C_2 \bw_{1, 1} \odot \bW,\\
  \frac{\text{d}\bw_{2, 1}}{\text{d}r} &= C_2\bw_{2, 1} \odot (\bS\bW - \bW),\\
  \frac{\text{d}\bw_{3}}{\text{d}r} &= C_2\bw_{1, 1}( \bS \bW - \bW) - C_2\bw_{3} \odot \bW,
\end{split}
\end{equation}
where $\bW = \bw_{3} + C_1\mathbf{1} - C_1\bw_{2, 1}$.
Note, that this system still contains $\bw_{1, 1} = \frac{1}{\bv}$, so the original trajectory for $v$ can be reconstructed once solutions to \eqref{eq:HUX_Final_QB} is solved. We can rewrite this system compactly in quadratic form as
\begin{equation}\label{cubic-shift-lift-fom}
    \dot{\bx}= \bA \bx + \bH \left(\bx \otimes^{\prime} \bx\right),
\end{equation}
where $\otimes^{\prime}$ denotes the compact Kronecker product, $\bx = \bx(r) = \left[\begin{matrix} \bw_{1, 1}^\top(r) & \bw_{2, 1}^\top(r) & \bw_3^\top(r) \end{matrix}\right]^{\top}\in \mathbb{R}^{3\nx}$, $\bA \in \real^{3\nx \times 3\nx}$ is the linear operator, and $\bH \in \real^{3\nx \times \frac{1}{2}(3\nx)(3\nx + 1)}$ is the quadratic operator. 
In sum, we started with an $\nx$-dimensional nonpolynomial (with logarithmic terms) system \eqref{shifted-dynamics-ode-hux} and through \Qbee are able to replace it with a $3\nx$-dimensional quadratic system \eqref{eq:HUX_Final_QB}. 

%%%%%%%%%%%%%%%%%%%%%%%%%%%%%%%%%%%%%%%%%%%%%%
\subsection{Data-driven reduced-order model implementation details} \label{sec:HUXimplementation}
We derived the lifted system in \eqref{eq:HUX_Final_QB} with the goal to learn a reduced-order model (ROM) from simulated trajectory data--we do not simulate the lifted system. We learn a ROM via the Operator Inference~\cite{peherstorfer2016data} framework and use the Python package version 1.2.1~\cite{OpInfGitHub} to implement the model learning and prediction. The numerical results in this section are shown for Carrington Rotation 2210, which occurred from 26 October to 23 November 2018, during solar minimum. For more details of the ROM method for this solar wind application, see~\cite{IK22_solarWind_sOPINF}. 

We next provide some numerical details. The state vector of the full-order model which we simulate to generate data is $\bv\in \mathbb{R}^{129}$. We take $n_{r} = 400$ uniform steps in the $r$ variables for discretization. We simulate the full-order model via the forward-Euler scheme and the ROM via an implicit multi-step variable method based on a backward differentiation formula using the \texttt{scipy.integrate.solve\_ivp()} Python function.
The implemented method is based on a shift of the independent variable to account for advection. We compute the shift function $c(r)$ and its derivative $\xi$ via the cross-correlation extrapolation method, see~\cite{IK22_solarWind_sOPINF} for more details. Note that the training data is obtained by shifting $\bv(r)$ instead of directly solving the shifted-lifted equations. 
Note that the non-lifted ROM is trained on velocity data in units of [km/s] and the lifted ROM is trained on data in units of [AU/days]. We choose to change the units in order to avoid very large or small scales of the lifted variables. The change in units also indicates the change in the scales of the regularization coefficients.

%%%%%%%%%%%%%%%%%%%%%%%%%%%%%%%%%%%%%%%%%%%
\subsection{Numerical results} \label{sec:HUX-numerics-quadratic}
Here, we seek to give data to answer the following question: \textit{``Does lifting the nonlinear system Eq.~\eqref{shifted-dynamics-ode-hux} to quadratic form, i.e. Eq.~\eqref{eq:HUX_Final_QB}, improve the ROM learning?''}
We set to answer this question by comparing the numerical results of a quadratic ROM learned from trajectories of the original states, $\bv$ from Eq.~\eqref{shifted-dynamics-ode-hux}, to a quadratic ROM learned from the lifted state variables, $\bw_{1, 1}, \bw_{2, 1}, \bw_3$ in Eq.~\eqref{eq:HUX_Final_QB}.

We use 70\% of our total snapshots (280 snapshots) for training to compute the basis of the low-dimensional model and 30\% for testing (120 snapshots). Therefore, the training domain is $[0.14 \text{AU}, 0.813 \text{AU}]$ and the testing domain is $[0.813 \text{AU}, 1.1 \text{AU}]$.
We choose the reduced dimension to be $\ell = 6$ for all learned ROMs. 
The ROM learning framework requires regularization for each of the learned matrices/tensors, i.e., $\bA$ and $\bH$ require regularization coefficients $\lambda_{1},\lambda_{2}$. We choose those from the logarithmically spaced set, i.e. $\{10^0, 10^1, \ldots, 10^9\}$, such that the best coefficients minimize the mean relative error over the training regime. The optimal coefficients for the shifted ROM in the training regime are $\{\lambda_{1} = 1, \lambda_{2} = 10^4\}$ and the optimal coefficients for the shifted lifted ROM in the training regime are $\{\lambda_{1} = 1, \lambda_{2} = 10\}$.

\Cref{fig:quadratic-ROM-solutions} shows the solutions obtained from both quadratic ROMs, the one learned from non-lifted and the one from lifted data. In both cases, the numerical results show that overall both methods are accurate in the training data (less than 1\% relative error). However, the ROM learned from the original data becomes unstable and produces spurious solutions, (note the different error bars). 
We compare the relative error in the training and testing regime in \cref{fig:quadratic-ROM-errors}. While both methods are almost equally accurate in the training domain, the quadratic ROM learned from the original variables produces unstable results in the testing data---is apparent that lifting the dynamics improves the ROM's accuracy in the testing regime. In conclusion, for predictive computation, the variable lifting makes a significant difference.
% %%%%%%%%%%%%%%%%%%%%%%%%%%%%%%%%
\begin{figure}
\centering
    \includegraphics[width=1\linewidth]{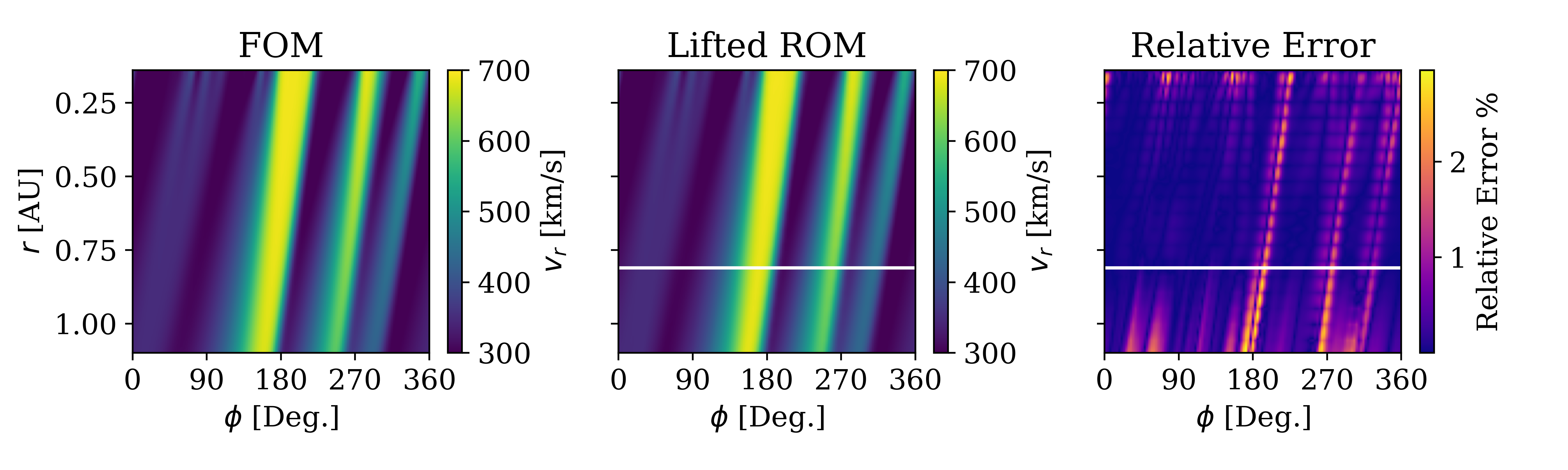}
    \includegraphics[width=1\linewidth]{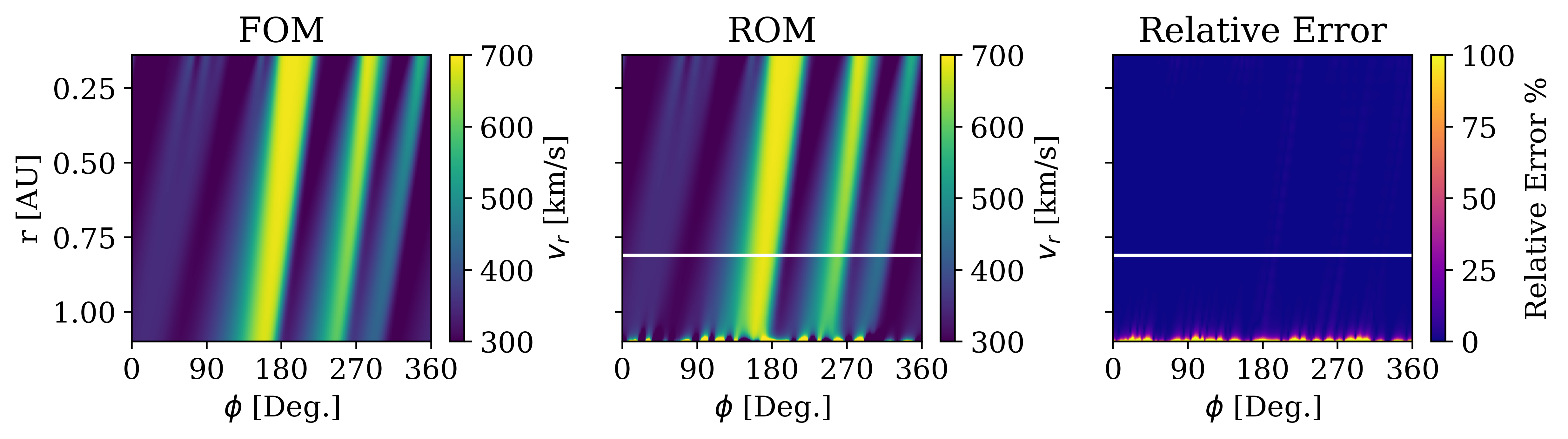}
    \caption{(Top) Solution of the quadratic ROM in the lifted variables, of the form $\dot{\bx}(r)= \bA \bx(r) + \bH\left[\bx(r) \otimes^{\prime} \bx(r)\right]$ from~\eqref{cubic-shift-lift-fom}; 
    (Bottom) Solution of the quadratic ROM in the original variables, of the form $\dot{\bv}(r) = \bA \bv(r) + \bH\left[\bv(r) \otimes^{\prime} \bv(r)\right]$. 
    }
    \label{fig:quadratic-ROM-solutions}
\end{figure}
% %%%%%%%%%%%%%%%%%%%%%%%%%%%%%%%%
%%%%%%%%%%%%%%%%%%%%%%%%%%%%%%%%
\begin{figure}
\centering
    \includegraphics[width=1\linewidth]{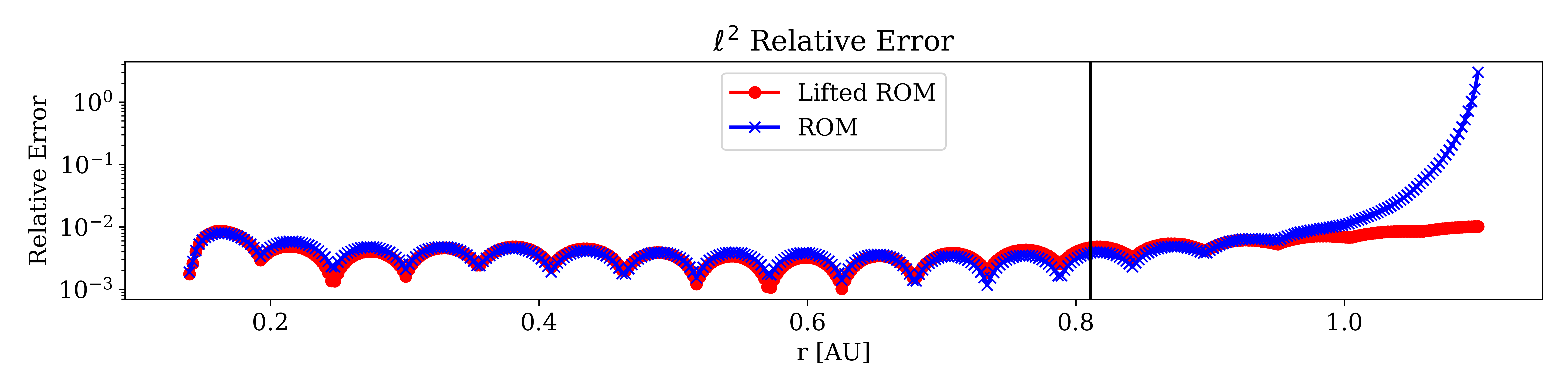}

    \caption{Relative error of the learned quadratic ROMs in both lifted variables and original variables.}
    \label{fig:quadratic-ROM-errors}
\end{figure}
%%%%%%%%%%%%%%%%%%%%%%%%%%%%%%%%

%%%%%%%%%%%%%%%%%%%%%%%%%%%%%%%%%%%%%%%%%%%%%%%%%%%%%%%%%%%%%%%%%%%
\section{Conclusions and outlook} \label{sec:conclusions}
%%%%%%%%%%%%%%%%%%%%%%%%%%%%%%%%%%%%%%%%%%%%%%%%%%%%%%%%%%%%%%%%%%%
We presented novel theory for quadratization of non-autonomous ODEs and ODEs with arbitrary dimension. In particular, we provided existence results, depending on the regularity of the input function, for cases when a polynomial control-affine system can be rewritten as a quadratic-bilinear system. In another thrust, we developed an algorithm that generalizes the process of quadratization for systems with arbitrary dimension that retain the nonlinear structure when the dimension grows. A specific example are semi-discretized PDEs, where the nonlinear terms remain identical when the discretization size increases. 
As an important aspect of this research, we extended the \Qbee software with capabilities for both non-autonomous systems of ODEs and ODEs with arbitrary dimension with rigorous optimality guarantees in the former case. 
We presented a suite of ODEs that were previously reported in the literature, and where our new algorithms outperform previously reported lifting transformations. We also highlight an important area of lifting transformations: reduced-order model learning can benefit significantly in working in the correct and optimal lifting variables, where learning quadratic models is relatively straightforward. 

We anticipate that this research will influence various disciplines and different use cases that rely on quadratization, as mentioned in the introduction. Particularly, the research will spur new directions in data-driven modeling, as novel quadratizations of nonlinear dynamical systems are discovered, which can then be used for model learning. Moreover, this research will make feasible system-theoretic model reduction for a larger class of nonlinear systems, as the optimal quadratizations are the starting point for quadratic-bilinear model reduction. 

Despite this progress, there remain several open challenges. 
First, there is a need for more theory (existence and optimality) of quadratization of non-polynomial systems of ODEs, alongside with practical and fast algorithms to find those. 
Second, an interesting question that we currently investigate is whether the symbolic computing tools used for ODEs can be carried over to the PDE setting. 
Third, since an optimal quadratization is typically not unique, a natural question is how to find optimal quadratizations with attractive numerical properties (stability, preserving equilibria, etc).

%%%%%%%%%%%%%%%%%%%%%%%%%%%%%%%%%%%%%%%%%%%%%%%%%%%%%%%%%%%%%%%%%%%
\section{Acknowledgements}
%%%%%%%%%%%%%%%%%%%%%%%%%%%%%%%%%%%%%%%%%%%%%%%%%%%%%%%%%%%%%%%%%%%
We thank the anonymous referee for the valuable suggestion to add more examples to the manuscript.

%%%%%%%%%%%%%%%%%%%%%%%%%%%%%%%
\bibliographystyle{siamplain}
\bibliography{bibliography,references-Boris}

\begin{thebibliography}{10}

\bibitem{OpInfGitHub}
{\em {Operator Inference} package (version 0.4.2)}.
\newblock \url{https://pypi.org/project/opinf/}.
\newblock Accessed: 22 February 2023.

\bibitem{Biocham}
{\em Biocham4: {B}iochemical {A}bstract {M}achine}, 2021,
  \url{http://contraintes.inria.fr/biocham/}.

\bibitem{Alauddin}
{\sc F.~Alauddin}, {\em Quadratization of {ODEs}: Monomial vs. non-monomial},
  Undergraduate Research Online (SIURO), 14 (2021),
  \url{https://doi.org/10.1137/20s1360578}.

\bibitem{Appelroth1902}
{\sc G.~G. Appelroth}, {\em {Основная форма системы
  алгебраическихъ дифференцiальныхъ
  уравненiй}}, Sbornik: Mathematics, 23 (1902), pp.~12--23,
  \url{http://mi.mathnet.ru/sm6683}.

\bibitem{astrid2008missing}
{\sc P.~Astrid, S.~Weiland, K.~Willcox, and T.~Backx}, {\em Missing point
  estimation in models described by proper orthogonal decomposition}, IEEE
  Transactions on Automatic Control, 53 (2008), pp.~2237--2251.

\bibitem{balajewicz2016minimal}
{\sc M.~Balajewicz, I.~Tezaur, and E.~Dowell}, {\em Minimal subspace rotation
  on the {S}tiefel manifold for stabilization and enhancement of
  projection-based reduced order models for the compressible {N}avier--{S}tokes
  equations}, Journal of Computational Physics, 321 (2016), pp.~224--241.

\bibitem{barrault2004empirical}
{\sc M.~Barrault, Y.~Maday, N.~C. Nguyen, and A.~T. Patera}, {\em An empirical
  interpolation method: {A}pplication to efficient reduced-basis discretization
  of partial differential equations}, Comptes Rendus Mathematique, 339 (2004),
  pp.~667--672.

\bibitem{bennerBreiten2015twoSided}
{\sc P.~Benner and T.~Breiten}, {\em Two-sided projection methods for nonlinear
  model order reduction}, SIAM Journal on Scientific Computing, 37 (2015),
  pp.~B239--B260.

\bibitem{bennergoyal2016QBIRKA}
{\sc P.~Benner, P.~Goyal, and S.~Gugercin}, {\em H2-quasi-optimal model order
  reduction for quadratic-bilinear control systems}, SIAM Journal on Matrix
  Analysis and Applications, 39 (2018), pp.~983--1032.

\bibitem{bergmann2009enablers}
{\sc M.~Bergmann, C.~Bruneau, and A.~Iollo}, {\em Enablers for robust {POD}
  models}, Journal of Computational Physics, 228 (2009), pp.~516--538.

\bibitem{bournez2007polynomial}
{\sc O.~Bournez, M.~L. Campagnolo, D.~S. Gra{\c{c}}a, and E.~Hainry}, {\em
  Polynomial differential equations compute all real computable functions on
  computable compact intervals}, Journal of Complexity, 23 (2007),
  pp.~317--335.

\bibitem{brenig2018reducing}
{\sc L.~Brenig}, {\em Reducing nonlinear dynamical systems to canonical forms},
  Philosophical Transactions of the Royal Society A: Mathematical, Physical and
  Engineering Sciences, 376 (2018), p.~20170384.

\bibitem{Bychkov2021}
{\sc A.~Bychkov and G.~Pogudin}, {\em Optimal monomial quadratization for {ODE}
  systems}, in Combinatorial Algorithms, P.~Flocchini and L.~Moura, eds., Cham,
  2021, Springer International Publishing, pp.~122--136.

\bibitem{QBee}
{\sc A.~Bychkov and G.~Pogudin}, {\em \textsf{{QB}ee}}, 2021,
  \url{https://github.com/AndreyBychkov/QBee}.

\bibitem{carlberg2013gnat}
{\sc K.~Carlberg, C.~Farhat, J.~Cortial, and D.~Amsallem}, {\em The {GNAT}
  method for nonlinear model reduction: Effective implementation and
  application to computational fluid dynamics and turbulent flows}, Journal of
  Computational Physics, 242 (2013), pp.~623--647.

\bibitem{CPSW05}
{\sc D.~C. Carothers, G.~E. Parker, J.~S. Sochacki, and P.~G. Warne}, {\em Some
  properties of solutions to polynomial systems of differential equations},
  Electron. J. Diff. Eqns., 2005 (2005), pp.~1--17.

\bibitem{carravetta2015global}
{\sc F.~Carravetta}, {\em Global exact quadratization of continuous-time
  nonlinear control systems}, SIAM Journal on Control and Optimization, 53
  (2015), pp.~235--261.

\bibitem{carravetta2020solution}
{\sc F.~Carravetta}, {\em On the solution calculation of nonlinear ordinary
  differential equations via exact quadratization}, Journal of Differential
  Equations, 269 (2020), pp.~11328--11365.

\bibitem{deim2010}
{\sc S.~Chaturantabut and D.~Sorensen}, {\em Nonlinear model reduction via
  discrete empirical interpolation}, {SIAM} Journal on Scientific Computing, 32
  (2010), pp.~2737--2764.

\bibitem{cole1951quasi}
{\sc J.~D. Cole}, {\em On a quasi-linear parabolic equation occurring in
  aerodynamics}, Quarterly of Applied Mathematics, 9 (1951), pp.~225--236.

\bibitem{locally_finite}
{\sc A.~V.~D. Essen}, {\em Locally finite and locally nilpotent derivations
  with applications to polynomial flows and polynomial morphism}, Proceedings
  of the American Mathematical Society, 116 (1992), pp.~861--871.

\bibitem{fages2017strong}
{\sc F.~Fages, G.~Le~Guludec, O.~Bournez, and A.~Pouly}, {\em Strong turing
  completeness of continuous chemical reaction networks and compilation of
  mixed analog-digital programs}, in Computational Methods in Systems Biology:
  15th International Conference, CMSB 2017, Darmstadt, Germany, September
  27--29, 2017, Proceedings 15, Springer, 2017, pp.~108--127.

\bibitem{gosea2018data}
{\sc I.~V. Gosea and A.~C. Antoulas}, {\em Data-driven model order reduction of
  quadratic-bilinear systems}, Numerical Linear Algebra with Applications, 25
  (2018), p.~e2200.

\bibitem{gu2011qlmor}
{\sc C.~Gu}, {\em {QLMOR}: A projection-based nonlinear model order reduction
  approach using quadratic-linear representation of nonlinear systems}, IEEE
  Transactions on Computer-Aided Design of Integrated Circuits and Systems, 30
  (2011), pp.~1307--1320.

\bibitem{guillot2019generic}
{\sc L.~Guillot, B.~Cochelin, and C.~Vergez}, {\em A generic and efficient
  {T}aylor series--based continuation method using a quadratic recast of smooth
  nonlinear systems}, International Journal for numerical methods in
  Engineering, 119 (2019), pp.~261--280.

\bibitem{haberman2019applied}
{\sc R.~Haberman}, {\em Applied partial differential equations with Fourier
  series and boundary value problems}, Pearson, 2019.

\bibitem{halpern_2021_antisymmetric}
{\sc F.~D. Halpern, I.~Sfiligoi, M.~Kostuk, R.~Stefan, and R.~E. Waltz}, {\em
  Simulations of plasmas and fluids using anti-symmetric models}, Journal of
  Computational Physics, 445 (2021), p.~110631.

\bibitem{hassler2020finite}
{\sc S.~Ha{\ss}ler, A.~M. Ranno, and M.~Behr}, {\em Finite-element formulation
  for advection--reaction equations with change of variable and discontinuity
  capturing}, Computer Methods in Applied Mechanics and Engineering, 369
  (2020), p.~113171.

\bibitem{heinemann1981tubular}
{\sc R.~F. Heinemann and A.~B. Poore}, {\em Multiplicity, stability, and
  oscillatory dynamics of the tubular reactor}, Chemical Engineering Science,
  36 (1981), pp.~1411--1419.

\bibitem{lifeware1}
{\sc M.~Hemery, F.~Fages, and S.~Soliman}, {\em On the complexity of
  quadratization for polynomial differential equations}, in Computational
  Methods in Systems Biology, A.~Abate, T.~Petrov, and V.~Wolf, eds., Cham,
  2020, Springer International Publishing, pp.~120--140.

\bibitem{lifeware2}
{\sc M.~Hemery, F.~Fages, and S.~Soliman}, {\em Compiling elementary
  mathematical functions into finite chemical reaction networks via a
  polynomialization algorithm for {ODE}s}, in Computational Methods in Systems
  Biology, E.~Cinquemani and L.~Paulev{\'e}, eds., Cham, 2021, Springer
  International Publishing, pp.~74--90.

\bibitem{hopf1950partial}
{\sc E.~Hopf}, {\em The partial differential equation $u_t+ uu_x= \mu_{xx}$},
  Communications on Pure and Applied Mathematics, 3 (1950), pp.~201--230.

\bibitem{huangAIAA18RomRocketCombustion}
{\sc C.~Huang, J.~Xu, K.~Duraisamy, and C.~Merkle}, {\em Exploration of
  reduced-order models for rocket combustion applications}, in 2018 AIAA
  Aerospace Sciences Meeting, 2018, p.~1183.

\bibitem{hughes1986new}
{\sc T.~J. Hughes, L.~Franca, and M.~Mallet}, {\em A new finite element
  formulation for computational fluid dynamics: I. symmetric forms of the
  compressible {E}uler and {N}avier-{S}tokes equations and the second law of
  thermodynamics}, Computer Methods in Applied Mechanics and Engineering, 54
  (1986), pp.~223--234.

\bibitem{IK22_solarWind_sOPINF}
{\sc O.~Issan and B.~Kramer}, {\em Predicting solar wind streams from the
  inner-{H}eliosphere to {E}arth via shifted operator inference}, Journal of
  Computational Physics, 3473 (2022), p.~111689.

\bibitem{JQK2021_performanceCompCombustion}
{\sc P.~Jain, S.~A. McQuarrie, and B.~Kramer}, {\em Performance comparison of
  data-driven reduced models for a single-injector combustion process}, in AIAA
  Propulsion and Energy Forum and Exposition, August 2021.

\bibitem{jakubczyk1980linearization}
{\sc B.~Jakubczyk and W.~Respondek}, {\em On linearization of control systems},
  Bull. Acad. Polonaise Sci. Ser. Sci. Math, 28 (1980), pp.~517--522.

\bibitem{kalashnikova2011stable}
{\sc I.~Kalashnikova and M.~Barone}, {\em Stable and efficient {G}alerkin
  reduced order models for non-linear fluid flow}, in 6th AIAA Theoretical
  Fluid Mechanics Conference, 2011, p.~3110.

\bibitem{kerner1981universal}
{\sc E.~H. Kerner}, {\em Universal formats for nonlinear ordinary differential
  systems}, Journal of Mathematical Physics, 22 (1981), pp.~1366--1371.

\bibitem{Khalil_NonlinearSystems}
{\sc H.~K. Khalil}, {\em {Nonlinear Systems; 3rd ed.}}, Prentice-Hall, Upper
  Saddle River, NJ, 2002.

\bibitem{K2020_stability_domains_QBROMs}
{\sc B.~Kramer}, {\em Stability domains for quadratic-bilinear reduced-order
  models}, SIAM Journal on Applied Dynamical Systems, 20 (2021), pp.~981--996,
  \url{https://doi.org/10.1137/20M1364849}.

\bibitem{KW18nonlinearMORliftingPOD}
{\sc B.~Kramer and K.~Willcox}, {\em Nonlinear model order reduction via
  lifting transformations and proper orthogonal decomposition}, AIAA Journal,
  57 (2019), pp.~2297--2307, \url{https://doi.org/10.2514/1.J057791}.

\bibitem{KW2019_balanced_truncation_lifted_QB}
{\sc B.~Kramer and K.~Willcox}, {\em Balanced truncation model reduction for
  lifted nonlinear systems}, in Realization and Model Reduction of Dynamical
  Systems: A Festschrift in Honor of the 70th Birthday of Thanos Antoulas,
  C.~Beattie, P.~Benner, M.~Embree, S.~Gugercin, and S.~Lefteriu, eds.,
  Springer International Publishing, Cham, 2022, pp.~157--174,
  \url{https://doi.org/10.1007/978-3-030-95157-3_9}.

\bibitem{Lagutinskii}
{\sc M.~N. Lagutinskii}, {\em {Къ вопросу о~прост\yat йшей
  форм\yat\ системы обыкновенныхъ
  дифференцiальныхъ уравненiй}}, Sbornik: Mathematics,
  27 (1911), pp.~420--423, \url{http://mi.mathnet.ru/sm6583}.

\bibitem{liu2015abstraction}
{\sc J.~Liu, N.~Zhan, H.~Zhao, and L.~Zou}, {\em Abstraction of elementary
  hybrid systems by variable transformation}, in International Symposium on
  Formal Methods, Springer, 2015, pp.~360--377.

\bibitem{mccormick1976computability}
{\sc G.~P. McCormick}, {\em Computability of global solutions to factorable
  nonconvex programs: Part {I}--convex underestimating problems}, Mathematical
  Programming, 10 (1976), pp.~147--175.

\bibitem{McQuarrie_regularizedOPINF}
{\sc S.~A. McQuarrie, C.~Huang, and K.~E. Willcox}, {\em Data-driven
  reduced-order models via regularised operator inference for a single-injector
  combustion process}, Journal of the Royal Society of New Zealand, 51 (2021),
  pp.~194--211.

\bibitem{BB}
{\sc D.~R. Morrison, S.~H. Jacobson, J.~J. Sauppe, and E.~C. Sewell}, {\em
  Branch-and-bound algorithms: A survey of recent advances in searching,
  branching, and pruning}, Discrete Optimization, 19 (2016), pp.~79--102,
  \url{https://doi.org/10.1016/j.disopt.2016.01.005}.

\bibitem{nam2011space}
{\sc J.~Nam, M.~Behr, and M.~Pasquali}, {\em Space-time least-squares finite
  element method for convection-reaction system with transformed variables},
  Computer methods in applied mechanics and engineering, 200 (2011),
  pp.~2562--2576.

\bibitem{netto2021analytical}
{\sc M.~Netto, Y.~Susuki, V.~Krishnan, and Y.~Zhang}, {\em On analytical
  construction of observable functions in extended dynamic mode decomposition
  for nonlinear estimation and prediction}, in 2021 American Control Conference
  (ACC), IEEE, 2021, pp.~4190--4195.

\bibitem{nguyen2008best}
{\sc N.~Nguyen, A.~Patera, and J.~Peraire}, {\em A best points interpolation
  method for efficient approximation of parametrized functions}, International
  Journal for Numerical Methods in Engineering, 73 (2008), pp.~521--543.

\bibitem{peherstorfer2016data}
{\sc B.~Peherstorfer and K.~Willcox}, {\em Data-driven operator inference for
  nonintrusive projection-based model reduction}, Computer Methods in Applied
  Mechanics and Engineering, 306 (2016), pp.~196--215.

\bibitem{QKMW2019_transform_and_learn}
{\sc E.~Qian, B.~Kramer, A.~Marques, and K.~Willcox}, {\em Transform \& learn:
  A data-driven approach to nonlinear model reduction}, in AIAA Aviation and
  Aeronautics Forum and Exposition, Dallas, TX, June 2019,
  \url{https://doi.org/10.2514/6.2019-3707}.

\bibitem{QKPW2020_lift_and_learn}
{\sc E.~Qian, B.~Kramer, B.~Peherstorfer, and K.~Willcox}, {\em Lift \& learn:
  Physics-informed machine learning for large-scale nonlinear dynamical
  systems.}, Physica {D}: {N}onlinear {P}henomena, 406 (2020), p.~132401,
  \url{https://doi.org/10.1016/j.physd.2020.132401}.

\bibitem{rezaian2020impact}
{\sc E.~Rezaian and M.~Wei}, {\em Impact of symmetrization on the robustness of
  {POD}-{G}alerkin {ROM}s for compressible flows}, in AIAA Scitech 2020 Forum,
  2020, p.~1318.

\bibitem{riley_HUXP1_2011}
{\sc P.~{Riley} and R.~{Lionello}}, {\em {Mapping Solar Wind Streams from the
  Sun to 1 AU: A Comparison of Techniques}}, Solar Physics, 270 (2011),
  pp.~575--592, \url{https://doi.org/10.1007/s11207-011-9766-x}.

\bibitem{rowley2009spectral}
{\sc C.~W. Rowley, I.~Mezi{\'c}, S.~Bagheri, P.~Schlatter, and D.~S.
  Henningson}, {\em Spectral analysis of nonlinear flows}, Journal of Fluid
  Mechanics, 641 (2009), pp.~115--127.

\bibitem{savageau1987recasting}
{\sc M.~A. Savageau and E.~O. Voit}, {\em Recasting nonlinear differential
  equations as {S}-systems: a canonical nonlinear form}, Mathematical
  Biosciences, 87 (1987), pp.~83--115.

\bibitem{SKP_PIregulartizationOPINF}
{\sc N.~Sawant, B.~Kramer, and B.~Peherstorfer}, {\em Physics-informed
  regularization and structure preservation for learning stable reduced models
  from data with operator inference}, Computer Methods in Applied Mechanics and
  Engineering, 404 (2023), p.~115836.

\bibitem{schmid2010dynamic}
{\sc P.~J. Schmid}, {\em Dynamic mode decomposition of numerical and
  experimental data}, Journal of Fluid Mechanics, 656 (2010), pp.~5--28.

\bibitem{SKHW2020_learning_ROMs_combustor}
{\sc R.~Swischuk, B.~Kramer, C.~Huang, and K.~Willcox}, {\em Learning
  physics-based reduced-order models for a single-injector combustion process},
  AIAA Journal, 58 (2020), pp.~2658--2672,
  \url{https://doi.org/10.2514/1.J058943}.

\bibitem{williams2015data}
{\sc M.~O. Williams, I.~G. Kevrekidis, and C.~W. Rowley}, {\em A data--driven
  approximation of the {K}oopman operator: Extending dynamic mode
  decomposition}, Journal of Nonlinear Science, 25 (2015), pp.~1307--1346.

\bibitem{zhou2012thesis}
{\sc Y.~B. Zhou}, {\em Model Reduction for Nonlinear Dynamical Systems with
  Parametric Uncertainties}, PhD thesis, Massachusetts Institute of Technology,
  2012.

\end{thebibliography}

\end{document}